\let\R\relax
\let\E\relax
\newtheorem{theorem}{Theorem}[section]
\newtheorem{proposition}[theorem]{Proposition}
\newtheorem{lemma}[theorem]{Lemma}
\newtheorem{claim}[theorem]{Claim}
\newtheorem{corollary}[theorem]{Corollary}
\newtheorem{condition}[theorem]{Condition}
\newtheorem{observation}[theorem]{Observation}
\newtheorem{theorem*}{Theorem}
\newtheorem{observation*}{Observation}
\theoremstyle{definition}
\newtheorem{definition}[theorem]{Definition}
\theoremstyle{remark}
\newtheorem{remark}[theorem]{Remark}
\newtheorem{example}[theorem]{Example}
\renewcommand{\vec}{\bm}
\newcommand{\mat}{\mathbf}
\newcommand{\cN}{\mathcal{N}}
\newcommand{\sw}{\textsc{SW}}
\newcommand{\opt}{\textsc{OPT}}
\newcommand{\cG}{\mathcal{G}}
\newcommand{\cC}{\mathcal{C}}
\newcommand{\cM}{\mathcal{M}}
\newcommand{\cMpo}{\mathcal{M}^{\textsf{AG}}}
\newcommand{\upo}{u^{\textsf{AG}}}
\newcommand{\cA}{\mathcal{A}}
\newcommand{\cX}{\mathcal{X}}
\newcommand{\N}{\mathbb{N}}
\newcommand{\R}{\mathbb{R}}
\newcommand{\cV}{\mathcal{V}}
\newcommand{\cF}{\mathcal{F}}
\newcommand{\vx}{\vec{x}}
\newcommand{\vv}{\vec{v}}
\newcommand{\vw}{\vec{w}}
\newcommand{\vmu}{\vec{\mu}}
\newcommand{\hvx}{\hat{\vec{x}}}
\newcommand{\vu}{\vec{u}}
\newcommand{\vm}{\vec{m}}
\newcommand{\vxstar}{\vec{x}^\star}
\newcommand{\defeq}{\coloneqq}
\newcommand{\range}[1]{[#1]}
\newcommand{\neset}{\mathbb{NE}}
\newcommand{\proj}{\mathcal{P}}
\newcommand{\prox}{\Pi}
\newcommand{\reg}{\mathsf{Reg}}
\newcommand{\diam}{D}
\newcommand{\brgap}{\textsc{BRGap}}
\newcommand{\negap}{\textsc{NEGap}}
\newcommand{\pr}{\mathbb{P}}
\newcommand{\E}{\mathbb{E}}
\newcommand{\rpoa}{\mathsf{rPoA}}
\newcommand{\poa}{\mathsf{PoA}}
\newcommand{\ogd}{\mathtt{OGD}}
\newcommand{\cgd}{\mathtt{CGD}}
\definecolor[named]{ACMPurple}{cmyk}{0.55,1,0,0.15}
\definecolor[named]{ACMDarkBlue}{cmyk}{1,0.58,0,0.21}
\definecolor[named]{lightblue}{RGB}{38,122,196}
\definecolor[named]{lightgreen}{RGB}{28,173,122}
\title{On the Interplay between Social Welfare and \\Tractability of Equilibria}
\author[1]{Ioannis Anagnostides}
\author[2]{Tuomas Sandholm}
\affil[1,2]{Carnegie Mellon University}
\affil[2]{Strategy Robot, Inc.}
\affil[2]{Optimized Markets, Inc.}
\affil[2]{Strategic Machine, Inc.}
\affil[ ]{\texttt {\{ianagnos,sandholm\}@cs.cmu.edu}}
\begin{document}

\maketitle

\pagenumbering{gobble}

\begin{abstract}
  Computational tractability and social welfare (aka. efficiency) of equilibria are two fundamental but in general orthogonal considerations in algorithmic game theory. Nevertheless, we show that when (approximate) \emph{full efficiency} can be guaranteed via a \emph{smoothness} argument \`a la Roughgarden, Nash equilibria are approachable under a family of no-regret learning algorithms, thereby enabling fast and decentralized computation. We leverage this connection to obtain new convergence results in \emph{large games}---wherein the number of players $n \gg 1$---under the well-documented property of full efficiency via smoothness in the limit. Surprisingly, our framework unifies equilibrium computation in disparate classes of problems including games with vanishing \emph{strategic sensitivity} and two-player zero-sum games, illuminating en route an immediate but overlooked equivalence between smoothness and a well-studied condition in the optimization literature known as the \emph{Minty property}. Finally, we establish that a family of no-regret dynamics attains a welfare bound that improves over the smoothness framework while at the same time guaranteeing convergence to the set of coarse correlated equilibria. We show this by employing the \emph{clairvoyant} mirror descent algortihm recently introduced by Piliouras \emph{et al.}\looseness-1
\end{abstract}

\clearpage

\pagenumbering{arabic}

\section{Introduction}

The \emph{Nash equilibrium (NE)}~\citep{Nash50:Equilibrium} formalizes the notion of a \emph{stable} outcome in a multiagent strategic interaction, and has arguably served as the most influential solution concept in the development of game theory. Indeed, algorithms designed to approximate Nash equilibria in two-player zero-sum games have recently resolved major challenges in AI~\citep{Bowling15:Heads,Brown18:Superhuman,Perolat22:Mastering}. Its prescriptive power, however, has been severely undermined in multi-player general-sum games by intrinsic computational barriers~\citep{Daskalakis06:Complexity,Chen09:Settling,Rubinstein18:Inapproximability,Etessami07:Complexity,Conitzer08:New,Gilboa89:Nash}, limitations which also manifest in the inability of natural learning algorithms to converge~\citep{Milionis22:Nash,Vlatakis-Gkaragkounis20:No,Kim22:Influencing,Mertikopoulos18:Cycles}. Another well-established but orthogonal critique is the \emph{equilibrium selection} problem~\citep{Harsanyi88:General}: a general-sum game may have multiple Nash equilibria with widely different welfare. As a result, a modern research agenda in computational game theory and multiagent systems has been to identify and characterize natural classes of games that circumvent those fundamental limitations.

In this paper, we uncover a new natural class of games for which the aforementioned caveats of Nash equilibria can be effectively addressed. In particular, our investigation originates from a natural question: when are \emph{efficient}---in terms of social welfare---Nash equilibria easy to compute? The answer to this question is, at first glance, unsatisfactory: even if Nash equilibria are fully efficient, computational hardness still persists given that constant-sum multi-player games are hard. Indeed, efficiency and computational tractability are, in general, two orthogonal considerations. We show, however, an interesting twist, an unexpected interplay between efficiency---when viewed from a specific lens---and the behavior of a family of \emph{no-regret} learning algorithms.

\subsection{Our results}

To elucidate the alluded connection that drives much of our results, we first have to recall that the canonical paradigm for establishing the efficiency of equilibria is Roughgarden's celebrated \emph{smoothness} framework~\citep{Roughgarden15:Intrinsic} (exposed thoroughly in~\Cref{sec:prel}). In this context, we observe that \emph{if full efficiency of equilibria can be guaranteed via a smoothness argument (with bounded parameters), then Nash equilibria are approachable under a family of no-regret learning algorithms.} In other words, full efficiency \emph{via smoothness} implies computational tractability of NE. This is surprising in that Roughgarden's smoothness framework was developed primarily in order to automatically extend \emph{price of anarchy (PoA)} bounds to more permissive equilibrium concepts, such as \emph{coarse correlated equilibria (CCE)}; tractability of NE appears at first glance entirely unrelated. In fact, while applying the smoothness framework to (multi-player) constant-sum games appears to make little sense, given that PoA considerations are trivial in such games (all outcomes attain the same social welfare), it turns out that a certain regime of smoothness in (multi-player) constant-sum games is equivalent to a well-known condition in the optimization literature called the \emph{Minty property}~\citep{Facchinei03:Finite} (\Cref{observation:Minty}).

The condition described above already captures---somewhat unexpectedly---well-studied settings, such as games that admit a minimax theorem~\citep{Cai16:Zero}, but we have found that the most fertile and novel ground to apply this theory revolves around \emph{large games}, that is, games with a large number of players $n \gg 1$.\footnote{We use the terminology large games here---in accordance with much of the economics literature---to refer to games with a large number of players; it should not be confused with another common usage referring to games with a ``large'' action space.} The reason we focus on large games is a well-documented economic phenomenon: equilibria in large games approach---under natural conditions---full efficiency as $n \to +\infty$, a property often established via smoothness arguments~\citep{Feldman16:The,Cole16:Large}---a crucial ingredient in our framework. (One rough intuition for this is that in large games each player's influence on the outcome becomes negligible, making optimal behavior easier to characterize~\citep{Feldman16:The}.) Now to state our first main result, we will denote by $(\lambda, \mu)$ a pair of bounded \emph{smoothness} parameters of a game $\cG$; the ratio $\rho(\lambda, \mu) \defeq \frac{\lambda}{1+\mu}$ circumscribes the (in)efficiency of equilibria of $\cG$, in that all equilibria will attain at least a $\rho$ fraction of the optimal social welfare.

\begin{theorem}[Informal]
    \label{theorem:anonymous}
Consider a sequence of $n$-player $(\lambda_n, \mu_n)$-smooth games $(\cG_n)_{n \geq 1}$ such that $\rho_n \defeq \frac{\lambda_n}{1 + \mu_n} \to 1$ with a sufficiently fast rate. Then, there are decentralized and computationally efficient no-regret dynamics approaching an $(o_n(1), o_n(1))$-weak Nash equilibrium after a sufficient number of iterations.
\end{theorem}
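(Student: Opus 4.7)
The plan is to exploit a connection between smoothness and the (approximate) Minty property of the game's pseudo-gradient operator, and then invoke iterate convergence of a suitable no-regret dynamic under that condition.

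\emph{First}, I would show that $(\lambda_n,\mu_n)$-smoothness with $\rho_n \to 1$ yields an approximate Minty inequality for $F_i(\vs) \defeq -\nabla_{s_i} u_i(\vs)$. Writing $\vsstar$ for the welfare-optimal profile and combining the smoothness inequality with $\sw(\vs) \leq \opt$ gives, for every $\vs$,
\[
\sum_i \left[u_i(s_i^\star,\vs_{-i}) - u_i(\vs)\right] \;\geq\; \lambda_n\opt - (1+\mu_n)\sw(\vs) \;\geq\; -\gamma_n,
\]
where $\gamma_n \defeq (1+\mu_n)(1-\rho_n)\opt$. By multilinearity of mixed-strategy utilities this rewrites as $\langle F(\vs),\vs-\vsstar\rangle \geq -\gamma_n$, so $\vsstar$ serves as a $\gamma_n$-approximate Minty solution, with $\gamma_n = o(1)$ under the hypothesis on $\rho_n$.

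\emph{Second}, I would run an optimistic or extragradient-style no-regret dynamic---for instance, the clairvoyant mirror descent algorithm invoked later in the paper---whose iterates are known to converge under an exact Minty condition, and adapt its analysis to the approximate regime. The target is a bound of the form $\min_{t\leq T}\sum_i \brgap_i(\vs^{(t)}) \leq O(\diam/\sqrt{T}) + O(\gamma_n)$, exploiting that $\sum_i \brgap_i(\vs) = \sup_{\vs'}\langle F(\vs), \vs - \vs'\rangle$ is exactly the Stampacchia gap---the supremum decouples across players because the maximization factors over the product action set.

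\emph{Finally}, choosing $T = \mathrm{poly}(n)$ and the iterate $\vs^{(t^\star)}$ attaining the minimum, I would translate $\sum_i \brgap_i(\vs^{(t^\star)}) \leq n\epsilon_n$ (with $\epsilon_n = o(1)$) into a weak NE via a Markov bound across players: at most a $\sqrt{\epsilon_n}$ fraction can have $\brgap_i > \sqrt{\epsilon_n}$, yielding an $(\sqrt{\epsilon_n}, \sqrt{\epsilon_n})$-weak NE, which is $(o_n(1), o_n(1))$ as long as $\rho_n \to 1$ fast enough to dominate the optimization error. The hard part is the \emph{second} step: matching the vanishing Minty slack $\gamma_n$ against the convergence rate of a decentralized no-regret algorithm in an \emph{iterate} (rather than time-averaged) sense. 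Iterate-convergence guarantees in the literature are typically stated under exact Minty, so obtaining a quantitative bound under approximate Minty that degrades gracefully in $\gamma_n$---while preserving decentralization and polynomial per-iterate complexity---is the delicate ingredient, and it is what dictates the required rate at which $\rho_n$ must approach $1$.
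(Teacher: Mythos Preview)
Your plan is essentially the paper's approach. The approximate-Minty reformulation in your first step is exactly the paper's regret lower bound (and is made explicit as Observation~3.3), and your Markov-type conversion to a weak NE is the same as the paper's final step. The paper resolves what you flag as the ``hard part'' by using $\ogd$ (not clairvoyant MD, which it reserves for a different theorem) and combining the RVU bound---which upper-bounds $\sum_i \reg_i^{(T)}$ by a constant minus the second-order path length---with a lemma showing $\brgap_i(\vx^{(t)}) \lesssim \|\vx_i^{(t)} - \hvx_i^{(t)}\|_2 + \|\vx_i^{(t)} - \hvx_i^{(t+1)}\|_2$; sandwiching against your approximate-Minty lower bound on $\sum_i \reg_i^{(T)}$ then controls $\sum_t \sum_i (\brgap_i(\vx^{(t)}))^2$ rather than the unsquared sum you targeted, which is why the final guarantee is stated with $\sqrt{\gamma_n/n}$-type scaling.
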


A few remarks are in order. First, in \Cref{theorem:large_games-informal,theorem:large_games-formal} we give a precise non-asymptotic characterization that quantifies the number of iterations as well as the approximation error. We also recall that in an $(o_n(1), o_n(1))$-\emph{weak} Nash equilibrium \emph{almost all} players are \emph{almost best responding} (\Cref{def:NE}); this is a well-studied relaxation for which hardness results are known to carry over in general~\citep{Babichenko22:Communication,Arieli17:Simple,Babichenko16:Can,Rubinstein18:Inapproximability}. A limitation of a weak Nash equilibrium is that it can prescribe a strategy profile in which a large numbers of players---albeit of vanishing fraction---can significantly benefit from deviating; this limitation is inherent in a certain regime of~\Cref{theorem:anonymous}. Depending on the rate with which $\rho_n$ approaches full efficiency, \Cref{theorem:anonymous} can also imply convergence to the usual notion of Nash equilibrium---wherein \emph{all} players are almost best responding (\Cref{cor:largegames}). More generally, for a broad class of games that includes \emph{graphical} games with bounded degree, \emph{polymatrix} games, and games exhibiting small \emph{strategic sensitivity}, the conclusion of \Cref{theorem:anonymous} applies without imposing any restrictions on the rate of convergence of $\rho_n$; those are the most commonly studied classes of games in the literature when $n \gg 1$.

The no-regret dynamics under which \Cref{theorem:anonymous} applies are instances of \emph{optimistic mirror descent}~\citep{Rakhlin13:Optimization,Chiang12:Online}, an online algorithm that has received tremendous attention in recent years. In stark contrast, it is worth pointing out that traditional learning dynamics, such as (online) mirror descent, can fail spectacularly to approach NE even if $\rho = 1$~\citep{Mertikopoulos18:Cycles}---a condition which incidentally holds in two-player zero-sum games (\Cref{prop:zerosum-smooth}).

There are ample compelling aspects in connecting the convergence of no-regret learning algorithms with Roughgarden's smoothness framework. First, there has been a considerable interest in understanding smoothness, insights that can now be inherited to a seemingly entirely different but equally fundamental problem. For example, smoothness naturally lends itself to composition when multiple mechanisms are executed, either simultaneously or sequentially: smoothness at each separate mechanism implies smoothness globally ~\citep[Theorems 5.1 and 5.2]{Syrgkanis13:Composable}. Smoothness also naturally extends to \emph{Bayesian games}~\citep{Syrgkanis12:Bayesian,Roughgarden15:The,Hartline15:No} (see also~\citep{Sessa20:Contextual} for the related class of contextual games), a property that we leverage in \Cref{theorem:bayesian} to expand our scope to mechanisms of incomplete information. Additional extensions based on \emph{local smoothness}~\citep{Roughgarden15:Local,Thang19:Game} and the refined primal-dual framework of~\citet{Nadav10:The} are also described in \Cref{remark:local,remark:primaldual}. Finally, our criterion has a clear and natural economic interpretation, and is part of an ongoing research effort to identify tractable classes of variational inequalities (VIs) beyond the Minty property~\citep{Diakonikolas21:Efficient}. 

As a concrete example, our framework subsumes games wherein each player's effect on the outcome vanishes as $n \to \infty$; this captures, for example, simple voting settings~\citep{Kearns02:Efficient}, as well as general auction design problems where establishing that property turns out to be highly non-trivial and quite delicate~\citep{Feldman16:The}. More concretely, for games with vanishing \emph{sensitivity} $\epsilon_n \to 0$, in that unilateral deviations from other players can only affect a player's utility by an additive $\epsilon_n$ (\Cref{def:sensitivity}), we show that the conclusion of~\Cref{theorem:anonymous} applies as long as $\epsilon_n \leq o_n(1/\sqrt{n})$ (under a further precondition similar to that in~\Cref{theorem:anonymous}; see \Cref{cor:boundinfl}). At the same time, the condition that $\rho_n \to 1$ goes much deeper than games with vanishing sensitivity, and, as we explained earlier, surprisingly applies to two-player zero-sum games as well. We find it conceptually appealing that a unifying framework can establish tractability of Nash equilibria in two seemingly disparate classes of problems such as two-player zero-sum games and games with vanishing strategic sensitivity.


Remaining on smooth games, but relaxing the assumption $\rho \approx 1$, we next study conditions under which the efficiency guaranteed by the smoothness framework can be improved, while at the same time ensuring the no-regret property, thereby implying convergence to the set of CCE. The smoothness bound is known to be applicable to \emph{any} outcome of no-regret dynamics, but here we are instead interested in more refined guarantees when specific learning algorithms are in place. Building on a recent result~\citep{Anagnostides22:On}, we show in \Cref{theorem:efficiency-noregret-formal} that the \emph{clairvoyant} variant of gradient descent, introduced by~\citet{Piliouras21:Optimal}, enjoys an improved welfare bound \emph{and} ensures fast convergence to the set of CCE for the average correlated distribution of play. Crucially, compared to an earlier result~\citep{Anagnostides22:On}, the clairvoyant algorithm manages to satisfy an appealing notion of per-player incentive compatibility, in the form of convergence to CCE. In other words, improving the welfare predicted by smoothness is not at odds with incentive compatibility (\Cref{theorem:efficiency-noregret}). It is worth noting that this new result interacts interestingly with the hardness of~\citet{Barman15:Finding} pertaining computing non-trivial CCE, which is discussed in~\Cref{sec:nontrivial}.

Finally, we suggest an approach for characterizing convergence to Nash equilibria through the lens of computational complexity. Specifically, \citet{Anagnostides22:On} showed that Nash equilibria can be computed efficiently as long as the sum of the players' regrets is nonnegative. Interestingly, we show that determining whether the sum of the players' regrets can be negative (with respect to a given action profile) is \NP-hard in succinct multi-player games (\Cref{theorem:sumregs}). Now the upshot is that in classes of games where making the sum of the players' regrets negative---or relatedly strictly improving the social welfare predicted by smoothness (\Cref{theorem:implicit-hardness})---is \NP-hard, identifying and characterizing the hard instances would readily provide insights into the convergence properties of learning algorithms in instances that are otherwise analytically elusive.

\paragraph{Roadmap} The necessary preliminaries are given in \Cref{sec:prel} below; the reader already familiar with smooth games and optimistic gradient descent can simply skim that section for our notation. Next, we give a more technical overview of our results as well as several pertinent extensions and observations in~\Cref{sec:large_games,sec:improved-efficiency}, where we study convergence to Nash equilibria in large games and welfare guarantees of no-regret dynamics, respectively. We then discuss additional related work in \Cref{sec:related}, and we conclude with some suggestions for future work in~\Cref{sec:conclusion}. All of the proofs are deferred to \Cref{appendix:proofs}.
\section{Background}
\label{sec:prel}

In this section, we introduce some basic background on learning in games and online optimization. For a more comprehensive treatment on the subject, we refer to the book of~\citet{Cesa-Bianchi06:Prediction} and the monograph of~\citet{Shalev-Shwartz12:Online}.

\paragraph{Notation} We let $\N = \{1, 2, \dots \}$ be the set of natural numbers. For $n \in \N$, we denote by $\range{n} \defeq \{1, 2, \dots, n\}$. For a vector $\vx \in \R^d$, we denote by $\|\vx\|_2$ its (Euclidean) $\ell_2$ norm. For a convex, compact and nonempty set $\cX$, we let $\proj_{\cX}(\cdot)$ represent the Euclidean projection operator with respect to $\cX$. We let $\diam_{\cX}$ be the $\ell_2$-diameter of $\cX$. To simplify the exposition, we often use the $O(\cdot)$ notation in the main body to suppress the dependence on certain parameters; we also write $O_n(\cdot)$ to indicate the asymptotic growth solely as a function of $n$, so as to lighten the exposition.

\paragraph{Multilinear games} We consider $n$-player \emph{multilinear} games. In a multilinear game $\cG$ each player $i \in \range{n}$ has a convex, compact and nonempty set of feasible strategies $\cX_i \subseteq \R^{d_i}$, for some dimension $d_i \in \N$. Under a joint strategy profile $\vx = (\vx_1, \dots, \vx_n) \in \prod_{i=1}^n \cX_i \eqqcolon \cX$, there is a continuous utility function $u_i : (\vx_1, \dots, \vx_n) \mapsto \R$ such that $u_i(\vx) = \langle \vx_i, \vu_i(\vx_{-i}) \rangle$, for some function $\vu_i : \vx_{-i} \mapsto \R^{d_i}$; here, we used for convenience the standard notation $\vx_{-i} \defeq (\vx_1, \dots, \vx_{i-1}, \vx_{i+1}, \dots, \vx_{n})$. In words, the utility function of each player is assumed to be linear when the strategy profile of the rest of the players remains fixed. This setup readily captures \emph{normal-} as well as \emph{extensive-form} games.

Specifically, in a normal-form game every player $i \in \range{n}$ selects as strategy a probability distribution $\vx_i \in \Delta(\cA_i)$ over a finite set of available actions $\cA_i$. There is an arbitrary utility function $u_i : \cA \defeq \prod_{i=1}^n \cA_i \to [-1, 1]$ that maps a joint action profile to a utility for Player $i$; we will be making the standard assumption that the range of each player's utilities is bounded by an absolute constant, which is in particular independent of the number of players $n$. In this setting, the mixed extension of the utility function indeed satisfies the multilinearity condition imposed above: for $\vx \in \prod_{i=1}^n \Delta(\cA_i)$, $u_i(\vx) \defeq \E_{\vec{a} \sim \vec{x} }[u_i(\vec{a})] = \langle \vx_i, \nabla_{\vx_i} u_i(\vx) \rangle$, where we overloaded the notation $u_i(\cdot)$.

Returning to the general setting of multilinear games, we let $F_{\cG} : \prod_{i=1}^n \cX_i \to \prod_{i=1}^n \cX_i$ denote the underlying \emph{operator} of the game $\cG$, defined as $F_{\cG} : (\vx_1, \dots, \vx_n) \mapsto (\vu_1(\vx_{-1}), \dots, \vu_n(\vx_{-n}))$, where function $\vu_{i}$ was introduced earlier. For notational simplicity, we will often omit the subscript $\cG$ when it is clear from the context. We will say that $F$ is $L$-\emph{Lipschitz continuous} (w.r.t. the $\ell_2$ norm) if for any $\vx, \vx' \in \prod_{i=1}^n \cX_i$ it holds that $\| F(\vx) - F(\vx') \|_2 \leq L \|\vx - \vx'\|_2$. 

\paragraph{Welfare and the price of anarchy} For a joint strategy profile $\vx \in \prod_{i=1}^n \cX_i$, we define the \emph{social welfare} attained under $\vx$ as $\sw(\vx) \defeq \sum_{i=1}^n u_i(\vx)$. The maximum possible social welfare attainable in a game $\cG$ will be denoted by $\opt_{\cG}$. Without any essential loss of generality, it will be assumed that $\opt_\cG > 0$. We say that the game is \emph{constant-sum} if $\sw(\vx) = V \in \R_{> 0}$ for any $\vx \in \prod_{i=1}^n \cX_i$. The \emph{price of anarchy (PoA)} of a game $\cG$ quantifies the loss in efficiency incurred on account of strategic players~\citep{Koutsoupias99:Worst}. Formally, if $\neset_{\cG}$ is the nonempty set of \emph{(mixed) Nash equilibria} of $\cG$ (\Cref{def:NE})~\citep{Rosen65:Existence}, we define $\poa_\cG$ $\defeq \sup_{\vx \in \neset_{\cG}} \left\{ \frac{\sw(\vx)}{\opt_{\cG}}  \right\}$.\footnote{Here we abuse terminology by accepting that a small price of anarchy corresponds to admitting \emph{inefficient} equilibria; this is made so as to be consistent with much of prior work.} ($\poa$ is often defined w.r.t. \emph{pure} Nash equilibria, although their exisence is not guaranteed in general.)

\paragraph{Smooth games} We are now ready to recall the seminal notion of a \emph{smooth game},\footnote{Smoothness in the sense of~\Cref{def:smoothness-games} should not be confused with the unrelated notion of smoothness in the optimization nomenclature.} conceived in the pioneering work of~\citet{Roughgarden15:Intrinsic} as a technique to (lower) bound the price of anarchy.

\begin{definition}[Smooth game~\citep{Roughgarden15:Intrinsic}]
    \label{def:smoothness-games}
An $n$-player game $\cG$ is called \emph{$(\lambda,\mu)$-smooth}, where $\lambda > 0$ and $\mu > -1$, if there exists $\vxstar \in \prod_{i=1}^n \cX_i$ with $\sw(\vxstar) = \opt_\cG$ such that for every $\vx \in \prod_{i=1}^n \cX_i$,
\begin{equation}
    \label{eq:smoothness-game}
    \sum_{i=1}^n u_i(\vxstar_i, \vx_{-i}) \geq \lambda \opt_\cG - \mu \sw(\vx).
\end{equation}
\end{definition}

(In the definition above, and throughout this paper, we slightly abuse notation by parsing $u_i(\vxstar_i, \vx_{-i})$ as $u_i(\vx_{1}, \dots, \vx_{i-1}, \vxstar_i, \vx_{i+1}, \dots, \vx_n)$.) \citet{Roughgarden15:Intrinsic} observed that in a $(\lambda, \mu)$-smooth game, in the sense of~\Cref{def:smoothness-games}, every Nash equilibrium attains at least a $\rho(\lambda, \mu) \defeq \frac{\lambda}{1 + \mu}$ fraction of the optimal social welfare $\opt_{\cG}$. Importantly, this efficiency guarantee immediately carries over to outcomes of no-regret learning algorithms as well. The \emph{robust price of anarchy ($\rpoa_\cG$)} is the best (\emph{i.e.}, largest) price of anarchy bound provable via a smoothness argument, and can be defined as the solution to the linear program induced by the smoothness constraints given in~\eqref{eq:smoothness-game} (see~\eqref{eq:rpoa-LP} in \Cref{appendix:poa}). One delicate point here is that the value of $\rpoa_\cG$ could be associated with unbounded smoothness parameters, which is a pathological and rather trivial manifestation of smoothness (\Cref{remark:rpoa-rho} elaborates on this point). To be clear, when we say a game $\cG$ attains a certain value $\rho_\cG$, we mean that there exists a finite pair of legitimate smoothness parameters $(\lambda, \mu)$ such that $\rho_\cG = \frac{\lambda}{1 + \mu}$. With this convention, it might be the case that $\rho_\cG(\lambda, \mu) \neq \rpoa_\cG$ under any finite pair of smoothness parameters $(\lambda, \mu)$ (\Cref{remark:rpoa-rho}). It is also easy to see that $\poa_{\cG} \geq \rpoa_{\cG}$.

\citet{Roughgarden15:Intrinsic} noted that many fundamental classes of games are smooth (mainly as a byproduct of earlier research endeavors), including network congestion games with linear latency functions~\citep{Christodoulou05:The} and submodular utility games~\citep{Vetta02:Nash}. In the sequel, we will connect smoothness per \Cref{def:smoothness-games} with the Minty property, a central condition in the optimization literature  (\Cref{prop:zerosum-smooth}).\looseness-1

\paragraph{Nash equilibrium} We next recall the concept of a \emph{weak} Nash equilibrium, a natural generalization of the standard notion which is meaningful in multi-player games.

\begin{definition}[Weak Nash equilibrium~\citep{Babichenko16:Can}]
    \label{def:NE}
    Let $\delta \in [0,1)$ and $\epsilon \in \R_{\geq 0}$. A joint strategy profile $\vx \in \prod_{i=1}^n \cX_i$ is an \emph{$(\epsilon,\delta)$-weak Nash equilibrium} if at least a $1 - \delta$ fraction of the players are $\epsilon$-best responding. An $(\epsilon, 0)$-weak Nash equilibrium will be simply referred to as \emph{$\epsilon$-Nash equilibrium}.
\end{definition}

In the definition above, we clarify that a player $i \in \range{n}$ is said to be \emph{$\epsilon$-best responding} if $\brgap_i(\vx_1, \dots, \vx_n) \defeq \max_{\vx_i' \in \cX_i} \langle \vx_i', \vu_i(\vx_{-i}) \rangle - \langle \vx_i, \vu_i(\vx_{-i}) \rangle \leq \epsilon$. We also define the \emph{Nash equilibrium gap} as $\negap(\vx) \defeq \max_{1 \leq i \leq n} \brgap_i(\vx)$. It has been shown that hardness results in multi-player general-sum games persist under the weak Nash equilibrium concept introduced above, even when $\epsilon$ and $\delta$ are absolute constants bounded away from $0$~\citep{Babichenko22:Communication,Rubinstein18:Inapproximability}.

\paragraph{Regret} We are operating in the usual online learning setting. At every time $t \in \N$ a player $i \in \range{n}$ selects a strategy $\vx_i^{(t)} \in \cX_i$, and then receives as feedback the linear utility function $\vx_i \mapsto \langle \vx_i, \vu_i^{(t)} \rangle$, where we overload notation so that $\vu_i^{(t)} \defeq \vu_i(\vx^{(t)}_{-i})$. The \emph{regret} of player $i \in \range{n}$ under a time horizon $T \in \N$ is defined as 
$$\reg_i^{(T)} \defeq \max_{\vxstar_i \in \cX_i} \left\{ \sum_{t=1}^T u_i(\vxstar_i, \vx_{-i}^{(t)}) \right\} - \sum_{t=1}^T u_i(\vx^{(t)}).$$

\paragraph{Optimistic gradient descent} By now, there are many online algorithms known to guarantee \emph{sublinear} regret, $\reg_i^{(T)} = o(T)$, even if the observed utilities are selected adversarially~\citep{Cesa-Bianchi06:Prediction,Shalev-Shwartz12:Online}. The main no-regret learning algorithm we consider in this paper is \emph{optimistic gradient descent} (henceforth $\ogd$)~\citep{Rakhlin13:Online,Chiang12:Online}, which is known to yield improved regret guarantees in the setting of learning in games~\citep{Syrgkanis15:Fast}. For each player $i \in \range{n}$, $\ogd$ is defined through the following update rule for $t \in \N$.
\begin{equation}
    \label{eq:PGD}
    \tag{\texttt{OGD}}
    \begin{split}
        \vx_i^{(t)} &\defeq \proj_{\cX_i}\left( \hvx_i^{(t)} + \eta \vm_i^{(t)} \right),\\
    \hvx_i^{(t+1)} &\defeq \proj_{\cX_i}\left( \hvx_i^{(t)} + \eta \vu_i^{(t)} \right).
    \end{split}
\end{equation}
Here, $\eta > 0$ is the \emph{learning rate}; $\vm_i^{(t)} \in \R^{d_i}$ is the \emph{prediction} vector; and $\hvx_i^{(1)} \in \cX_i$ is the \emph{initialization}. It is assumed that the strategy set $\cX_i$ is such that the Euclidean projection $\proj_{\cX_i}(\cdot)$ can be implemented efficiently. We will let $\vm_i^{(t)} \defeq \vu_i^{(t-1)}$ for any $t \in \N$, where $\vu_i^{(0)} \defeq \vu_i(\hvx^{(1)}_{-i})$.


\section{Convergence to Nash equilibria via smoothness}
\label{sec:large_games}

In this section, we study the convergence of optimistic gradient descent ($\ogd$) in large games, that is to say, in the regime $n \gg 1$. In our first main result, stated below as \Cref{theorem:large_games-informal}, we show that when $\lim_{n \to +\infty} \rho_n = 1$ with a sufficiently fast rate, not only are Nash equilibria approaching the optimal social welfare, but they can also be computed efficiently in a decentralized fashion via $\ogd$. We find it intriguing that smoothness, a technique introduced by~\citet{Roughgarden15:Intrinsic} to bound the price of anarchy, can also be harnessed for the seemingly orthogonal problem of showing convergence to Nash equilibria.

In the sequel, when considering a sequence of games $(\cG_n)_{n \geq 1}$, with each game $\cG_n$ being parameterized by the number of players $n$, we will use a subscript with variable $n$ to index the $n$th game in the sequence; that notation will also be used to refer to the other underlying parameters of the game that depend on the number of players.

\begin{theorem}
    \label{theorem:large_games-informal}
Consider an $n$-player $(\lambda, \mu)$-smooth game $\cG_n$ such that the game operator $F_n$ is $L_n$-Lipschitz continuous and $\lambda \geq (1 - \epsilon_n)(1 + \mu)$, with $\epsilon_n \neq 0$. Suppose further that all players follow $\ogd$ with learning rate $\eta_n = 1/(4 L_n)$. Then, for $\delta \in (0,1)$ and a sufficiently large number of iterations $T = O_n(n L_n^2/\gamma_n)$, where $\gamma_n \defeq L_n \opt_{\cG_n} \epsilon_n$, there is time $t^\star \in \range{T}$ such that $\vx^{(t^\star)} \defeq (\vx^{(t^\star)}_1, \dots, \vx_n^{(t^\star)})$ is a\looseness-1
\begin{equation}
    \label{eq:approx-weak}
\left( \frac{1}{\sqrt{\delta}} O_n \left( \sqrt{\frac{\gamma_n}{n}} \right), \delta\right)-\text{weak Nash equilibrium.}    
\end{equation}
In particular, if $\gamma_n = O_n(n^{1 - \alpha})$, for $\alpha \in (0,1)$, $\ogd$ yields an $(O_n(n^{-\frac{\alpha}{3}}), O_n(n^{-\frac{\alpha}{3}}))$-weak NE.
\end{theorem}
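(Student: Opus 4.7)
My plan is to sandwich the cumulative sum of regrets $\sum_{i=1}^n \reg_i^{(T)}$ between a smoothness-driven \emph{lower} bound and the classical RVU \emph{upper} bound for $\ogd$, and then to convert the resulting second-order path-length estimate into an approximate best response for most players at a pigeonhole-chosen iterate $t^\star \in \range{T}$. For the lower bound, I apply the smoothness inequality~\eqref{eq:smoothness-game} at each iterate $\vx^{(t)}$ with the common welfare-maximizing deviation $\vxstar$, sum over $t \in \range{T}$, and invoke $\sum_{t=1}^T u_i(\vxstar_i, \vx_{-i}^{(t)}) \leq \reg_i^{(T)} + \sum_{t=1}^T u_i(\vx^{(t)})$ from the definition of regret. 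Combined with $\sw(\vx^{(t)}) \leq \opt_{\cG_n}$ and the hypothesis $\lambda \geq (1-\epsilon_n)(1+\mu)$, this yields
\begin{equation*}
    \sum_{i=1}^n \reg_i^{(T)} \;\geq\; \lambda T \opt_{\cG_n} - (1+\mu) \sum_{t=1}^T \sw(\vx^{(t)}) \;\geq\; -\epsilon_n(1+\mu)\,T\,\opt_{\cG_n}.
\end{equation*}

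For the upper bound I would use the standard RVU estimate for $\ogd$, summed across players, with the $L_n$-Lipschitzness of $F_n$ controlling the prediction error through $\sum_{t} \sum_i \|\vu_i^{(t)} - \vu_i^{(t-1)}\|_2^2 \leq L_n^2 \sum_t \|\vx^{(t)} - \vx^{(t-1)}\|_2^2$; the choice $\eta_n = 1/(4 L_n)$ produces an upper bound of the form $\sum_{i=1}^n \reg_i^{(T)} \leq C_1 L_n - C_2 L_n \sum_{t=1}^T \|\vx^{(t)} - \vx^{(t-1)}\|_2^2$ for positive constants depending only on the diameters $\diam_{\cX_i}$. Sandwiching then gives $\sum_{t=1}^T \|\vx^{(t)} - \vx^{(t-1)}\|_2^2 = O_n(1) + O_n(\gamma_n T/L_n^2)$; with $T$ of order $n L_n^2/\gamma_n$ the initial term is absorbed, so a pigeonhole selects $t^\star \in \range{T}$ satisfying $\sum_{i=1}^n \|\vx_i^{(t^\star)} - \vx_i^{(t^\star-1)}\|_2^2 = O_n(\gamma_n/L_n^2)$. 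A Markov argument across the $n$ players then extracts a subset of at least $(1-\delta)n$ indices $i$ for which $\|\vx_i^{(t^\star)} - \vx_i^{(t^\star-1)}\|_2 = O_n\!\bigl(\sqrt{\gamma_n/(n L_n^2 \delta)}\bigr)$.

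The final step converts small per-player motion into a small $\brgap_i$: the first-order optimality of the $\ogd$ projection step, together with $L_n$-Lipschitzness to absorb the $\vx_{-i}^{(t^\star)}$ dependence of $\vu_i$, delivers $\brgap_i(\vx^{(t^\star)}) \leq C_3 L_n \bigl(\|\vx^{(t^\star)} - \vx^{(t^\star-1)}\|_2 + \|\vx^{(t^\star+1)} - \vx^{(t^\star)}\|_2\bigr)$; plugging in the per-player movement estimate yields the claimed $(1/\sqrt{\delta})\, O_n(\sqrt{\gamma_n/n})$ gap, hence the weak Nash equilibrium guarantee~\eqref{eq:approx-weak}. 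The second assertion then follows by balancing the slack parameter as $\delta = n^{-\alpha/3}$: when $\gamma_n = O_n(n^{1-\alpha})$ one has $\sqrt{\gamma_n/(n\delta)} = O_n(n^{-\alpha/3})$, matching the approximate-weak-NE parameters in the statement. I expect this last conversion to be the chief technical obstacle: since $\brgap_i$ is measured against the full opposing profile $\vx_{-i}^{(t^\star)}$, merely controlling player $i$'s own one-step motion is insufficient, and carefully absorbing the $\vx_{-i}$ contribution via Lipschitzness and the $\ogd$ projection inequality is where most of the care resides; the smoothness/RVU sandwich and the ensuing Markov/pigeonhole reductions are comparatively routine.
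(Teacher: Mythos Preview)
Your overall architecture---sandwich $\sum_i \reg_i^{(T)}$ between a smoothness lower bound and an RVU upper bound, extract a second-order path-length estimate, and then pigeonhole/Markov down to a weak Nash equilibrium---is exactly the paper's route. The lower bound step and the RVU step are stated correctly (modulo the initialization term in your path-length bound, which is $O_n(n)$ rather than $O_n(1)$ since $\diam_{\cX}^2=\sum_i \diam_{\cX_i}^2$; this is harmless after your choice of $T$).

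The gap is precisely in the step you flag as the obstacle. Your proposed bound
\[
\brgap_i(\vx^{(t^\star)}) \;\leq\; C_3 L_n\bigl(\|\vx^{(t^\star)}-\vx^{(t^\star-1)}\|_2+\|\vx^{(t^\star+1)}-\vx^{(t^\star)}\|_2\bigr),
\]
obtained by correcting for the $\vx_{-i}$ dependence of $\vu_i$ via Lipschitzness, involves the \emph{full} joint motion. But then the preceding Markov step on per-player motion buys nothing: plugging in your pigeonhole estimate $\|\vx^{(t^\star)}-\vx^{(t^\star-1)}\|_2=O_n(\sqrt{\gamma_n}/L_n)$ only gives $\brgap_i=O_n(\sqrt{\gamma_n})$ for \emph{every} $i$, which does not recover the $(1/\sqrt{\delta})\,O_n(\sqrt{\gamma_n/n})$ claim unless $\delta\leq 1/n$; in particular it fails to yield the $(O_n(n^{-\alpha/3}),O_n(n^{-\alpha/3}))$ conclusion. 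The sentence ``plugging in the per-player movement estimate'' conflates the per-player quantity with the full norm.

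The paper sidesteps the $\vx_{-i}$ issue entirely, and this is the point you are missing: because $\ogd$ maintains a \emph{secondary} sequence with $\hvx_i^{(t+1)}=\proj_{\cX_i}(\hvx_i^{(t)}+\eta\,\vu_i^{(t)})$, the first-order optimality of \emph{that} projection is taken against the \emph{actual} utility $\vu_i^{(t)}=\vu_i(\vx_{-i}^{(t)})$. This yields (Lemma~\ref{lemma:br})
\[
\brgap_i(\vx^{(t)}) \;\leq\; \Bigl(\tfrac{\diam_{\cX_i}}{\eta}+\|\vu_i^{(t)}\|_2\Bigr)\bigl(\|\vx_i^{(t)}-\hvx_i^{(t)}\|_2+\|\vx_i^{(t)}-\hvx_i^{(t+1)}\|_2\bigr),
\]
which depends only on player $i$'s own primary--secondary increments, with no Lipschitz correction for $\vx_{-i}$. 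Squaring and summing over $i$ then gives $\sum_i (\brgap_i)^2\leq C L_n^2\,\|\text{joint motion}\|^2$, and after pigeonholing on $t$ one has $\sum_i(\brgap_i(\vx^{(t^\star)}))^2=O_n(\gamma_n)$; a Markov argument on this sum of squares (not on the raw motions) delivers the stated weak-NE parameters. So the fix is to apply the conversion to $\brgap_i$ \emph{before} aggregating across players, using the secondary $\ogd$ step rather than Lipschitz continuity of $\vu_i$.
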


We remark that if it further holds that $\gamma_n = o_n(1)$, the above theorem establishes convergence to the standard notion of Nash equilibrium---wherein \emph{all} players are (almost) best responding (\Cref{cor:largegames}). Parameter $\gamma_n$, which is proportional to the error term $\epsilon_n$, controls both the number of iterations and the approximation guarantee in~\eqref{eq:approx-weak}; we refer to \Cref{theorem:large_games-formal} (in \Cref{appendix:proofs}) for a more precise non-asymptotic characterization, which bounds the players' cumulative best response gap as a function of the growth of $\gamma_n$. We also note that \Cref{theorem:large_games-informal} can be strengthened so that~\eqref{eq:approx-weak} holds for \emph{most} iterates of $\ogd$ (say 99\%), not just a single one (\Cref{remark:mostiterates}).

Now, to be more concrete regarding the preconditions of~\Cref{theorem:large_games-informal}, we first observe that the growth of the Lipschitz constant $L_n$ depends on the normalization assumptions as well as the structure of the underlying game $\cG_n$. In particular, let us assume---as is standard---that the range of the utility functions is independent of $n$, in which case $\opt_{\cG_n} = O_n(n)$. We then show that $L_n = O_n(1)$ in each of the following cases: graphical games with bounded degree (\Cref{claim:Lip-graphical}), games with $O_n(1/n)$ strategic sensitivity per \Cref{def:sensitivity} (\Cref{lemma:sens}), and polymatrix (general-sum) games even with unbounded neighborhoods (\Cref{lemma:polymatrix}); the first two of the aforementioned classes are the most commonly studied classes in the literature under the regime $n \gg 1$. For all of those classes, applying~\Cref{theorem:large_games-informal} yields an $(o_n(1), o_n(1))$-weak Nash equilibrium for any $\epsilon_n \leq o_n(1)$. More generally, the Lipschitz constant $L_n$ can grow with $n$ (\Cref{claim:Lip-general}), in which case $\epsilon_n$ must vanish with a faster rate for the conclusion of~\Cref{theorem:large_games-informal} to kick in. One important limitation of~\Cref{theorem:large_games-informal} is that, at least in a certain regime, it prescribes a strategy in which many players---albeit a vanishing fraction---may have a profitable deviation (in accordance with \Cref{def:NE}); this is admittedly an inherent feature of our framework.

\begin{remark}
    In a decentralized environment, one question that arises from \Cref{theorem:large_games-informal} concerns the identification of a time index $t^\star \in \range{T}$ that satisfies~\eqref{eq:approx-weak}, which can be viewed as the stopping condition of the algorithm. We suggest two possible approaches. First, if we accept that players have access to a common source of randomness, then all players can sample the same index $t^\star$ uniformly at random from the set $\range{T}$. As we point out in \Cref{remark:mostiterates}, this suffices to provide a guarantee with high probability with only a small degradation in the solution quality. The second approach, which does not rest any having a common source of randomness, involves a coordinator who can communicate with the players but possesses no information whatsoever about the underlying game. The coordinator sets a target solution quality parameterized by $(\epsilon, \delta)$ (per \Cref{def:NE}), and after each iteration $t$ elicits from each player $i \in \range{n}$ a single bit, encoding whether $\mathbbm{1} \{ \brgap_i(\vx^{(t)}) > \epsilon \}$. (We note that each player can indeed determine its best response gap with only its local information---namely, the utility feedback.) The coordinator can then evaluate whether the fraction of the players with at most an $\epsilon$ best response gap matches the desired accuracy. While the second approach makes for a less decentralized protocol, the communication overhead described above is arguably very limited.
\end{remark}

The key precondition of~\Cref{theorem:large_games-informal} pertains the behavior of the smoothness parameters, to be discussed next after we first sketch the proof of \Cref{theorem:large_games-informal}, which extends a recent technique~\citep{Anagnostides22:On}.

\paragraph{Proof sketch of~\Cref{theorem:large_games-informal}} The proof is based on the fact that the sum of the players' regrets $\sum_{i=1}^n \reg_i^{(T)}$ \emph{cannot be too negative}, which in turn follows from the assumption that $\rho_n \geq 1 - \epsilon_n$. The argument then proceeds by bounding the players' cumulative best response gap across the $T$ iterations $\sum_{t=1}^T \sum_{i=1}^n \left( \brgap_i(\vx^{(t)}) \right)^2$ as a function of $\gamma_n$ and the time horizon $T \in \N$, ultimately leading to the conclusion of~\Cref{theorem:large_games-informal}.

\paragraph{Connection with the Minty property} While we have stated \Cref{theorem:large_games-informal} in the regime $n \gg 1$, under the premise that $\rho_n$ is sufficiently close to $1$ (as a function of $n$), its conclusion is in fact interesting beyond that regime. Indeed, an important observation is that any two-player constant-sum game $\cG \defeq (\mat{A}, \mat{B})$, with $\langle \vx_1, \mat{A} \vx_2 \rangle + \langle \vx_1, \mat{B} \vx_2 \rangle = V$ for any $(\vx_1, \vx_2) \in \cX_1 \times \cX_2$, satisfies $\rho_\cG = 1$. This is indeed a consequence of Von Neumann's minimax theorem: $\exists (\vxstar_1, \vxstar_2) \in \cX_1 \times \cX_2$ such that $u_1(\vxstar_1, \vx_2) + u_2(\vxstar_2, \vx_1) - V = \langle \vxstar_1, \mat{A} \vx_2 \rangle  - \langle \vx_1, \mat{A} \vxstar_2 \rangle \geq 0$ for any $(\vx_1, \vx_2) \in \cX_1 \times \cX_2$, in turn implying that $u_1(\vxstar_1, \vx_2) + u_2(\vxstar_2, \vx_1) \geq (1 + \mu) \opt_\cG - \mu \sw(\vx) = V$; that is, any two-player constant-sum game is $(1+\mu, \mu)$-smooth, thereby making the conclusion of~\Cref{theorem:large_games-informal} readily applicable (by taking $\rho \geq 1 - \epsilon^2$ for any $\epsilon > 0$; see the more explicit statement of \Cref{theorem:large_games-formal}). This captures and unifies earlier iteration complexity bounds under $\ogd$ and the extra-gradient method~\citep{Cai22:Tight,Gorbunov22:Extragradient,Golowich20:Tight}.

\begin{proposition}
    \label{prop:zerosum-smooth}
    Any two-player constant-sum game $\cG$ is $(1 + \mu, \mu)$-smooth for any $\mu > -1$, thereby implying that $\rho_\cG = 1$. Thus, $O(1/\epsilon^2)$ iterations of $\ogd$ suffice to obtain an $\epsilon$-Nash equilibrium, for any $\epsilon > 0$.\looseness-1
\end{proposition}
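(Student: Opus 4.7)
The plan is to handle the two assertions of \Cref{prop:zerosum-smooth} in sequence: first establish the $(1+\mu,\mu)$-smoothness of any two-player constant-sum game, then plug the resulting $\rho_\cG = 1$ into \Cref{theorem:large_games-informal} specialized to $n=2$ to extract the $O(1/\epsilon^2)$ iteration bound.

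For the smoothness claim, I write $u_1(\vx_1,\vx_2) = \langle \vx_1, \mat{A}\vx_2\rangle$ and $u_2(\vx_1,\vx_2) = \langle \vx_1, \mat{B}\vx_2\rangle$, with $u_1+u_2\equiv V$; in particular $u_2(\vx_1,\vxstar_2) = V - \langle \vx_1, \mat{A}\vxstar_2\rangle$ for any fixed $\vxstar_2 \in \cX_2$. I would take $(\vxstar_1,\vxstar_2)$ to be a minimax saddle point of the bilinear form $\langle \vx_1, \mat{A}\vx_2\rangle$ over $\cX_1 \times \cX_2$; its existence is guaranteed by Von Neumann's minimax theorem on convex compact strategy sets. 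The saddle-point inequalities read $\langle \vxstar_1, \mat{A}\vx_2\rangle \geq \langle \vxstar_1, \mat{A}\vxstar_2\rangle \geq \langle \vx_1, \mat{A}\vxstar_2\rangle$ for every $(\vx_1,\vx_2) \in \cX_1 \times \cX_2$, and combining the two extreme inequalities yields
\begin{equation*}
u_1(\vxstar_1,\vx_2) + u_2(\vxstar_2,\vx_1) = \langle \vxstar_1,\mat{A}\vx_2\rangle + V - \langle \vx_1,\mat{A}\vxstar_2\rangle \geq V.
\end{equation*}
Since $\sw(\vx)\equiv V = \opt_\cG$, the right-hand side of~\eqref{eq:smoothness-game} with $\lambda = 1+\mu$ equals exactly $V$, so the smoothness inequality is met for every $\mu > -1$, whence $\rho_\cG = (1+\mu)/(1+\mu) = 1$.

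For the iteration-complexity half, I would invoke \Cref{theorem:large_games-informal} with $n=2$. Since $\rho_\cG = 1$, I am free to set the slack parameter $\epsilon_n := c \epsilon^2$ for a sufficiently small absolute constant $c$; the Lipschitz constant $L$ of the bilinear game operator is $O(\|\mat{A}\|_{\mathrm{op}} + \|\mat{B}\|_{\mathrm{op}}) = O(1)$ and $\opt_\cG = V = O(1)$, so $\gamma_n = L \cdot \opt_\cG \cdot \epsilon_n = \Theta(\epsilon^2)$. Choosing a constant failure fraction $\delta \in (0, 1/2)$, the bound~\eqref{eq:approx-weak} delivers an $(O(\epsilon), \delta)$-weak Nash equilibrium. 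The crucial bookkeeping observation is that, with only two players, requiring a $(1-\delta)$-fraction of them to be $\epsilon$-best responding forces \emph{both} players to be $\epsilon$-best responding (the count is integer-valued and strictly exceeds one), so the weak notion collapses to the standard $\epsilon$-Nash equilibrium. The total number of iterations is $T = O(nL^2/\gamma_n) = O(1/\epsilon^2)$, exactly as claimed.

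The main obstacle is conceptual rather than technical: recognizing that Von Neumann's saddle-point inequality is tautologically the smoothness inequality~\eqref{eq:smoothness-game} with the tightest possible parameters. Once this identification is in hand, both parts of the proposition follow without any delicate calculation, and the well-known $O(1/\epsilon^2)$ convergence rate of $\ogd$ in two-player zero-sum games emerges as an immediate corollary of the general machinery developed in \Cref{sec:large_games}.
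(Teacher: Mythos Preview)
Your proposal is correct and follows essentially the same approach as the paper: both establish $(1+\mu,\mu)$-smoothness by recognizing that Von Neumann's saddle-point inequalities yield $u_1(\vxstar_1,\vx_2) + u_2(\vxstar_2,\vx_1) \geq V = (1+\mu)\opt_\cG - \mu\,\sw(\vx)$, and then invoke the main convergence theorem with slack $\epsilon_n = \Theta(\epsilon^2)$. The only minor difference is that you route through the weak-Nash statement of \Cref{theorem:large_games-informal} and argue that with $n=2$ and $\delta<1/2$ the weak notion collapses to a genuine $\epsilon$-Nash equilibrium, whereas the paper points to the more explicit \Cref{theorem:large_games-formal}, whose bound $\sum_{i=1}^2 (\brgap_i(\vx^{(t^\star)}))^2 \leq O(\epsilon^2)$ yields $\max_i \brgap_i \leq O(\epsilon)$ directly without passing through the weak relaxation; either route is fine.
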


More broadly, there is a surprisingly overlooked but immediate connection between smooth games (per \Cref{def:smoothness-games}) and the \emph{Minty property}, a well-known condition in the literature on variational inequalities (VIs)~\citep{Facchinei03:Finite,Lin20:Finite,Mertikopoulos19:Optimistic}. More precisely, the Minty property postulates the existence of a strategy profile $\vxstar \in \prod_{i=1}^n \cX_i$ such that $\langle \vxstar - \vx, F(\vx) \rangle \geq 0$ for any $\vx \in \prod_{i=1}^n \cX_i$, where we recall that $F$ is the operator of the game. (We caution that the last inequality is typically stated with the opposite sign since the operator $F$ is defined oppositely.) The following connection is thus immediate from the fact that $\langle \vxstar, F(\vx) \rangle = \sum_{i=1}^n \langle \vxstar_i, \vu_{i}(\vx_{-i}) \rangle = \sum_{i=1}^n u_i(\vxstar_i, \vx_{-i})$ and $\langle \vx, F(\vx) \rangle = \sum_{i=1}^n \langle \vx_i, \vu_i(\vx_{-i}) \rangle = \sw(\vx)$ (by multilinearity).

\begin{observation}
    \label{observation:Minty}
For any (multi-player) constant-sum game $\cG$, the Minty property is equivalent to $\cG$ being $(1 + \mu, \mu)$-smooth for some $\mu > -1$.
\end{observation}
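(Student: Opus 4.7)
The plan is to observe that in a constant-sum game the two conditions reduce, after unpacking the definitions using multilinearity, to the exact same inequality; so the proof is essentially a one-line verification, and the main task is just to make the identifications carefully.

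First I would translate both sides using the identities already noted in the excerpt. By multilinearity of the utilities, $\langle \vxstar, F(\vx)\rangle = \sum_{i=1}^n u_i(\vxstar_i, \vx_{-i})$ and $\langle \vx, F(\vx)\rangle = \sw(\vx)$. Hence the Minty condition $\langle \vxstar - \vx, F(\vx)\rangle \geq 0$ for every $\vx \in \prod_i \cX_i$ is equivalent to
\[
\sum_{i=1}^n u_i(\vxstar_i, \vx_{-i}) \;\geq\; \sw(\vx) \qquad \forall \vx \in \prod_{i=1}^n \cX_i.
\]
Meanwhile, $(1+\mu,\mu)$-smoothness requires some $\vxstar$ with $\sw(\vxstar) = \opt_\cG$ such that $\sum_{i=1}^n u_i(\vxstar_i, \vx_{-i}) \geq (1+\mu)\opt_\cG - \mu\,\sw(\vx)$ for every $\vx$.

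Next I would invoke the constant-sum assumption: $\sw(\vx) = V$ for every $\vx$, so $\opt_\cG = V$ as well, and in particular the requirement $\sw(\vxstar) = \opt_\cG$ is vacuous. Substituting, the smoothness inequality becomes $\sum_{i=1}^n u_i(\vxstar_i, \vx_{-i}) \geq (1+\mu)V - \mu V = V = \sw(\vx)$, which is identical to the Minty inequality displayed above and no longer depends on $\mu$.

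The equivalence is then immediate in both directions: if Minty holds with witness $\vxstar$, then the same $\vxstar$ certifies $(1+\mu,\mu)$-smoothness for \emph{any} $\mu > -1$; conversely, if $\cG$ is $(1+\mu,\mu)$-smooth for some $\mu > -1$ with witness $\vxstar$, then $\vxstar$ satisfies the Minty inequality. There is no real obstacle; the only thing to be careful about is the reduction $(1+\mu)\opt_\cG - \mu\,\sw(\vx) = V$, which is where the constant-sum hypothesis is crucially (and only) used, and the implicit remark that the role of $\mu$ completely disappears in this regime.
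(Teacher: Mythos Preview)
Your proposal is correct and matches the paper's own argument essentially verbatim: both use the multilinearity identities $\langle \vxstar, F(\vx)\rangle = \sum_i u_i(\vxstar_i,\vx_{-i})$ and $\langle \vx, F(\vx)\rangle = \sw(\vx)$, then observe that in a constant-sum game $(1+\mu)\opt_\cG - \mu\,\sw(\vx) = V = \sw(\vx)$ so the smoothness and Minty inequalities coincide. Your additional remark that the constraint $\sw(\vxstar)=\opt_\cG$ is vacuous in the constant-sum setting is a nice clarification the paper leaves implicit.
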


Indeed, the Minty property implies that $\sum_{i=1}^n u_i(\vxstar_i, \vx_{-i}) \geq V = (1 + \mu) \opt_\cG - \mu \sw(\vx)$, and the converse direction is also immediate. In fact, for (multi-player) zero-sum games, the Minty property is equivalent to $\cG$ satisfying~\eqref{eq:smoothness-game} under some pair $(\lambda, \mu) \in \R^2$. We stress that even if $\rho_\cG = 1$, traditional no-regret learning algorithms such as online mirror descent do not generally enjoy iterate convergence to Nash equilibria~\citep{Mertikopoulos18:Cycles}, which stands in stark contrast to the behavior of $\ogd$ (\Cref{prop:zerosum-smooth}). In light of \Cref{observation:Minty}, \Cref{theorem:large_games-informal} should also be viewed as part of an ongoing effort to establish sufficient conditions of tractability that are more permissive than the Minty property (\emph{e.g.}, ~\citep{Diakonikolas21:Efficient,Cai22:Accelerated,Cai22:Acceleratedb,Pethick23:Escaping,Bohm22:Solving}). The criterion we furnish herein, based on the smoothness framework, has the important benefit of enjoying a natural economic interpretation, as well as having being extensively studied in the literature. Indeed, we will leverage insights from prior work to obtain several interesting extensions in the remainder of this section.\looseness-1

Returning to~\Cref{observation:Minty}, but relaxing the assumption that the game is constant-sum, it is easy to see that one implication remains valid: if $\cG$ is $(1+\mu, \mu)$-smooth, then the Minty property certainly holds. On the other hand, the Minty property essentially manifests itself as the pathological $(\lambda, \mu)$-smoothness---with an abuse of terminology---with $\lambda \to 0$ and $\mu \to -1$, in which case smoothness only provides vacuous guarantees. This raises the interesting question of quantifying the efficiency of equilibria in games satisfying the Minty property, a problem which is left for future work.\looseness-1

\paragraph{Games with vanishing sensitivity} Returning to the regime $n \gg 1$, why should we expect $\rho_n \to 1$? Indeed, if anything large games are more general than games with a small number of players since one can always incorporate ``dummy'' players into the game. Yet, the point is that large games oftentimes exhibit a structure that leads to more efficient outcomes. For example, one immediate implication of our framework relates to games with a \emph{vanishing (strategic) sensitivity} (see, \emph{e.g.}, \citep{Kearns02:Efficient,Azrieli13:Lipschitz,Milgrom85:Distributional,Deb15:Stability,Goldberg23:Lower}). There are various ways of defining the sensitivity of a game; here, we adopt the following definition.

\begin{definition}
    \label{def:sensitivity}
    The (strategic) \emph{sensitivity} $\epsilon \in \R_{>0}$ of an $n$-player game in normal form is defined as 
    \begin{equation*}
        \epsilon \defeq \max_{1 \leq i \leq n} \max_{\vec{a} \in \cA} \max_{i' \neq i} \max_{a_{i'}' \in \cA_{i'}} |u_i(a_{i'}', \vec{a}_{-i}) - u_i(\vec{a})|.
    \end{equation*}
\end{definition}
In words, when considering the utility of a player $i \in \range{n}$, we only bound unilateral deviations by players besides $i$, insisting that they impact $i$'s utility by at most an additive $\epsilon$. Now, as long as the sensitivity of the game decays fast enough in terms of $n$, a proof analogous to that of~\Cref{theorem:large_games-informal} implies the following result.

\begin{restatable}{corollary}{boundedinfl}
    \label{cor:boundinfl}
    Consider an $n$-player game $\cG_n$ with sensitivity $\epsilon_n \in \R_{> 0}$. Suppose further that there exists $\vx^\star \in \prod_{i=1}^n \Delta(\cA_i)$ such that for any $\vx \in \prod_{i=1}^n \Delta(\cA_i)$, $\sum_{i=1}^n u_i(\vx^\star_i, \vx_{-i}) - \sum_{i=1}^n u_i(\vx) \geq - n \epsilon_n$. Then, $T = O_n(n)$ iterations of $\ogd$ suffice to obtain a $\left( \frac{1}{\sqrt{\delta}} O_n \left( \epsilon_n \sqrt{n} \right), \delta\right)$-weak Nash equilibrium, for $\delta \in (0,1)$.
\end{restatable}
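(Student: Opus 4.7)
The plan is to adapt the proof of the formal version \Cref{theorem:large_games-formal} underlying \Cref{theorem:large_games-informal}, but substituting the smoothness-derived lower bound on the sum of regrets with the one supplied by the corollary's additive hypothesis. Concretely, for the profile $\vxstar$ given by the hypothesis and any iterate sequence $(\vx^{(t)})_{t \leq T}$ produced by $\ogd$, I would first note that
\begin{equation*}
\sum_{i=1}^n \reg_i^{(T)} \;\geq\; \sum_{t=1}^T \left( \sum_{i=1}^n u_i(\vxstar_i, \vx_{-i}^{(t)}) - \sum_{i=1}^n u_i(\vx^{(t)}) \right) \;\geq\; -T n \epsilon_n,
\end{equation*}
where the first step is the definition of regret using the fixed comparator $\vxstar_i$ and the second is the precondition applied iterate-by-iterate. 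This plays the role of the bound $\sum_i \reg_i^{(T)} \geq -T \opt_{\cG_n} \epsilon_n$ appearing in the proof of \Cref{theorem:large_games-formal}.

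Next I would bound the Lipschitz constant $L_n$ of the game operator $F_{\cG_n}$ in terms of $\epsilon_n$. The sensitivity definition controls the per-coordinate oscillation of each $\vu_i(\vx_{-i})$ when a single opponent's mixed strategy is perturbed: changing $\vx_{j}$ to $\vx_j'$ (for $j \neq i$) moves $\vu_i(\vx_{-i})$ coordinate-wise by at most $O(\epsilon_n \|\vx_j - \vx_j'\|_1)$. Aggregating over the $n-1$ opponents and translating between $\ell_1$ and $\ell_2$ norms on the bounded simplices yields $L_n = O_n(n \epsilon_n)$, up to an $O_n(1)$ dependence on the maximum number of actions per player, which we treat as a constant as is standard in this line of work.

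With the above two ingredients in place, I would then invoke the RVU-based machinery already developed for \Cref{theorem:large_games-formal}: with $\ogd$ run at stepsize $\eta_n = \Theta(1/L_n)$, the cumulative squared best-response gap can be controlled via
\begin{equation*}
\sum_{t=1}^T \sum_{i=1}^n \brgap_i(\vx^{(t)})^2 \;\leq\; O_n(1) \;+\; O_n(L_n) \cdot \left| \sum_{i=1}^n \reg_i^{(T)} \right| \;\leq\; O_n(1) \;+\; O_n(T \, n^2 \epsilon_n^2).
\end{equation*}
Choosing $T = O_n(n)$ makes the dominant term $O_n(n^3 \epsilon_n^2)$, so the average of $\brgap_i(\vx^{(t)})^2$ over the $T n$ player-iterate pairs is $O_n(n \epsilon_n^2)$. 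A pigeonhole/Markov argument over iterations $t \in \range{T}$ and players $i \in \range{n}$ then yields, for any $\delta \in (0,1)$, a time $t^\star$ at which at least a $(1-\delta)$ fraction of players satisfy $\brgap_i(\vx^{(t^\star)}) \leq \delta^{-1/2} \cdot O_n(\epsilon_n \sqrt{n})$, which is precisely the promised weak Nash equilibrium.

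The main obstacle I anticipate is Step 2: correctly calibrating $L_n = O_n(n \epsilon_n)$ from the sensitivity definition so that the downstream scalings ($T = O_n(n)$ iterations and an $\epsilon_n \sqrt{n}$ best-response gap) fall out cleanly. Once this calibration is pinned down, the remainder is a direct transcription of the analysis already carried out for \Cref{theorem:large_games-formal}, with $\gamma_n \defeq n \epsilon_n$ taking the place of $L_n \opt_{\cG_n} \epsilon_n$ in the master bound.
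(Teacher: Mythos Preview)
Your proposal is correct and follows essentially the same approach as the paper: replace the smoothness-based regret lower bound by $\sum_i \reg_i^{(T)} \geq -Tn\epsilon_n$, invoke \Cref{lemma:sens} to get $L_n = O_n(n\epsilon_n)$, and then rerun the RVU/path-length argument from \Cref{theorem:large_games-formal}. One minor imprecision: the first term in your displayed bound on $\sum_{t,i}\brgap_i^2$ should be $O_n(L_n^2 \diam_\cX^2)=O_n(n^3\epsilon_n^2)$ rather than $O_n(1)$, but since this matches the second term once $T=O_n(n)$, your final conclusion is unaffected.
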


In particular, \Cref{cor:boundinfl} yields an $(o_n(1), o_n(1))$-weak Nash equilibrium as long as $\epsilon_n = o_n(1/\sqrt{n})$. Further, in the regime where $\epsilon_n = O_n(1/n)$, \Cref{cor:boundinfl} circumscribes the best response gap for all but a constant number of players (by taking $\delta = O_n(1/n)$). There are many natural settings where we should expect results such as~\Cref{cor:boundinfl} to be applicable~\citep{Kearns02:Efficient,Kearns14:Mechanism,McSherry07:Mechanism}. We find it conceptually compelling that our framework can provide in a unifying way equilibrium guarantees for two seemingly disparate classes of games, namely two-player zero-sum games and games with vanishing strategic sensitivity.

In a similar vein, \citet{Feldman16:The} showed that $\rho_{n} \to 1$ with a rate of $1/\sqrt{n}$ in a general auction design problem (see also~\citep{Cole16:Large}) under the relatively mild assumption that each player participates in the market with some constant probability (aka. probabilistic demand), thereby bypassing known barriers regarding the inefficiency of equilibria in general combinatorial domains. Their proof is based on the fact that each bidder's impact on the prices---under a simultaneous uniform-price auction format---becomes asymptotically negligible, in the spirit of \Cref{def:sensitivity} introduced above. The difficulty that arises in that setting is that the natural representation of the utility functions violates our multilinearity assumption (postulated in \Cref{sec:prel}). Instead, one would have to resort to some form of discretization before attempting to apply \Cref{theorem:large_games-informal}, which could be computationally prohibitive; the other prerequisite is that $L_n = o_n(\sqrt{n})$, for which our approach in~\Cref{lemma:sens} in conjunction with the insights of~\citet{Feldman16:The} could be useful. Overall, understanding whether our techniques can be applied in the combinatorial auction setting of~\citet{Feldman16:The} is left as a challenging open question.

\paragraph{Efficiency of equilibria does not suffice for tractability} A natural question arising from \Cref{theorem:large_games-informal} is whether a similar statement applies under the assumption that $\poa \to 1$, that is, assuming that all Nash equilibria of $\cG$ are (approximately) fully efficient. This is clearly a weaker assumption, but it is unfortunately not sufficient to yield any non-trivial guarantees even in normal-form games:

\begin{restatable}{proposition}{poahard}
    \label{prop:poa-not}
    Even under the promise that $\poa_{\cG} = 1$, computing a $(1/\poly(\cG))$-Nash equilibrium in normal-form games in polynomial time is impossible when $n \geq 3$, unless $\PPAD \subseteq \P$.
\end{restatable}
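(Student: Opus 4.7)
The plan is to reduce from the $\PPAD$-hard problem of computing a $(1/\poly(\cG))$-Nash equilibrium in a general two-player normal-form game~\citep{Chen09:Settling}, by padding any such instance into a three-player game with $\poa = 1$ while preserving the approximate Nash equilibrium structure. The driving intuition is very simple: the cleanest way to force $\poa_{\cG'} = 1$ is to make the social welfare be a constant function of the joint action profile, which can be accomplished by introducing a dummy player whose payoff absorbs all of the variance of $\sw$ while leaving everyone else's incentives intact.

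Concretely, given a two-player game $\cG$ with utilities $u_1, u_2 \colon \cA_1 \times \cA_2 \to [-1, 1]$, I would construct a three-player game $\cG'$ by appending a dummy Player~$3$ with action set $\{0,1\}$ and utility
\[
u'_3(a_1, a_2, a_3) \defeq C - u_1(a_1, a_2) - u_2(a_1, a_2),
\]
where $C$ is a constant chosen so that $u'_3 > 0$ everywhere (e.g., $C = 3$). The utilities of Players $1$ and $2$ are inherited unchanged from $\cG$ and, crucially, $u'_3$ does not depend on $a_3$. By construction, $\sw_{\cG'}(a_1, a_2, a_3) = C$ is constant over the entire strategy space, whence $\opt_{\cG'} = C$ is attained at every profile and $\poa_{\cG'} = 1$ trivially.

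For the reduction to go through, I need to argue that approximate Nash equilibria transfer. Player~$3$ is indifferent among its actions under any strategy profile, so it is automatically $0$-best responding; meanwhile, the utility of each of Players $1$ and $2$ in $\cG'$ coincides with their utility in $\cG$ (in particular, it is independent of $a_3$), so any $(1/\poly(\cG'))$-Nash equilibrium of $\cG'$ projects to a $(1/\poly(\cG))$-Nash equilibrium of $\cG$. Composing with~\citet{Chen09:Settling} yields the claim. I do not anticipate any substantive obstacle: the representation size of $\cG'$ is within a constant factor of that of $\cG$, and rescaling $u'_3$ to live in $[-1,1]$ (if one insists on bounded utilities) only changes the approximation quality by a constant factor, which is absorbed in the $1/\poly$ guarantee.
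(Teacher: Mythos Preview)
Your proposal is correct and essentially identical to the paper's own argument: both reduce from the \PPAD-hardness of two-player games~\citep{Chen09:Settling} by appending a dummy third player whose utility is defined so as to make the resulting three-player game constant-sum, at which point $\poa_{\cG'} = 1$ is automatic and approximate equilibria transfer back to the original instance. The paper's proof is just a terser statement of exactly this construction.
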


This stands in contrast to the class of $(1+ \mu, \mu)$-smooth games, where a fully polynomial-time approximation scheme ($\FPTAS$) is implied by \Cref{theorem:large_games-informal}---assuming access to a utility and a projection oracle, both of which are available in, for example, most succinct normal-form games. \Cref{prop:poa-not} is a straightforward consequence of the fact that Nash equilibria are hard to compute even in constant-sum $3$-player games~\citep{Chen09:Settling}. Furthermore, in \Cref{prop:Shapley} we identify a specific $3$-player game in which $\ogd$ fails (unconditionally) to converge to $\epsilon$-Nash equilibria for a constant value of $\epsilon > 0$, even though $\poa = 1$. Our example is based on a variant of \emph{Shapley's game}~\citep{Shapley64:Some}. 

Another notable advantage of smoothness as a criterion of convergence is that, at least when the game is represented explicitly, it is easy to compute; this is in contrast to $\poa$, whose identification even in two-player games is $\NP$-hard (\Cref{prop:rpoa-poa}).

\paragraph{Extensions} It turns out that smoothness per \Cref{def:smoothness-games} can be further sharpened using a primal-dual framework~\citep{Nadav10:The}. Such refined guarantees can also be translated into our setting (in the context of \Cref{theorem:large_games-informal}), as we elaborate on in \Cref{remark:primaldual}. The upshot is that the modification of~\citet{Nadav10:The} necessitates analyzing the \emph{weighted} sum of the players' regrets $\sum_{i=1}^n z_i \reg_i^{(T)}$ under a dual set of variables $\{z_i \}_{i=1}^n$. Another interesting extension worth noting relies on the \emph{local} smoothness framework~\citep{Roughgarden15:Local,Thang19:Game}, as we explain in~\Cref{remark:local}; the key observation is that local smoothness per~\citet{Thang19:Game} can be associated with a \emph{linearized} notion of regret, at which point the analysis of \Cref{theorem:large_games-informal} readily carries over. Finally, we also expand our scope to Bayesian mechanisms, as we expound in the upcoming subsection.

Before we proceed, it is important to point out that, unsurprisingly, smoothness is not merely enough to guarantee convergence of $\ogd$; see~\Cref{prop:smoothness-counter}. It is instead crucial to additionally ensure that $\rho \approx 1$ in order to obtain interesting guarantees for the behavior of algorithms such as $\ogd$.

\subsection{Bayesian mechanisms}

Next, we leverage the connection between smoothness and convergence to Nash equilibria to extend the scope of our framework to Bayesian mechanisms. In particular, analogously to \Cref{def:smoothness-games}, \citet{Syrgkanis13:Composable} have introduced the notion of a \emph{smooth mechanism} (\Cref{def:smooth-mechcs}); detailed background on Bayesian mechanisms and smoothness in that realm is provided in \Cref{appendix:Bayesian}. In this context, we leverage a reduction of~\citet{Hartline15:No} from an incomplete-information to a complete-information mechanism to arrive at the following theorem. (The standard notion of a \emph{Bayes-Nash equilibrium} is analogous to \Cref{def:NE}, and is recalled in \Cref{def:bne} of \Cref{appendix:Bayesian}.)\looseness-1

\begin{restatable}{theorem}{Bayesian}
    \label{theorem:bayesian}
    Consider a Bayesian mechanism $\cM$ such that $\rho_\cM = 1$. Then, for any $\epsilon > 0$, $T = O(1/\epsilon^2)$ iterations of $\ogd$ suffice to obtain an $\epsilon$-Bayes-Nash equilibrium of $\cM$.
\end{restatable}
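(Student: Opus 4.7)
The plan is to leverage the \emph{agent-normal-form} reduction of \citet{Hartline15:No} from an incomplete-information mechanism to a complete-information game, and then to invoke the $O(1/\epsilon^2)$ iteration-complexity bound for $\ogd$ on smooth games with $\rho = 1$, as already established for \Cref{prop:zerosum-smooth}.

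First, I would apply the agent-normal-form transformation: given the Bayesian mechanism $\cM$, construct a complete-information multilinear game $\cG_\cM$ in which each (player, type) pair $(i, t_i)$ becomes a separate agent, whose strategy is the action that bidder $i$ would play conditional on realization $t_i$. The agent's utility is the corresponding type-conditional utility scaled by the prior probability of that type. Under this construction, Bayes-Nash equilibria of $\cM$ are in bijection with Nash equilibria of $\cG_\cM$, and the correspondence is quantitative: an $\epsilon$-Nash equilibrium of $\cG_\cM$ transports back to an $\epsilon$-Bayes-Nash equilibrium of $\cM$.

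Second, I would invoke the preservation of smoothness under this reduction, which is the structural property underpinning~\citet{Hartline15:No} and the composability results of~\citet{Syrgkanis13:Composable}: if $\cM$ is $(\lambda, \mu)$-smooth as a Bayesian mechanism per \Cref{def:smooth-mechcs}, then $\cG_\cM$ is $(\lambda, \mu)$-smooth per \Cref{def:smoothness-games}. In particular, the hypothesis $\rho_\cM = 1$ implies that $\cG_\cM$ is $(1+\mu, \mu)$-smooth for some $\mu > -1$, so that $\rho_{\cG_\cM} = 1$. At this point I would apply the iteration-complexity bound of~\Cref{prop:zerosum-smooth}, whose proof only uses the property $\rho = 1$ (and not the specific two-player constant-sum structure), yielding an $\epsilon$-Nash equilibrium of $\cG_\cM$ in $T = O(1/\epsilon^2)$ iterations of $\ogd$. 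Transporting back via the bijection concludes the argument.

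The main obstacle is the bookkeeping of the reduction: one must verify that the operator of $\cG_\cM$ inherits the Lipschitz continuity and multilinearity needed to invoke the $\ogd$ machinery, and that an additive $\epsilon$ best-response gap at the agent level translates precisely to the $\epsilon$-BNE notion on the Bayesian side (modulo the type-probability weighting baked into the agent utilities). A secondary technicality is that $\cG_\cM$ has dimension and effective player count scaling with the size of the type support; this growth is absorbed into the hidden constants but does not affect the $\epsilon$-scaling, which remains $1/\epsilon^2$ as claimed, since that rate is intrinsic to the $\rho = 1$ regime.
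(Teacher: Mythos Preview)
Your overall architecture matches the paper's: pass to the agent-form representation, show that the resulting complete-information game enjoys the ``sum of regrets is nonnegative'' property, and then invoke the $O(1/\epsilon^2)$ rate underlying \Cref{prop:zerosum-smooth} (equivalently, \Cref{theorem:large_games-formal} with $\epsilon_n = 0$), finally transporting back via the NE--BNE correspondence.

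The gap is in your bridging step. You assert that $(\lambda,\mu)$-smoothness of $\cM$ per \Cref{def:smooth-mechcs} yields $(\lambda,\mu)$-smoothness of $\cG_\cM$ per \Cref{def:smoothness-games}, and hence that $\rho_\cM = 1$ gives $(1+\mu,\mu)$-smoothness of $\cG_\cM$. Neither inference is justified as stated. First, the two smoothness definitions are structurally different: mechanism smoothness has $-\mu R(\vec{a})$ (the principal's revenue) on the right-hand side, whereas game smoothness has $-\mu\,\sw(\vec{a})$; the reduction of \citet{Hartline15:No} (\Cref{theorem:reduction}) yields a complete-information \emph{mechanism} that is $(\lambda,\mu)$-smooth in the mechanism sense, not a game that is $(\lambda,\mu)$-smooth in the sense of \Cref{def:smoothness-games}. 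Second, the two $\rho$'s are defined differently: $\rho_\cM = \lambda/\max\{1,\mu\}$, while $\rho_\cG = \lambda/(1+\mu)$. So $\rho_\cM = 1$ gives $\lambda = \max\{1,\mu\}$, which for any $\mu > 0$ yields $\rho_\cG = \max\{1,\mu\}/(1+\mu) < 1$; you cannot read off $(1+\mu,\mu)$-smoothness of $\cG_\cM$.

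What the paper does instead is bypass game smoothness entirely and derive the needed inequality directly. From $\lambda = \max\{1,\mu\}$, nonnegativity of utilities, and $\opt_\cM(\vv) \geq \sum_i u_i(\vec{a};v_i) + R(\vec{a})$, mechanism smoothness yields
\[
\sum_{i=1}^n u_i(\vxstar_i(\vv), \vec{a}_{-i}; v_i) \;\geq\; \lambda\bigl(\textstyle\sum_i u_i(\vec{a};v_i) + R(\vec{a})\bigr) - \mu R(\vec{a}) \;\geq\; \sum_{i=1}^n u_i(\vec{a}; v_i).
\]
Taking expectations over types and rewriting in agent-form utilities, this is precisely the statement that the sum of the agents' regrets (against $\vxstar$) is nonnegative---the Minty-type property that actually drives the $O(1/\epsilon^2)$ rate. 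Once you have this inequality, your remaining steps (and your caveats about Lipschitz constants and the $\epsilon$-BNE translation) are exactly right.
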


In particular, $\ogd$ above is executed on the so-called \emph{agent-form} representation of $\cM$ (\Cref{appendix:Bayesian}). Analogously to \Cref{theorem:large_games-informal}, the above theorem can also be extended in the large $n \gg 1$ under the assumption that $\rho_{n} \to 1$. It is worth noting that \Cref{theorem:large_games-informal} already can be applied to certain games of incomplete information (such as imperfect-information extensive-form games), but \Cref{theorem:bayesian} additionally makes a connection with the literature on smoothness in mechanism design, which facilitates characterizing the smoothness parameters.

\section{Improved welfare for no-regret dynamics}
\label{sec:improved-efficiency}

Roughgarden's seminal work~\citep{Roughgarden15:Intrinsic} established that no-regret learning algorithms always attain asymptotically at least $\rpoa$ fraction of the optimal social welfare (on average). This guarantee is satisfactory for many classes of games where $\rpoa$ is close to $1$ (emphatically those studied earlier in \Cref{sec:large_games}), but smoothness is certainly not a universal phenomenon: there are simple games in which the smoothness framework only provides vacuous guarantees; one such example is Shapley's game, discussed in \Cref{appendix:average}. As a result, one important question arising is whether it is possible to improve the efficiency bound predicted by smoothness when specific learning algorithms are in place, while at the same time still guaranteeing convergence to the set of \emph{coarse correlated equilibria (CCE)} (\Cref{def:CCE}). We stress that optimizing over the set of CCE is typically $\NP$-hard in succinct games~\citep{Papadimitriou08:Computing,Barman15:Finding}, making this question interesting also from a complexity-theoretic standpoint.

In this section, we show that it is indeed possible to obtain improved efficiency bounds under a generic condition, while at the same time guaranteeing the no-regret property for each player. A key ingredient in our improvement is the use of \emph{clairvoyant} mirror descent, an algorithm recently introduced by~\citet{Piliouras21:Optimal}. More precisely, we will instantiate that algorithm with (squared) Euclidean regularization, which can be defined as follows. Let $\prox_{\vx_i}(\vu_i) \defeq \arg\max_{\vx_i' \in \cX_i} \left\{ \langle \vx_i', \vu_i \rangle - \frac{1}{2} \|\vx_i - \vx_i'\|_2^2 \right\}$ be the induced \emph{prox operator}, where $\vx_i \in \cX_i$ and $\vu_i \in \R^{d_i}$. \emph{Clairvoyant gradient descent} (henceforth $\cgd$) at time $t \in \N$ outputs $\vx^{(t)} = \prox_{\vx^{(t-1)}} (\eta F(\vw^{(t)})) \defeq (\prox_{\vx_1^{(t-1)}} (\eta \vu_1(\vw_{-1}^{(t)})), \dots, \prox_{\vx_n^{(t-1)}} (\eta \vu_n(\vw_{-n}^{(t)})))$, where $\vw^{(t)}$ is any $\epsilon^{(t)}$-approximate fixed point of the map $\prod_{i=1}^n \cX_i \ni \vw \mapsto \prox_{\vx^{(t-1)}}(\eta F(\vw))$, and $\vx^{(0)} \in \cX$ is an arbitrary initialization. It turns out that for $\eta < 1/L$, this map is a \emph{contraction}~\citep{Farina22:Clairvoyant,Piliouras21:Optimal,Cevher23:Min}, thereby making approximate fixed points easy to compute. Furthermore, there is also an uncoupled implementation of $\cgd$~\citep{Piliouras21:Optimal}, making the algorithm compelling from a decentralized standpoint as well, but we will not dwell on this issue here. We are now ready to state the main result of this section.

\begin{restatable}{theorem}{cgdno}
    \label{theorem:efficiency-noregret-formal}
    Suppose that all players are updating their strategies using $\cgd$ with $\epsilon^{(t)} \leq \frac{\min_i \diam_{\cX_i}}{t^2}$ and learning rate $\eta = \frac{1}{2L}$ in a $(\lambda, \mu)$-smooth game $\cG$, where $L$ is the Lipschtz-continuity parameter of $F$. Then, for any $\epsilon_0 > 0$ and $T \geq \frac{64 L^2 \diam_{\cX}^4}{\epsilon_0^2}$ iterations,
    \begin{enumerate}
        \item\label{item:cce-full} the average correlated distribution of play is a $\frac{4 L \diam^2_{\cX}}{T}-\text{CCE}$;
        \item \label{item:sw-full} there is a time $t^\star \in \range{T}$ such that
        \begin{equation}
            \label{eq:sw-bound}
            \sw(\vx^{(t^\star)}) \geq \sup_{\epsilon \geq \epsilon_0} \min \left \{ \rho_{\cG}(\lambda, \mu) \cdot \opt_\cG + \frac{\epsilon^2}{16 (\mu+1) L \diam_{\cX}^2}, \poa_{\cG}^{\epsilon} \cdot \opt_\cG \right\}.
        \end{equation}
    \end{enumerate}
\end{restatable}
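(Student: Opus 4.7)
The plan is to exploit the implicit proximal structure of $\cgd$: its exact update $\vx^{(t)} = \prox_{\vx^{(t-1)}}(\eta F(\vx^{(t)}))$ yields, via first-order optimality, the three-point inequality
\begin{equation*}
    \eta \langle F(\vx^{(t)}), \vy - \vx^{(t)} \rangle \leq \langle \vx^{(t)} - \vx^{(t-1)}, \vy - \vx^{(t)} \rangle \quad \text{for every } \vy \in \cX,
\end{equation*}
which will do double duty: telescoping it in $t$ gives the regret bound driving item~\ref{item:cce-full}, while specializing to the welfare-optimal profile $\vxstar$ and combining with smoothness drives the improved welfare bound in item~\ref{item:sw-full}. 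Because $\cgd$ is run with an approximate fixed point $\vw^{(t)}$ in place of $\vx^{(t)}$ inside $F$, every application of the inequality incurs an additive error of order $\eta L \epsilon^{(t)} \diam_{\cX}$ (via $L$-Lipschitz continuity of $F$ together with $\|\vw^{(t)}-\vx^{(t)}\|_2 \leq \epsilon^{(t)}$). The hypothesis $\epsilon^{(t)} \leq \diam_{\cX}/t^2$ makes $\sum_t \epsilon^{(t)} = O(\diam_{\cX})$, so the accumulated error is a universal constant and gets absorbed into the stated coefficients.

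For item~\ref{item:cce-full}, I would restrict the three-point inequality to each player's coordinate (the prox decouples across $i$), telescope in $t$, and drop the negative squared-increment term to obtain $\reg_i^{(T)} \leq \frac{\diam_{\cX_i}^2}{2\eta} + O(\text{approx. error})$. With $\eta = 1/(2L)$, this is at most $2L\diam_{\cX}^2$, and since the CCE gap of the empirical correlated distribution is bounded by $\max_i \reg_i^{(T)}/T$, the $4 L \diam_{\cX}^2/T$ bound follows after accounting for the approximation error.

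For item~\ref{item:sw-full}, I would plug $\vy = \vxstar$ into the three-point inequality and sum over $t$; the right-hand side telescopes to $\frac{1}{2}\diam_{\cX}^2 - \frac{1}{2} S$ with $S \defeq \sum_t \|\vx^{(t)} - \vx^{(t-1)}\|_2^2$, while the left-hand side, after applying \Cref{def:smoothness-games} at $\vw^{(t)}$ and replacing $\vw^{(t)}$ by $\vx^{(t)}$ up to the summable error, is bounded below by $\eta \sum_t [\lambda\opt_\cG - (1+\mu)\sw(\vx^{(t)})]$. Rearranging delivers the intermediate bound
\begin{equation*}
    \frac{1}{T} \sum_{t=1}^T \sw(\vx^{(t)}) \geq \rho_\cG(\lambda,\mu)\,\opt_\cG + \frac{S - \diam_{\cX}^2 - O(\diam_{\cX}^2)}{2\eta T (1+\mu)},
\end{equation*}
showing that moving iterates strictly improve welfare beyond the smoothness floor. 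I then close via a dichotomy at each $\epsilon \geq \epsilon_0$. Either some $\vx^{(t)}$ is an $\epsilon$-Nash equilibrium, in which case $\sw(\vx^{(t)}) \geq \poa_{\cG}^{\epsilon} \opt_\cG$; or else every iterate satisfies $\negap(\vx^{(t)}) > \epsilon$, and testing the three-point inequality against arbitrary unilateral deviations yields $\negap(\vx^{(t)}) \leq \frac{\diam_{\cX}}{\eta}\|\vx^{(t)} - \vx^{(t-1)}\|_2 + O(L \epsilon^{(t)} \diam_{\cX})$, forcing $S \geq T \eta^2 \epsilon^2/\diam_{\cX}^2$ up to lower-order terms. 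Plugging this into the intermediate bound with $\eta = 1/(2L)$ and $T \geq 64 L^2 \diam_{\cX}^4/\epsilon_0^2$ absorbs the $-\diam_{\cX}^2$ loss into the new gain and yields average (hence max) welfare at least $\rho_\cG \opt_\cG + \frac{\epsilon^2}{16(1+\mu) L \diam_{\cX}^2}$. Taking $t^\star$ to maximize $\sw(\vx^{(t)})$ over $t \in \range{T}$ and optimizing over $\epsilon \geq \epsilon_0$ then produces the stated bound.

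The main obstacle is not conceptual but an accounting one: the quantity $\|\vx^{(t)} - \vx^{(t-1)}\|_2^2$ appears with \emph{opposite signs} in the welfare lower bound (as a gain) and in the per-iterate best-response gap bound (as a loss), so the dichotomy is the engine that converts the averaged welfare improvement into a per-iterate statement---not a cosmetic rearrangement. Verifying the exact constants ($4$, $16$, $64$) requires tight bookkeeping of the approximate fixed-point errors, and the summability $\sum_t \epsilon^{(t)} = O(1)$ is engineered precisely so that all error terms are lower-order and can be absorbed into twice the leading constant.
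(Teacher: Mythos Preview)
Your proposal is correct and follows essentially the same route as the paper. The paper packages the argument slightly differently---it first isolates the per-player bound $\reg_i^{(T)} \leq O(L\diam_{\cX_i}^2) - \frac{\eta}{4\diam_{\cX_i}^2}\sum_t(\brgap_i(\vx^{(t)}))^2$ as a standalone corollary (combining the telescoped prox inequality with the step-size-to-$\brgap$ lemma, exactly your two ingredients), and then runs the dichotomy on the \emph{summed} regret; you instead keep the regret bound in terms of $S=\sum_t\|\vx^{(t)}-\vx^{(t-1)}\|_2^2$ and invoke the $\negap$-to-step-size conversion only inside the dichotomy. These are interchangeable rearrangements of the same three estimates, and your choice of $t^\star=\arg\max_t \sw(\vx^{(t)})$ is precisely what is needed to make a single iterate witness the supremum over $\epsilon$.
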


This is the first result that establishes simultaneously these properties under a computationally efficient algorithm, improving a recent work~\citep{Anagnostides22:On} (see also~\citep{Gaitonde23:Budget,Lucier23:Autobidders} for related results) that failed to guarantee convergence to CCE. In particular, that earlier work was analyzing $\ogd$, and as it turns out, under a time-invariant learning rate $\eta$ it is not even known whether $\ogd$ ensures \emph{sublinear} per-player regret, let alone constant (as in \Cref{theorem:efficiency-noregret}). The basic ingredient to this improvement is a new property of $\cgd$, which we explain below. Before we sketch the proof, we note that \Cref{item:sw} above can be readily strengthened so that the improvement holds for the average welfare of most of the strategies, not just a single one (\Cref{remark:most-cgd}).\looseness-1

\paragraph{Proof sketch of \Cref{theorem:efficiency-noregret-formal}} The key step in the proof is showing (in \Cref{cor:reg-cgd}) that $\cgd$ satisfies the remarkable per-player regret bound $\reg_i^{(T)} \leq \alpha - \gamma \sum_{t=1}^T \left( \brgap_i(\vx^{(t)}) \right)^2$, where $\alpha > 0$ depends on the approximation error of the fixed points of $\cgd$---and can be made time-invariant with only an $O(\log T)$ per-iteration overhead---and $\gamma > 0$. To do this, we crucially rely on a certain property of the Euclidean regularizer (\Cref{lemma:br-cgd}), which we use in conjunction with the analysis of~\citet{Farina22:Clairvoyant} who extended the original argument of~\citet{Piliouras21:Optimal} beyond entropic regularization.\looseness-1

It is worth noting that the above per-player regret bound (\Cref{cor:reg-cgd}) implies that a player with nonnegative regret will be almost always approximately best responding, a rather singular occurrence in the context of learning in games; this has interesting implications and goes well-beyond what is currently known for $\ogd$. In particular, it is an open question whether~\Cref{theorem:efficiency-noregret-formal} holds under $\ogd$.

Next, we shall describe a concrete implication of~\Cref{theorem:efficiency-noregret-formal} under a generic condition. To do so, let us denote by $\poa_\cG^{\epsilon}$ the price of anarchy in $\cG$ with respect to the worst-case $\epsilon$-Nash equilibrium (so that $\poa_\cG^{0} \equiv \poa_\cG$).

\begin{condition}
    \label{condition:nontight}
Consider a game $\cG$ and some game-dependent parameter $C = C(\cG) > 0$. There exists an $\epsilon_0 > 0$ such that $\poa_{\cG}^{\epsilon_0} > \rpoa_{\cG} + \epsilon_0^2 C$.
\end{condition}

Naturally, it is always the case that $\poa_\cG \geq \rpoa_\cG$. Furthermore, $\rpoa_\cG$ is in general strictly smaller since it measures the worst-case welfare over a much larger set than $\poa_\cG$ (even broader than outcomes of no-regret learning; see \Cref{theorem:acce}); \Cref{fig:poa-vs-rpoa} in \Cref{appendix:poa-vs-rpoa} further corroborates this premise in a sequence of random normal-form games. Now assuming that $\poa_\cG > \rpoa_\cG$, \Cref{condition:nontight} is met if $\lim_{\epsilon \to 0} \poa_\cG^{\epsilon} = \poa_\cG$, a mild continuity condition (see, for example, the discussion by~\citet{Roughgarden14:Barriers}). 

\begin{corollary}
    \label{theorem:efficiency-noregret}
    Consider a $(\lambda, \mu)$-smooth game $\cG$ that satisfies \Cref{condition:nontight} under some $\epsilon_0 > 0$. Then, $\cgd$ after $T \geq \frac{64 L^2 \diam_{\cX}^4}{\epsilon_0^2}$ iterations and $\eta = \frac{1}{2L}$ satisfies the following:
    \begin{enumerate}[noitemsep]
        \item\label{item:CCE} the average correlated distribution of play is an $O\left(\frac{1}{T}\right)$-CCE;
        \item\label{item:sw} there is a time $t^\star \in \range{T}$ and $C'(\cG) > 0$ such that $\sw(\vx^{(t^\star)}) \geq$ $(\rho_\cG(\lambda, \mu) + \epsilon_0^2 C'(\cG)) \opt_\cG$.
    \end{enumerate}
\end{corollary}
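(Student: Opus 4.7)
The plan is to derive both items directly from \Cref{theorem:efficiency-noregret-formal}: item~1 is already essentially stated there, and item~2 is obtained by instantiating the welfare bound at the specific $\epsilon_0$ furnished by \Cref{condition:nontight} and then massaging the result into the desired form.

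For item~1, I would simply observe that \Cref{theorem:efficiency-noregret-formal} guarantees the average correlated distribution of play is a $\frac{4L\diam_{\cX}^2}{T}$-CCE. Since $L$ and $\diam_{\cX}$ are game-dependent constants (independent of $T$), this is $O(1/T)$, which is exactly the claim. No additional work is needed here beyond invoking the formal statement.

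For item~2, I would specialize the supremum in the welfare bound of \Cref{theorem:efficiency-noregret-formal} to $\epsilon = \epsilon_0$, the value supplied by \Cref{condition:nontight}, to obtain
\[
\sw(\vx^{(t^\star)}) \geq \min\left\{\rho_{\cG}(\lambda, \mu)\,\opt_{\cG} + \frac{\epsilon_0^2}{16(\mu+1)L\diam_{\cX}^2},\ \poa_{\cG}^{\epsilon_0} \cdot \opt_{\cG}\right\}.
\]
The first term can be rewritten as $(\rho_{\cG}(\lambda,\mu) + \epsilon_0^2 C_1)\opt_{\cG}$, where $C_1 \defeq \frac{1}{16(\mu+1)L \diam_{\cX}^2\, \opt_{\cG}} > 0$ is a game-dependent constant. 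For the second term, I would combine \Cref{condition:nontight} with the elementary inequality $\rpoa_{\cG} \geq \rho_{\cG}(\lambda,\mu)$ (which holds because $\rpoa_{\cG}$ is, by definition, the best ratio achievable via any valid smoothness argument, so no particular pair $(\lambda,\mu)$ can exceed it) to deduce $\poa_{\cG}^{\epsilon_0} > \rpoa_{\cG} + \epsilon_0^2 C(\cG) \geq \rho_{\cG}(\lambda,\mu) + \epsilon_0^2 C(\cG)$.

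Finally, I would set $C'(\cG) \defeq \min\{C_1,\, C(\cG)\} > 0$, so that both arguments of the minimum are at least $(\rho_{\cG}(\lambda,\mu) + \epsilon_0^2 C'(\cG))\opt_{\cG}$, yielding the claimed bound. There is no real obstacle: the corollary is essentially a one-line specialization of \Cref{theorem:efficiency-noregret-formal}, with the only bookkeeping being to fold the two separate lower bounds into a single coefficient on $\opt_{\cG}$. All the substantive technical content---the clairvoyant per-player regret bound $\reg_i^{(T)} \leq \alpha - \gamma \sum_t (\brgap_i(\vx^{(t)}))^2$ and its consequences for both the CCE gap and the welfare improvement beyond $\rpoa$---has already been established in the formal theorem.
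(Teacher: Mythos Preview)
Your proposal is correct and matches the paper's approach exactly: the paper simply notes that the corollary ``is an immediate consequence of \Cref{theorem:efficiency-noregret-formal} under \Cref{condition:nontight} since $\rpoa_\cG \geq \rho_\cG$,'' and your write-up spells out precisely that derivation (including the bookkeeping of defining $C'(\cG)=\min\{C_1,C(\cG)\}$, which the paper leaves implicit).
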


This new result interacts interestingly with the hardness of~\citet{Barman15:Finding} pertaining computing non-trivial CCE, which is discussed in~\Cref{sec:nontrivial}. Another fundamental question that arises from \Cref{theorem:efficiency-noregret-formal} is whether there exists a computationally efficient algorithm that determines a CCE with social welfare at least a $\poa$ fraction of the optimal welfare.\footnote{We clarify that \Cref{theorem:efficiency-noregret-formal} could have also been stated as follows: $\cgd$ outputs an approximate CCE with social welfare attaining the right-hand side of~\eqref{eq:sw-bound}; this is evident from the proof in~\Cref{appendix:cgd}.} In games where $\poa = \rpoa$ this is clearly possible; in contrast, while \Cref{theorem:efficiency-noregret-formal} improves over the smoothness bound, it does not always guarantee welfare up to $\poa$. This is a central question in light of the intractability of Nash equilibria~\citep{Daskalakis06:Complexity,Chen09:Settling}, which has indeed served as a primary critique to the literature quantifying the price of anarchy of Nash equilibria~\citep{Roughgarden15:Intrinsic}.

Another promising avenue to improving the welfare predicted by the smoothness framework revolves around eliminating certain strategy profiles by arguing that they are reached with negligible probability. For example, in \Cref{appendix:average} we identify an example where iteratively eliminating strictly dominated actions can lead to a significant improvement in the predictive power of the smoothness framework.
\section{Further related work}
\label{sec:related}

\paragraph{Large games} The study of non-cooperative games with many players (\emph{i.e.}, large games) has been a classical topic in economic theory~\citep{Schmeidler73:Equilibrium,Fudenberg88:Open,Mailath90:Asymmetric,Rustichini94:Convergence,Pesendorfer97:The,Feddersen97:Voting,Gradwohl14:Fault}, most recently revived in the context of \emph{mean-field games} (\emph{e.g.}, \citep{Lasry07:Mean,Gomes14:Mean,Muller22:Learning,Guan23:Zero,Campi22:Correlated,Perrin22:Generalization,Subramanian22:Decentralized,Perolat22:Scaling,Cui22:Learning,Lauriere22:Scalable}). Indeed, many traditional motivating scenarios in algorithmic game theory, including markets and Internet routing, often feature a large number of players in practice. In particular, it has emerged that, under certain conditions, equilibria in large games exhibit certain remarkable robustness and stability properties; see, for example, the recent survey of~\citet{Gradwohl21:Large}, as well as the older treatment of~\citet{Kalai04:Large} on the subject. Furthermore, mechanism design in large games, along with privacy guarantees, is explored in the work of~\citet{Kearns14:Mechanism} (see also~\citep{Kearns02:Efficient,Kearns15:Robust}).

\paragraph{Efficiency in large games} Of particular importance to our work, and specifically the precondition of \Cref{theorem:large_games-informal}, is the line of work uncovering the by now well-documented phenomenon in economics that large games exhibit, under certain relatively mild assumptions, fully efficient equilibria. Our framework additionally requires that the efficiency of equilibria can be derived via a smoothness argument, in the sense of~\citet{Roughgarden15:Intrinsic}; we stress again that efficiency alone is of little use when it comes to equilibrium computation (\Cref{prop:poa-not}). Fortunately, smoothness has emerged as the canonical paradigm for bounding the price of anarchy (\emph{e.g.}, see the survey of~\citet{Roughgarden17:The}), albeit with some notable exceptions~\citep{Feldman20:Simultaneous,Jin22:First}. In particular, \citet{Feldman16:The} quantify the price of anarchy in large games via the smoothness framework. They show that in a general combinatorial domain with simultaneous uniform-price auctions, it holds that $\rho_n \to 1$ with a rate of $1/\sqrt{n}$ as long as there is \emph{probabilistic demand}, meaning that every buyer abstains from the auction with a constant probability; the authors also argue that probabilistic demand is a natural assumption in practice. The approach of~\citet{Feldman16:The} builds on and significantly generalizes the classical work of \citet{Swinkels01:Efficiency}. An important takeaway is that \emph{probabilistic supply}~\citep{Swinkels01:Efficiency} is not merely enough to ensure full efficiency in the limit~\citep{Feldman16:The}. As such, a key contribution of~\citet{Feldman16:The} is to characterize how to grow the market in a way that makes equilibria efficient. Several other papers have studied the price of anarchy in large games~\citep{Lacker19:Rare,Cole16:Large,Colini17:Asymptotic,Colini20:Selfish,Colini19:Price,Carmona19:Price}. In particular, we highlight the work of~\citet{Cole16:Large} which, as~\citet{Feldman16:The}, relies on a smoothness argument to establish full efficiency in the limit with a rate of $1/\sqrt{n}$ in a Walrasian auction, while asymptotic full efficiency is also shown for Fisher markets under the gross substitutes condition. Further, \citet{Carmona19:Price} provide sufficient conditions under which equilibria are fully efficient in a class of mean-field games; understanding thus whether our framework has new implications in such games is an interesting direction for the future. We finally point out that many other papers have focused on learning in auctions and markets; see~\citep{Cheung21:Learning,Gao22:New,Tao20:Market,Branzei21:Exchange,Branzei21:Proportional,Deng22:Nash}, and references therein.
\section{Conclusions and future work}
\label{sec:conclusion}

In conclusion, we have furnished a new sufficient condition under which a family of no-regret learning algorithms, including optimistic gradient descent ($\ogd$), approaches (weak) Nash equilibria. Our criterion has a natural economic interpretation, being intricately connected with Roughgarden's smoothness framework, and captures other well-studied conditions such as the Minty property. We have also shown that \emph{clairvoyant} gradient descent attains an improved welfare bound compared to that predicted by the smoothness framework, while ensuring at the same time fast convergence to the set of CCE.

There are many promising directions for future work. First, we have seen that under the condition $\rho = 1$ there exists an algorithm that computes an $\epsilon$-NE in time $\poly(1/\epsilon)$, leading to a \emph{pseudo} polynomial-time algorithm (under natural game representations); is there an algorithm that instead runs in time $\poly(\log(1/\epsilon))$? One way to address this question would be to understand whether the condition $\rho = 1$ implies that coarse correlated equilibria \emph{collapse} to Nash equilibria, as is the case in two-player zero-sum games. 

\paragraph{Convergence to Nash equilibria via computational hardness?} Another promising approach for showing---rather indirectly---convergence to Nash equilibria in general games is by harnessing computational hardness results for the underlying welfare maximization problem. To be specific, we consider the following condition.

\begin{condition}
    \label{condition:uniform-hardness}
    Consider a multi-player $(\lambda, \mu)$-smooth game $\cG$ with $\rho_{\cC} \defeq \frac{\lambda}{1 + \mu}$ from a class of games $\cC$ with the polynomial expectation property~\citep{Papadimitriou08:Computing}. For any $\cG \in \cC$, computing a joint strategy profile $\vec{x} \in \prod_{i=1}^n \cX_i$ such that $\sw(\vx) \geq \rho_{\cC} \cdot \opt_\cG + 1/\poly(\cG)$ is $\NP$-hard, for any $\poly(\cG)$.
\end{condition}

Indeed, smoothness often circumscribes the social welfare of polynomial algorithms, such as combinatorial auctions under XOS valuations---in fact, unconditionally under polynomial communication; see~\citep[Theorem 1.4]{Dobzinski10:Approximation} and~\citep[Appendix A.7]{Syrgkanis13:Composable}. \Cref{condition:uniform-hardness} is strongly related to our hardness result concerning the complexity of making the sum of the players' regrets negative (\Cref{theorem:sumregs}). Indeed, the role of \Cref{condition:uniform-hardness} is that  (unless $\P = \NP$) a polynomially-bounded algorithm such as $\ogd$---which is efficiently implementable (for games with a polynomial number of actions) under the polynomial expectation property---will have the property that $\frac{1}{T} \sum_{i=1}^n \reg_i^{(T)} \geq - 1/\poly(\cG)$, for any $\cG \in \cC$ and $\poly(\cG)$, which in turn leads to the following conclusion.

\begin{theorem}
    \label{theorem:implicit-hardness}
    Consider a class of games $\cC$ satisfying \Cref{condition:uniform-hardness}. For any game $\cG \in \cC$ and $\epsilon = 1/\poly(\cG)$, there is a polynomial-time algorithm (namely $\ogd$) for computing an $\epsilon$-Nash equilibrium, unless $\P = \NP$.\looseness-1
\end{theorem}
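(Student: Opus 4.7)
The plan mirrors the argument underlying~\Cref{theorem:large_games-informal}, replacing the analytic precondition $\rho_n \to 1$ with a computational one supplied by~\Cref{condition:uniform-hardness}. First, I would run $\ogd$ on the given $\cG \in \cC$ with learning rate $\eta = \Theta(1/L)$ for a polynomial number of iterations $T$. By the polynomial expectation property, each utility feedback $\vu_i^{(t)}$, each welfare value $\sw(\vx^{(t)})$, and each best-response gap $\brgap_i(\vx^{(t)})$ can be computed in $\poly(\cG)$ time, so the whole procedure runs in $\poly(\cG)$ time overall.

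Next, I would import the smoothness-based welfare bound. Applying~\eqref{eq:smoothness-game} at each iterate $\vx^{(t)}$ with the optimal profile $\vxstar$, summing over $t \in \range{T}$, and substituting the definition of $\reg_i^{(T)}$ yields
\[
\frac{1}{T} \sum_{t=1}^{T} \sw(\vx^{(t)}) \geq \rho_\cC \cdot \opt_\cG - \frac{1}{T(1+\mu)} \sum_{i=1}^{n} \reg_i^{(T)}.
\]
Hence, if $\sum_{i=1}^{n} \reg_i^{(T)} \leq -T(1+\mu)/\poly(\cG)$, at least one iterate $\vx^{(t)}$ would have welfare strictly exceeding $\rho_\cC \cdot \opt_\cG + 1/\poly(\cG)$. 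Scanning all $T$ iterates and returning the best-welfare one is then a polynomial-time algorithm that violates~\Cref{condition:uniform-hardness}. Under $\P \neq \NP$, the contrapositive gives $\sum_{i=1}^{n} \reg_i^{(T)} \geq -T/\poly(\cG)$, for any polynomial permitted by the condition.

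Finally, I would feed this lower bound into the RVU-style machinery of~\citet{Anagnostides22:On} that drives~\Cref{theorem:large_games-informal}. The RVU upper bound for $\ogd$ consists of an $O(1/\eta)$ initialization term and a $-\Theta(\eta)$ quadratic term in the per-step displacements $\|\vx^{(t)} - \vx^{(t-1)}\|^2$, and under Lipschitz continuity of $F$ these displacements control $\sum_i (\brgap_i(\vx^{(t)}))^2$. Combining the RVU upper bound with the regret lower bound derived above yields $\sum_{t=1}^{T} \sum_{i=1}^{n} (\brgap_i(\vx^{(t)}))^2 \leq T/\poly(\cG)$, so by pigeonhole there exists $t^\star \in \range{T}$ with $\negap(\vx^{(t^\star)}) \leq 1/\poly(\cG) = \epsilon$, which is the advertised $\epsilon$-Nash equilibrium. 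Because $\brgap_i$ is itself computable in polynomial time, the coordinator can simply output the iterate with smallest $\negap$.

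The main obstacle is the interlocking bookkeeping of polynomials: the hardness polynomial in~\Cref{condition:uniform-hardness}, the horizon $T$, and the game-dependent quantities $L$, $\diam_\cX$, and $\opt_\cG$ must be calibrated so that the regret lower bound extracted from the NP-hardness step is tight enough for the RVU inequality to push the cumulative best-response gap below the $\epsilon^2 T$ threshold. Because \Cref{condition:uniform-hardness} permits an arbitrary polynomial, this reduces to a careful accounting of constants rather than a genuine analytic barrier; the conceptual content is entirely in the hardness-to-regret conversion of Step two.
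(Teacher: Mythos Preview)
Your proposal is correct and follows essentially the same approach as the paper. The paper does not give a separate formal proof of \Cref{theorem:implicit-hardness}; the argument is sketched in the paragraph immediately preceding the theorem statement, which observes that under \Cref{condition:uniform-hardness} (and $\P \neq \NP$) the $\ogd$ iterates must satisfy $\frac{1}{T}\sum_{i=1}^n \reg_i^{(T)} \geq -1/\poly(\cG)$, after which the RVU-based analysis of \Cref{theorem:large_games-formal} applies verbatim---precisely the three steps you describe.
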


By virtue of \Cref{theorem:efficiency-noregret}, the same conclusion applies even under the weaker condition that computing a CCE with welfare improving over the smoothness bound is hard; this is related to the hardness result of~\citet{Barman15:Finding}, discussed in~\Cref{sec:nontrivial}.

\section*{Acknowledgments}

We are grateful to anonymous reviewers and the area chair at NeurIPS for valuable feedback. We also thank Brendan Lucier for helpful pointers to the literature. This material is based on work supported by the Vannevar Bush Faculty
Fellowship ONR N00014-23-1-2876, National Science Foundation grants
RI-2312342 and RI-1901403, ARO award W911NF2210266, and NIH award
A240108S001.

\bibliography{dairefs}

\clearpage

\appendix

\section{Omitted proofs}
\label{appendix:proofs}

In this section, we provide the proofs and a number of results omitted from the main body.

\subsection{Proof of Theorem~\ref{theorem:large_games-informal}}

We commence with the proof of \Cref{theorem:large_games-informal}. Below, we give a more detailed version of the statement provided earlier in the main body. We first state an auxiliary lemma which will be useful for the proof, and can be extracted from earlier work~\citep{Anagnostides22:On}

\begin{lemma}
    \label{lemma:br}
    Suppose that player $i \in \range{n}$ updates its strategy using $\ogd$ with learning rate $\eta > 0$. Then, for any time $t \in \N$,
    \begin{equation*}
        \brgap_i(\vx^{(t)}) \leq \left(\frac{\diam_{\cX_i}}{\eta} + \| \vu_i^{(t)}\|_2 \right) \left( \|\vx_i^{(t)} - \hvx_i^{(t)} \|_2 + \|\vx_i^{(t)} - \hvx_i^{(t+1)} \|_2 \right).
    \end{equation*}
\end{lemma}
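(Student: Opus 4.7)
The plan is to exploit the first-order optimality condition characterizing the Euclidean projection that defines $\hvx_i^{(t+1)}$. Let $\vxstar_i \in \arg\max_{\vx_i' \in \cX_i} \langle \vx_i', \vu_i^{(t)} \rangle$, so that by definition $\brgap_i(\vx^{(t)}) = \langle \vu_i^{(t)}, \vxstar_i - \vx_i^{(t)} \rangle$. My first move is to insert the auxiliary iterate $\hvx_i^{(t+1)}$ and split:
$$\brgap_i(\vx^{(t)}) = \langle \vu_i^{(t)}, \vxstar_i - \hvx_i^{(t+1)} \rangle + \langle \vu_i^{(t)}, \hvx_i^{(t+1)} - \vx_i^{(t)} \rangle.$$
For the second term, a direct application of Cauchy--Schwarz gives $\|\vu_i^{(t)}\|_2 \, \|\hvx_i^{(t+1)} - \vx_i^{(t)}\|_2$, which is already a subexpression of the target bound.

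For the first term, I would invoke the variational inequality of the projection defining $\hvx_i^{(t+1)} = \proj_{\cX_i}(\hvx_i^{(t)} + \eta \vu_i^{(t)})$, namely $\langle \hvx_i^{(t)} + \eta \vu_i^{(t)} - \hvx_i^{(t+1)}, \, \vxstar_i - \hvx_i^{(t+1)} \rangle \leq 0$, which rearranges to
$$\langle \vu_i^{(t)}, \vxstar_i - \hvx_i^{(t+1)} \rangle \leq \tfrac{1}{\eta} \langle \hvx_i^{(t+1)} - \hvx_i^{(t)}, \vxstar_i - \hvx_i^{(t+1)} \rangle.$$
At this point I would decompose the left factor as $\hvx_i^{(t+1)} - \hvx_i^{(t)} = (\hvx_i^{(t+1)} - \vx_i^{(t)}) - (\hvx_i^{(t)} - \vx_i^{(t)})$, apply the triangle and Cauchy--Schwarz inequalities, and upper-bound $\|\vxstar_i - \hvx_i^{(t+1)}\|_2$ by $\diam_{\cX_i}$ since both points lie in $\cX_i$.

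Adding the two contributions yields a bound of the form $\frac{\diam_{\cX_i}}{\eta}\bigl(\|\vx_i^{(t)} - \hvx_i^{(t)}\|_2 + \|\vx_i^{(t)} - \hvx_i^{(t+1)}\|_2\bigr) + \|\vu_i^{(t)}\|_2 \, \|\vx_i^{(t)} - \hvx_i^{(t+1)}\|_2$, which is in fact slightly sharper than the statement of the lemma; relaxing by additionally adding the nonnegative term $\|\vu_i^{(t)}\|_2 \, \|\vx_i^{(t)} - \hvx_i^{(t)}\|_2$ recovers the factorized form appearing in the claim. There is no substantive obstacle: the whole argument is bookkeeping around the projection's variational characterization, and the only nuance is choosing $\hvx_i^{(t+1)}$---rather than $\hvx_i^{(t)}$---as the pivot through which to split $\vxstar_i - \vx_i^{(t)}$, which is precisely what makes the projection inequality apply directly with the correct sign.
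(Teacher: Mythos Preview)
Your argument is correct. The paper does not actually supply a proof of this lemma---it only states it and attributes it to earlier work~\citep{Anagnostides22:On}---so there is nothing to compare against directly; your derivation via the projection's variational inequality for $\hvx_i^{(t+1)}$, the triangle decomposition of $\hvx_i^{(t+1)} - \hvx_i^{(t)}$ through $\vx_i^{(t)}$, and Cauchy--Schwarz is exactly the standard route and matches in spirit the proof the paper does give for the analogous \Cref{lemma:br-cgd}.
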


In the sequel, we denote by $B_i \in \R_{> 0}$ any number such that $\|\vu_i(\vx_{-i}) \|_2 \leq B_i$, for any $\vx \in \prod_{i=1}^n \cX_i$. In the asymptotic notation below, we make the standard assumption that the parameters $B_i$ and $\diam_{\cX_i}$ do not depend on the number of players $n$.

\begin{theorem}[Precise version of~\Cref{theorem:large_games-informal}]
    \label{theorem:large_games-formal}
Consider an $n$-player $(\lambda, \mu)$-smooth game $\cG_n$ such that the game operator $F_n$ is $L_n$-Lipschitz continuous and $\lambda \geq (1 - \epsilon_n) (1 + \mu)$ ($\rho_n \geq 1 - \epsilon_n$). Suppose further that all players follow $\ogd$ with learning rate $\eta_n = \frac{1}{4 L_n}$ and any initialization $(\hvx_1^{(1)}, \dots, \hvx_n^{(1)}) \in \prod_{i=1}^n \cX_i$. If $\epsilon_n > 0$, then after $T \defeq \frac{\diam_{\cX}^2}{2 \eta_n \epsilon_n (1 + \mu) \opt_{\cG_n}}$ iterations there is a time $t^\star \in \range{T}$ such that
\begin{align}
    \sum_{i=1}^n \left( \brgap_i(\vx^{(t^\star)}) \right)^2 &\leq 32 \left( \frac{\max_{1 \leq i \leq n} \diam^2_{\cX_i}}{(\eta_n)^2} + \max_{1 \leq i \leq n} B_i^2 \right) \eta_n \epsilon_n (1 + \mu) \opt_{\cG_n} \label{align:des1} \\
    &= O_n \left(  L_n \opt_{\cG_n} \epsilon_n \right). \label{align:asym}
\end{align}
In particular, for $\delta \in \{\frac{1}{n}, \frac{2}{n}, \dots, 1 \}$, $t^\star$ constitutes a
\begin{equation*}
    \left( \frac{1}{\sqrt{\delta}} O_n \left( \sqrt{\frac{ L_n  \opt_{\cG_n} \epsilon_n }{n} } \right), \delta \right)-\text{weak Nash equilibrium}.
\end{equation*}
On the other hand, if $\epsilon_n = 0$, then for any $T \in \N$ there is a time $t^\star \in \range{T}$ such that
\begin{equation*}
    \sum_{i=1}^n \left( \brgap_i(\vx^{(t^\star)}) \right)^2 \leq \frac{8}{T} \left( \frac{\max_{1 \leq i \leq n} \diam^2_{\cX_i}}{(\eta_n)^2} + \max_{1 \leq i \leq n} B_i^2 \right) \diam_{\cX}^2.
\end{equation*}
\end{theorem}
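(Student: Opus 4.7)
The plan is to instantiate the proof sketch already outlined in the paper: combine a lower bound on the cumulative sum of players' regrets coming from the smoothness hypothesis with the standard RVU-type upper bound enjoyed by $\ogd$, and then convert the resulting control on the $\ogd$ iterate movements into a bound on the best-response gaps via Lemma \ref{lemma:br}. First, using the smoothness witness $\vxstar$ from Definition \ref{def:smoothness-games} as a per-player deviation target, the definition of regret together with \eqref{eq:smoothness-game} and $\sw(\vx^{(t)}) \leq \opt_{\cG_n}$ yield
\begin{equation*}
    \sum_{i=1}^n \reg_i^{(T)} \;\geq\; \sum_{t=1}^T \left( \sum_{i=1}^n u_i(\vxstar_i, \vx^{(t)}_{-i}) - \sw(\vx^{(t)}) \right) \;\geq\; (\lambda - (1+\mu))\, T \opt_{\cG_n} \;\geq\; - \epsilon_n (1+\mu)\, T \opt_{\cG_n},
\end{equation*}
where the final inequality is exactly the hypothesis $\lambda \geq (1-\epsilon_n)(1+\mu)$.

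Second, I would invoke the standard RVU inequality for $\ogd$, which bounds each player's regret by $\diam^2_{\cX_i}/(2\eta_n) + \eta_n \sum_{t=1}^T \|\vu_i^{(t)} - \vu_i^{(t-1)}\|_2^2$ minus an energy term proportional to $(1/\eta_n) \sum_{t=1}^T (\|\vx_i^{(t)} - \hvx_i^{(t)}\|_2^2 + \|\vx_i^{(t)} - \hvx_i^{(t+1)}\|_2^2)$. Summing over players and using that $F_n$ is $L_n$-Lipschitz, the aggregate variation satisfies $\sum_i \|\vu_i^{(t)} - \vu_i^{(t-1)}\|_2^2 = \|F_n(\vx^{(t)}) - F_n(\vx^{(t-1)})\|_2^2 \leq L_n^2 \|\vx^{(t)} - \vx^{(t-1)}\|_2^2$, which, after a triangle-inequality splitting through $\hvx^{(t)}$, is absorbed into the negative energy term provided $\eta_n$ is sufficiently small relative to $1/L_n$; the prescribed choice $\eta_n = 1/(4L_n)$ precisely makes this go through and leaves the clean inequality
\begin{equation*}
    \sum_{i=1}^n \reg_i^{(T)} \;\leq\; \frac{\diam^2_{\cX}}{2\eta_n} - c' \sum_{t=1}^T \sum_{i=1}^n \left( \|\vx_i^{(t)} - \hvx_i^{(t)}\|_2^2 + \|\vx_i^{(t)} - \hvx_i^{(t+1)}\|_2^2 \right),
\end{equation*}
for an absolute constant $c' > 0$.

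Chaining the upper and lower regret bounds transfers the $\epsilon_n$-slack from smoothness into a cumulative movement budget: the sum on the right-hand side above is bounded by $(c')^{-1} (\diam^2_\cX/(2\eta_n) + \epsilon_n (1+\mu) T \opt_{\cG_n})$. Lemma \ref{lemma:br} upper-bounds $(\brgap_i(\vx^{(t)}))^2$ by $2(\diam_{\cX_i}/\eta_n + B_i)^2$ times the corresponding per-$(t,i)$ movement terms, so averaging over $t \in \range{T}$ and applying pigeonhole produces an iterate $t^\star$ for which $\sum_i (\brgap_i(\vx^{(t^\star)}))^2$ is at most a constant times $(\max_i \diam^2_{\cX_i}/\eta_n^2 + \max_i B_i^2)$ times the displayed movement budget divided by $T$. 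Plugging in the prescribed $T = \diam^2_\cX / (2\eta_n \epsilon_n (1+\mu) \opt_{\cG_n})$ balances the two terms inside the parenthesis and yields exactly \eqref{align:des1}; substituting $\eta_n = 1/(4L_n)$ produces the asymptotic form \eqref{align:asym}. The degenerate case $\epsilon_n = 0$ is handled identically: the regret lower bound becomes $\sum_i \reg_i^{(T)} \geq 0$, so the movement budget collapses to $O(\diam^2_\cX/\eta_n)$, and averaging over $T$ yields the claimed $1/T$ decay.

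Finally, the weak Nash equilibrium conclusion follows from a Markov-type argument: if $\sum_i (\brgap_i(\vx^{(t^\star)}))^2 \leq Z$, then for any $\delta \in (0,1]$ at most $\delta n$ players can have $\brgap_i(\vx^{(t^\star)}) > \sqrt{Z/(\delta n)}$, furnishing a $(\sqrt{Z/(\delta n)}, \delta)$-weak Nash equilibrium; substituting the asymptotic value of $Z$ from \eqref{align:asym} gives the stated $\left(\delta^{-1/2} O_n(\sqrt{L_n \opt_{\cG_n} \epsilon_n/n}), \delta\right)$ guarantee. The main technical obstacle will be carefully tracking the constants through the RVU-based argument with $\eta_n = 1/(4L_n)$ so that the Lipschitz-induced cross terms are fully absorbed by the negative energy and the universal constant $32$ in \eqref{align:des1} emerges exactly as stated; the remainder of the argument is essentially bookkeeping.
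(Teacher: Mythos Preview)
Your proposal is correct and follows essentially the same approach as the paper's proof: the paper likewise chains the smoothness-based lower bound $\sum_i \reg_i^{(T)} \geq -\epsilon_n(1+\mu)\opt_{\cG_n} T$ with the summed RVU bound (the constant $c'$ is exactly $1/(4\eta_n)$ under $\eta_n = 1/(4L_n)$), invokes Lemma~\ref{lemma:br} to convert the resulting path-length budget into a cumulative $\sum_t \sum_i (\brgap_i)^2$ bound, and then applies pigeonhole followed by the same Markov-type counting for the weak Nash conclusion. The only work left, as you already note, is the arithmetic to recover the explicit constant $32$ and the $8/T$ factor in the $\epsilon_n = 0$ case.
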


Before we proceed with the proof, we note that the underlying assumption $\mu = O_n(1)$ in the asymptotic notation above is consistent with known smoothness bounds in large games~\citep{Feldman16:The}; we also refer to~\Cref{remark:rpoa-rho} for an important point regarding the range of the smoothness parameters.

\begin{proof}[Proof of \Cref{theorem:large_games-formal}]
    We will first translate the assumed property $\rho_n \geq 1 - \epsilon_n$ to a lower bound on the sum of the players' regrets. In particular, we have that the sum of the players' regrets $\sum_{i=1}^n \reg_i^{(T)}$ for any $T \in \N$ can be expressed as
    \begin{equation}
        \label{eq:reg-lb}
         \sum_{i=1}^n  \max_{\vxstar_i \in \cX_i} 
\left\{ \sum_{t=1}^T u_i(\vxstar_i, \vx^{(t)}_{-i}) \right\} - \sum_{t=1}^T \sw(\vx^{(t)}) \geq \lambda \opt_{\cG_n} T - (1 + \mu) \sum_{t=1}^T \sw(\vx^{(t)}),
    \end{equation}
    by \Cref{def:smoothness-games}, where $(\lambda, \mu)$ are the assumed smoothness parameter of $\cG_n$. Now, using the assumption that $\rho_n \geq 1 - \epsilon_n$, it follows from~\eqref{eq:reg-lb} that $\sum_{i=1}^n \reg_i^{(T)}$ is in turn lower bounded by
    \begin{equation}
        \label{eq:regs-lb}
         (1 - \epsilon_n) (1 + \mu) \opt_{\cG_n} T - (1 + \mu) 
\sum_{t=1}^T \sw(\vx^{(t)}) \geq - \epsilon_n (1 + \mu) \opt_{\cG_n} T,
    \end{equation}
    where we used the fact that $\opt_{\cG_n} \geq \sw(\vx^{(t)})$ (by definition of the optimal welfare). We next appropriately upper bound the sum of the players' regrets $\sum_{i=1}^n \reg_i^{(T)}$. Using a slight refinement of the RVU bound~\citep{Syrgkanis15:Fast,Rakhlin13:Optimization}, it follows that the regret $\reg_i^{(T)}$ can be upper bounded by
    \begin{equation*}
        \frac{1}{2 \eta_n}  \diam_{\cX_i}^2 + \eta_n \sum_{t=1}^T \|\vu_i^{(t)} - \vm_i^{(t)} \|_2^2 - \frac{1}{2\eta_n} \sum_{t=1}^T \left( \|\vx_i^{(t)} - \hvx_i^{(t)} \|_2^2 + \|\vx_i^{(t)} - \hvx_i^{(t+1)} \|_2^2 \right),
    \end{equation*}
    which in turn implies that the sum of the regrets $\sum_{i=1}^n \reg_i^{(T)}$ is upper bounded by
    \begin{equation}
        \label{eq:rvu-sum}
         \frac{1}{2 \eta_n} \sum_{i=1}^n \diam_{\cX_i}^2 + \eta_n \sum_{t=1}^T \|\vu^{(t)} - \vm^{(t)} \|_2^2 - \frac{1}{2\eta_n} \sum_{t=1}^T \left( \|\vx^{(t)} - \hvx^{(t)} \|_2^2 + \|\vx^{(t)} - \hvx^{(t+1)} \|_2^2 \right),
    \end{equation}
    where we have defined $\vu^{(t)} \defeq (\vu_1^{(t)}, \dots, \vu_n^{(t)})$ and $\vm^{(t)} \defeq (\vm_1^{(t)}, \dots, \vm_n^{(t)})$. Moreover, by the $L_n$-Lipschitz continuity of the game operator $F_{\cG_n}$, it follows that $\| \vu^{(t)} - \vm^{(t)} \|_2 = \| F_{\cG_n}(\vx^{(t)}) - F_{\cG_n}(\vx^{(t-1)}) \|_2 \leq L_n \|\vx^{(t)} - \vx^{(t-1)} \|_2$ for any $t \in \range{T}$, where we also used the fact that $\vm^{(t)} = \vu^{(t-1)}$ and $\vx^{(0)} \defeq \hvx^{(1)}$. Combining with~\eqref{eq:rvu-sum}, it follows that for any $\eta_n \leq \frac{1}{4L_n}$,
    \begin{equation}
        \label{eq:sum-regs-rvu}
        \sum_{i=1}^n \reg_i^T \leq \frac{1}{2 \eta_n} \sum_{i=1}^n \diam_{\cX_i}^2 - \frac{1}{4\eta_n} \sum_{t=1}^T \left( \|\vx^{(t)} - \hvx^{(t)} \|_2^2 + \|\vx^{(t)} - \hvx^{(t+1)} \|_2^2 \right).
    \end{equation}
    As a result, combining~\eqref{eq:regs-lb} and \eqref{eq:sum-regs-rvu} we have that
    \begin{equation}
        \label{eq:path-length}
        \sum_{t=1}^T \left( \|\vx^{(t)} - \hvx^{(t)} \|_2^2 + \|\vx^{(t)} - \hvx^{(t+1)} \|_2^2 \right) \leq 2 \sum_{i=1}^n \diam_{\cX_i}^2 + 4 \eta_n \epsilon_n (1 + \mu) \opt_{\cG_n} T.
    \end{equation}
    Next, applying~\Cref{lemma:br} yields that
    \begin{align}
        \sum_{t=1}^T \sum_{i=1}^n \left( \brgap_i(\vx^{(t)}) \right)^2 \!\!\!\!&\leq 4 \left( \frac{\max_i \diam^2_{\cX_i}}{\eta_n^2} + \max_i B_i^2 \right) \sum_{t=1}^T \left( \|\vx^{(t)} - \hvx^{(t)} \|^2 + \|\vx^{(t)} - \hvx^{(t+1)} \|^2 \right) \notag \\
        \!\!\!\!&\leq 4 \left( \frac{\max_i \diam^2_{\cX_i}}{\eta_n^2} + \max_i B_i^2 \right) \left( 2 \diam_{\cX}^2 + 4 \eta_n \epsilon_n (1 + \mu) \opt_{\cG_n} T \right),\label{align:cumbrgap}
    \end{align}
    where the first bound uses the inequality $(\alpha + \beta)^2 \leq 2 \alpha^2 + 2 \beta^2$, while the second bound follows from~\eqref{eq:path-length} by noting that $\diam_\cX^2 = \sum_{i=1}^n \diam_{\cX_i}^2$. As a result, we conclude that there exists $t^\star \in \range{T}$, namely $t^\star \defeq \arg\min_{1 \leq t \leq T} \sum_{i=1}^n \left( \brgap_i(\vx^{(t)}) \right)^2 $, such that
    \begin{equation}
        \label{eq:fin-bound}
        \sum_{i=1}^n \left( \brgap_i(\vx^{(t^\star)}) \right)^2 \leq 4 \left( \frac{\max_{1 \leq i \leq n} \diam^2_{\cX_i}}{\eta_n^2} + \max_{1 \leq i \leq n} B_i^2 \right) \left( \frac{2}{T} \diam_{\cX}^2 + 4 \eta_n \epsilon_n (1 + \mu) \opt_{\cG_n} \right).
    \end{equation}
    Now, if $\epsilon_n \neq 0$, taking $T \defeq \frac{\diam_{\cX}^2}{2 \eta_n \epsilon_n (1 + \mu) \opt_{\cG_n}} = \frac{2 L_n^2 \diam^2_{\cX}}{(1+\mu) \gamma_n}$ implies~\eqref{align:des1}. The claimed approximation guarantee in terms of weak Nash equilibria (per \Cref{def:NE}) in~\eqref{eq:approx-weak} can be derived from~\eqref{align:des1} as follows. We consider a parameter $\delta \in (0,1)$ so that $\delta n = \lfloor \delta n \rfloor$. Let also $\epsilon$ be the minimum among the $\delta n$ largest numbers from $\brgap_1(\vx^{(t^\star)}), \dots, \brgap_n(\vx^{(t^\star)})$. Then, $\sum_{i=1}^n \left( \brgap_i({\vx^{(t^\star)}}) \right)^2 \geq \delta \epsilon^2 n$, and by~\eqref{eq:fin-bound} it follows that
    \begin{equation*}
    \epsilon \leq \frac{1}{\sqrt{\delta}} \sqrt{  32 \left( \frac{\max_{1 \leq i \leq n} \diam^2_{\cX_i}}{\eta_n^2} + \max_{1 \leq i \leq n} B_i^2 \right) \frac{\eta_n \epsilon_n (1+\mu) \opt_{\cG_n} }{n}  }.    
    \end{equation*}
    This implies~\eqref{eq:approx-weak} since $\vx^{(t^\star)}$ is by definition an $(\epsilon, \delta)$-weak Nash equilibrium. Finally, if $\epsilon_n = 0$ the claimed bound follows directly from~\eqref{eq:fin-bound}.
\end{proof}

We note that the asymptotic notation in~\eqref{eq:approx-weak} and~\eqref{align:asym} applies in the regime $L_n \geq \Omega_n(1)$; this can be enforced by simply taking $L_n' \defeq \max\{ 1, L_n \}$.

\begin{remark}
    One can also introduce a variant of weak Nash equilibria which is instead parameterized by the average over the players' best response gaps $\frac{1}{n} \sum_{i=1}^n \brgap_i(\cdot)$. Inequality~\eqref{align:des1} implies that the average best response gap of $\vx^{(t^\star)}$ can be bounded by $O_n\left( \sqrt{\frac{L_n \opt_{\cG_n} \epsilon_n}{n}} \right)$.
\end{remark}

\begin{remark}
    \label{remark:mostiterates}
    While \Cref{theorem:large_games-formal} bounds the (weak) Nash equilibrium gap of a single iterate of the dynamics, \eqref{align:cumbrgap} implies that at least a $1-\gamma$ fraction of the iterates of the dynamics constitutes a $\left( \frac{1}{\sqrt{\delta \gamma}} O_n \left( \sqrt{\frac{L_n \opt_{\cG_n} \epsilon_n}{n} } \right), \delta \right)$-weak Nash equilibrium. So, by selecting a time index $t^\star \in \range{T}$ uniformly at random we obtain the desired guarantee with high probability, incurring only a small degradation in the solution quality.
\end{remark}

In particular, if $\epsilon_n$ approaches to $0$ with a sufficiently fast rate, \Cref{theorem:large_games-formal} also implies convergence to the more standard notion of Nash equilibrium (\emph{i.e.}, \Cref{def:NE} with $\delta \defeq 0$), as we state below. In particular, the following corollary can be derived directly from~\eqref{align:des1}.

\begin{corollary}
    \label{cor:largegames}
    In the setting of \Cref{theorem:large_games-formal}, if it additionally holds that $L_n \epsilon_n \opt_{\cG_n} \leq o_n(1)$, $\ogd$ yields an $o_n(1)$-Nash equilibrium after a sufficiently large number of iterations.
\end{corollary}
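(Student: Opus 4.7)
The plan is to derive the stronger guarantee directly from the cumulative best-response-gap bound \eqref{align:asym}. That inequality, applied at the time $t^\star \in \range{T}$ identified in the proof of \Cref{theorem:large_games-formal}, yields
$$\sum_{i=1}^n \left( \brgap_i(\vx^{(t^\star)}) \right)^2 = O_n\!\left( L_n \opt_{\cG_n} \epsilon_n \right) = O_n(\gamma_n).$$

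The key observation is that this \emph{cumulative} control on the best-response gaps across all players upgrades to \emph{uniform} control, which is precisely what the non-weak notion of Nash equilibrium (\Cref{def:NE} with $\delta = 0$) requires. Since each summand is nonnegative, we have the elementary estimate
$$\max_{1 \leq i \leq n} \brgap_i(\vx^{(t^\star)}) \leq \sqrt{\sum_{i=1}^n \left( \brgap_i(\vx^{(t^\star)}) \right)^2} = O_n\!\left( \sqrt{L_n \opt_{\cG_n} \epsilon_n} \right).$$
Under the hypothesis $L_n \opt_{\cG_n} \epsilon_n = o_n(1)$, the right-hand side is itself $o_n(1)$, so $\vx^{(t^\star)}$ is an $o_n(1)$-Nash equilibrium in the sense of \Cref{def:NE} with $\delta = 0$, as required.

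There is essentially no obstacle here: the corollary is simply the strengthening of the weak-NE conclusion of \Cref{theorem:large_games-formal} obtained by passing from $\delta > 0$ to $\delta = 0$, which the assumed decay rate of $\gamma_n = L_n \opt_{\cG_n} \epsilon_n$ is precisely designed to enable (informally, we can ``absorb'' the $1/\sqrt{\delta}$ blow-up in the weak-NE bound because we now have slack in the square-summed gap). The iteration count $T = \Theta(L_n^2/\gamma_n)$ prescribed by \Cref{theorem:large_games-formal} remains finite whenever $\epsilon_n > 0$, so the phrase ``sufficiently large number of iterations'' is unambiguous and the argument is complete.
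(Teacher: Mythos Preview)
Your argument is correct and matches the paper's approach exactly: the paper simply states that the corollary ``can be derived directly from~\eqref{align:des1}'', and your derivation---bounding $\max_i \brgap_i(\vx^{(t^\star)})$ by the square root of $\sum_i (\brgap_i(\vx^{(t^\star)}))^2 = O_n(\gamma_n)$---is precisely how one extracts the $\delta = 0$ conclusion from that bound.
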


\paragraph{On the Lipschitz constant} Next, we make some remarks regarding the dependence of the Lipschitz constant $L_n$ on the number of players in the context of general normal-form games. We first note that the Lipschitz constant $L_n$ of the underlying game operator can always be bounded as $O_n(n)$.

\begin{lemma}[Lipschitz constant in normal-form games]
    \label{claim:Lip-general}
For any $n$-player normal-form game $\cG$ with utilities bounded in $[-1, 1]$, the Lipschitz constant $L_n$ of the game operator satisfies $L_n \leq n \max_{1 \leq i \leq n} |\cA_i|$.
\end{lemma}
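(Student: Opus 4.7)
} The plan is to bound $\|F(\vx) - F(\vx')\|_2$ coordinate-by-coordinate, exploiting the multilinear structure of the utility in mixed strategies. Recall that in a normal-form game, the $a$-th entry of $\vu_i(\vx_{-i})$ is $u_i(a, \vx_{-i}) = \E_{\vec{a}_{-i} \sim \vx_{-i}}[u_i(a, \vec{a}_{-i})]$, which is multilinear in $(\vx_j)_{j \neq i}$. The first step is a standard telescoping decomposition: interpolating one player's mixed strategy at a time between $\vx_{-i}$ and $\vx'_{-i}$, I would write
\begin{equation*}
    u_i(a, \vx_{-i}) - u_i(a, \vx'_{-i}) = \sum_{j \neq i} \Bigl[ u_i(a, \vx'_{1:j-1}, \vx_{j:n}\!\setminus\! i) - u_i(a, \vx'_{1:j}, \vx_{j+1:n}\!\setminus\! i) \Bigr],
\end{equation*}
where each summand differs only in the $j$-th coordinate and so is a linear function of $\vx_j - \vx'_j$.

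Next, using $|u_i| \leq 1$, each summand is bounded in absolute value by $\|\vx_j - \vx'_j\|_1$, and by Cauchy--Schwarz $\|\vx_j - \vx'_j\|_1 \leq \sqrt{|\cA_j|}\,\|\vx_j - \vx'_j\|_2$. Setting $K \defeq \max_i |\cA_i|$ and combining the telescoping terms yields, for every $a \in \cA_i$,
\begin{equation*}
    |u_i(a, \vx_{-i}) - u_i(a, \vx'_{-i})| \leq \sqrt{K} \sum_{j \neq i} \|\vx_j - \vx'_j\|_2.
\end{equation*}

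Then I would pass from $\ell_\infty$ to $\ell_2$ on the outer index $a$, which costs a factor $\sqrt{|\cA_i|} \leq \sqrt{K}$, giving $\|\vu_i(\vx_{-i}) - \vu_i(\vx'_{-i})\|_2 \leq K \sum_{j \neq i} \|\vx_j - \vx'_j\|_2$. Applying Cauchy--Schwarz once more over $j \neq i$ produces $\sum_{j \neq i} \|\vx_j - \vx'_j\|_2 \leq \sqrt{n-1}\,\|\vx - \vx'\|_2$. Finally, squaring and summing over $i \in \range{n}$,
\begin{equation*}
    \|F(\vx) - F(\vx')\|_2^2 = \sum_{i=1}^n \|\vu_i(\vx_{-i}) - \vu_i(\vx'_{-i})\|_2^2 \leq n K^2 (n-1) \|\vx - \vx'\|_2^2,
\end{equation*}
which delivers $L_n \leq K \sqrt{n(n-1)} \leq n K = n \max_i |\cA_i|$, as claimed.

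The argument is entirely mechanical once the telescoping is set up; the only mildly delicate step is the bookkeeping of the $\ell_2$/$\ell_1$ conversions so that the final constant is exactly $n \max_i |\cA_i|$ and not something worse like $n^{3/2}$ or $n K^{3/2}$. I expect no conceptual obstacle: the bound is tight up to constants, as can be checked on, \emph{e.g.}, games where all players share a common structure forcing each Cauchy--Schwarz step to be essentially saturated.
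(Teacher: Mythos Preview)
Your proposal is correct and follows essentially the same route as the paper: bound each entry of $\vu_i(\vx_{-i}) - \vu_i(\vx'_{-i})$ by $\sum_{j \neq i} \|\vx_j - \vx'_j\|_1$, convert between $\ell_1$/$\ell_2$/$\ell_\infty$ to get $\|\vu_i(\vx_{-i}) - \vu_i(\vx'_{-i})\|_2 \leq K \sum_{j \neq i} \|\vx_j - \vx'_j\|_2$, and then Cauchy--Schwarz/Jensen over $i$. The only cosmetic difference is that the paper packages your telescoping step as the standard bound on the total variation distance between product distributions (citing~\citet{Hoeffding63:Probability}), whereas you write it out explicitly.
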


\begin{proof}
    For any player $i \in \range{n}$ and any joint strategies $\vx, \vx' \in \prod_{i=1}^n \Delta(\cA_i)$, we have
    \begin{align}
        \| \vu_i(\vx_{-i}) - \vu_i(\vx_{-i}') \|_2 &\leq \sqrt{|\cA_i|} \| \vu_i(\vx_{-i}) - \vu_i(\vx'_{-i}) \|_\infty \label{align:infty-2} \\ 
        &\leq \sqrt{|\cA_i|} \left\| \sum_{\vec{a}_{-i} \in \cA_{-i}} u_i(\cdot, \vec{a}_{-i}) \left( \prod_{i' \neq i} \vx_{i'}[a_{i'}] - \prod_{i' \neq i} \vx'
        _{i'}[a_{i'}] \right)  \right\|_1 \label{align:utdef} \\
        &\leq \sqrt{|\cA_i|} \left\| \sum_{\vec{a}_{-i} \in \cA_{-i}} \left( \prod_{i' \neq i} \vx_{i'}[a_{i'}] - \prod_{i' \neq i} \vx'
        _{i'}[a_{i'}] \right)  \right\|_1 \label{align:triangle} \\
        &\leq \sqrt{|\cA_i|} \sum_{i' \neq i} \| \vx_{i'} - \vx_{i'}' \|_1 \leq \max_{1 \leq i \leq n} |\cA_i| \sum_{i' \neq i} \| \vx_{i'} - \vx_{i'}' \|_2 ,\label{align:tv}
    \end{align}
    where~\eqref{align:infty-2} uses the equivalence between the $\ell_2$ and the $\ell_\infty$ norm; \eqref{align:utdef} follows from the definition of the (expected) utility: $u_i(a_i, \vec{x}_{-i}) \defeq \E_{\vec{a}_{-i} \sim \vec{x}_{-i}} [u_i(\vec{a})] = \sum_{\vec{a}_{-i} \in \cA_{-i} } u_i(\vec{a}) \prod_{i' \neq i} \vx_{i'}[a_{i'}]$, for any $a_i \in \cA_i$; \eqref{align:triangle} uses the triangle inequality along with the assumption that $| u_i(\vec{a}) | \leq 1$; and~\eqref{align:tv} follows from the well-known fact that the total variation distance between two product distributions can be upper bounded by the sum of the total variation distance of each individual component~\citep{Hoeffding63:Probability}, as well as the equivalence between the $\ell_1$ and the $\ell_2$ norm. As a result, continuing from~\eqref{align:tv}, we have
    \begin{align*}
        \| F(\vx) - F(\vx') \|_2^2 = \sum_{i=1}^n \| \vu_i(\vx_{-i}) - \vu_i(\vx'_{-i}) \|_2^2 &\leq \left( \max_{1 \leq i \leq n} |\cA_i| \right)^2 \sum_{i=1}^n \left( \sum_{i' \neq i} \| \vx_{i'} - \vx_{i'}' \|_2 \right)^2 \\
        &\leq  n^2 \left( \max_{1 \leq i \leq n} |\cA_i| \right)^2 \|\vx - \vx'\|_2^2,
    \end{align*}
    where the last inequality used that, by Jensen's inequality, $\left( \sum_{i' \neq i} \| \vx_{i'} - \vx_{i'}' \|_2 \right)^2 \leq (n-1) \sum_{i' \neq i} \|\vx_{i'} - \vx_{i'}'\|_2^2 \leq n \|\vx - \vx'\|_2^2$. This concludes the proof.
\end{proof}

\paragraph{Graphical games} As a byproduct of the proof above, we next point out an important refinement of~\Cref{claim:Lip-general} concerning \emph{graphical games}. In particular, here we assume that the utility of each player $i \in \range{n}$ only depends on the actions of players belonging to its \emph{neighborhood} $\cN_i \subseteq \range{n} \setminus \{i\}$. We further assume that $|\cN_i| \leq \Delta$ for any player $i \in \range{n}$, where $\Delta \in \N$ will be referred to as the \emph{degree} of the graphical game. To conclude the definition, we also posit that each player $i \in \range{n}$ can only affect the utilities of at most $\Delta$ other players: $| i' \in \range{n} : i \in \cN_{i'} | \leq \Delta$. 

\begin{lemma}[Lipschitz constant in graphical games]
    \label{claim:Lip-graphical}
    For any $n$-player graphical game with degree $\Delta \in \N$ and utilities bounded in $[-1, 1]$, the Lipschitz constant $L_n$ of the game operator satisfies $L_n \leq \Delta \max_{1 \leq i \leq n} |\cA_i|$. 
\end{lemma}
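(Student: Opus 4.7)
The plan is to adapt the proof of \Cref{claim:Lip-general} and harvest the two sparsity properties of a degree-$\Delta$ graphical game: (i) for each player $i$, the vector $\vu_i(\vx_{-i})$ depends only on coordinates $\vx_{i'}$ with $i' \in \cN_i$, and (ii) each player $i'$ appears in at most $\Delta$ neighborhoods $\cN_i$. Property (i) sharpens the per-player bound, and property (ii) handles the aggregation across $i$ so that the factor $n$ in \Cref{claim:Lip-general} is replaced by $\Delta$.

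First I would repeat the chain of inequalities leading to~\eqref{align:tv} verbatim, except that because $u_i(\vec{a})$ depends only on the coordinates $a_{i'}$ for $i' \in \cN_i$, the telescoping decomposition in~\eqref{align:triangle} of the product-distribution difference $\prod_{i' \neq i} \vx_{i'}[a_{i'}] - \prod_{i' \neq i} \vx'_{i'}[a_{i'}]$ need only be applied over the neighborhood $\cN_i$. Invoking once more the standard fact that the total variation distance between two product distributions is at most the sum of the coordinate-wise total variation distances, this yields
\[
\| \vu_i(\vx_{-i}) - \vu_i(\vx'_{-i}) \|_2 \leq \max_{1 \leq i \leq n} |\cA_i| \sum_{i' \in \cN_i} \| \vx_{i'} - \vx'_{i'} \|_2.
\]

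Next I would square, sum over $i$, and apply Cauchy--Schwarz (Jensen) to get
\[
\| F(\vx) - F(\vx') \|_2^2 \leq \Bigl(\max_{1 \leq i \leq n} |\cA_i|\Bigr)^2 \sum_{i=1}^n |\cN_i| \sum_{i' \in \cN_i} \| \vx_{i'} - \vx'_{i'} \|_2^2 \leq \Delta \Bigl(\max_{1 \leq i \leq n} |\cA_i|\Bigr)^2 \sum_{i=1}^n \sum_{i' \in \cN_i} \| \vx_{i'} - \vx'_{i'} \|_2^2,
\]
using $|\cN_i| \leq \Delta$. Swapping the order of summation in the double sum, each index $i'$ appears at most $\Delta$ times by the second part of the graphical-game definition (each player influences at most $\Delta$ others), so the double sum is bounded by $\Delta \sum_{i'=1}^n \| \vx_{i'} - \vx'_{i'} \|_2^2 = \Delta \| \vx - \vx' \|_2^2$. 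Combining yields $\| F(\vx) - F(\vx') \|_2^2 \leq \Delta^2 (\max_i |\cA_i|)^2 \| \vx - \vx' \|_2^2$, i.e.\ $L_n \leq \Delta \max_i |\cA_i|$.

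There is no genuine obstacle here; the only subtle point is making sure both sparsity properties are used in the right place---neighborhood-sparsity of $\vu_i$ for the inner sum, and the bounded in-degree (each player affects at most $\Delta$ others) for the outer sum. Missing the latter would leave a residual factor of $n$ rather than $\Delta$. Everything else is a direct transcription of the steps in~\eqref{align:infty-2}--\eqref{align:tv} from the proof of \Cref{claim:Lip-general}.
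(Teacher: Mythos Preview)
Your proposal is correct and matches the paper's own proof essentially step for step: the same per-player bound restricted to the neighborhood $\cN_i$, the same Cauchy--Schwarz/Jensen step introducing the first factor of $\Delta$ via $|\cN_i| \leq \Delta$, and the same swap of summation using the bounded in-degree to produce the second factor of $\Delta$.
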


\begin{proof}
    For any player $i \in \range{n}$ and any joint strategies $\vx, \vx' \in \prod_{i=1}^n \Delta(\cA_i)$, we have
    \begin{align*}
        \| \vu_i(\vx_{-i}) - \vu_i(\vx_{-i}') \|_2 &\leq \sqrt{|\cA_i|} \| \vu_i(\vx_{-i}) - \vu_i(\vx'_{-i}) \|_\infty \\ 
        &\leq \sqrt{|\cA_i|} \left\| \sum_{ \prod_{i' \in \cN_i} \cA_{i'} } \left( \prod_{i' \in \cN_i} \vx_{i'}[a_{i'}] - \prod_{i' \in \cN_i} \vx'
        _{i'}[a_{i'}] \right)  \right\|_1 \\
        &\leq \sqrt{|\cA_i|} \sum_{i' \in \cN_i} \| \vx_{i'} - \vx_{i'}' \|_1 \leq \max_{1 \leq i \leq n} |\cA_i| \sum_{i' \in \cN_i} \| \vx_{i'} - \vx_{i'}' \|_2,
    \end{align*}
    where the derivation above is similar to that in~\Cref{claim:Lip-general}. As a result,
    \begin{align*}
        \| F(\vx) - F(\vx') \|_2^2 &= \sum_{i=1}^n \| \vu_i(\vx_{-i}) - \vu_i(\vx'_{-i}) \|_2^2 \\ &\leq \left( \max_{1 \leq i \leq n} |\cA_i| \right)^2 \sum_{i=1}^n \left( \sum_{i' \in \cN_i} \| \vx_{i'} - \vx_{i'}' \|_2 \right)^2 \\
        &\leq  \Delta \left( \max_{1 \leq i \leq n} |\cA_i| \right)^2 \sum_{i=1}^n \sum_{i' \in \cN_i} \|\vx_{i'} - \vx'_{i'} \|_2^2 \\
        &= \Delta \left( \max_{1 \leq i \leq n} |\cA_i| \right)^2 \sum_{i=1}^n \|\vx_i - \vx'_{i} \|_2^2 | i' \in \range{n}: i \in \cN_{i'}| \\
        &\leq \Delta^2 \left( \max_{1 \leq i \leq n} |\cA_i| \right)^2 \|\vx - \vx'\|_2^2.
    \end{align*}
\end{proof}

\paragraph{Games with vanishing sensitivity} We next focus on a different subclass of normal-form games; namely, games with small sensitivity (per \Cref{def:sensitivity}). Taking a step back, in graphical games every player can only be impacted by (and have an impact to) a small number of other players. Instead, here we consider games where a player's utility can be impacted by all other players, but only by a small amount.

\begin{lemma}[Lipschitz constant in games with vanishing sensitivity]
    \label{lemma:sens}
    For any $n$-player normal-form game with sensitivity $\epsilon_n \in \R_{> 0}$, the Lipschitz constant $L_n$ of the game operator satisfies $L_n \leq \epsilon_n n \max_{1 \leq i \leq n} |\cA_i|$.
\end{lemma}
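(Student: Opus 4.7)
The plan is to follow the template of \Cref{claim:Lip-general} and \Cref{claim:Lip-graphical}: bound $\|\vu_i(\vx_{-i}) - \vu_i(\vx'_{-i})\|_2$ uniformly in $i$, then square, sum over players, and apply Jensen's inequality. Concretely, if I can establish the per-player bound $\|\vu_i(\vx_{-i}) - \vu_i(\vx'_{-i})\|_2 \leq \epsilon_n \max_{i''} |\cA_{i''}| \sum_{i' \neq i} \|\vx_{i'} - \vx'_{i'}\|_2$, then the closing calculation of \Cref{claim:Lip-general} carries over verbatim and yields $\|F(\vx) - F(\vx')\|_2 \leq \epsilon_n \, n \, \max_{i''} |\cA_{i''}| \, \|\vx - \vx'\|_2$, which is the claim. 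The only new content is harvesting an extra factor of $\epsilon_n$ at the per-player level, replacing the crude bound $|u_i| \leq 1$ that drove \Cref{claim:Lip-general}.

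The key step is therefore extracting that factor of $\epsilon_n$ from \Cref{def:sensitivity}. I would use a hybrid coupling argument between the product measures $p \defeq \prod_{i' \neq i} \vx_{i'}$ and $q \defeq \prod_{i' \neq i} \vx'_{i'}$ on $\cA_{-i}$. Fix an arbitrary ordering $i_1, \dots, i_{n-1}$ of $\range{n} \setminus \{i\}$ and let $r_k$ be the product measure whose first $k$ marginals come from $\vx'$ and whose remaining marginals come from $\vx$, so that $r_0 = p$ and $r_{n-1} = q$. Telescoping gives
\[
\vu_i(\vx_{-i})[a_i] - \vu_i(\vx'_{-i})[a_i] = \sum_{k=1}^{n-1} \sum_{\vec{a}_{-i} \in \cA_{-i}} u_i(a_i, \vec{a}_{-i}) \bigl( r_{k-1}(\vec{a}_{-i}) - r_k(\vec{a}_{-i}) \bigr).
\]
For each $k$, the measures $r_{k-1}$ and $r_k$ agree on all coordinates except $a_{i_k}$, so the inner expression factorizes into $\sum_{a_{i_k}} u_i(a_i, \vec{a}_{-i}) (\vx_{i_k}(a_{i_k}) - \vx'_{i_k}(a_{i_k}))$ integrated against a probability measure on the remaining coordinates. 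Since $\vx_{i_k} - \vx'_{i_k}$ sums to zero, I may freely subtract from $u_i(a_i, \vec{a}_{-i})$ its value at the profile obtained by replacing $a_{i_k}$ with an arbitrary reference action; \Cref{def:sensitivity} then bounds the resulting integrand by $\epsilon_n$ pointwise. Summing the hybrid bound yields $\|\vu_i(\vx_{-i}) - \vu_i(\vx'_{-i})\|_\infty \leq \epsilon_n \sum_{i' \neq i} \|\vx_{i'} - \vx'_{i'}\|_1$.

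From here the rest of the argument mirrors \Cref{claim:Lip-general}: convert $\ell_\infty$ to $\ell_2$ on the utility side via a factor of $\sqrt{|\cA_i|}$, convert $\ell_1$ to $\ell_2$ on the strategy side via a factor of $\sqrt{|\cA_{i'}|}$, bound both by $\sqrt{\max_{i''}|\cA_{i''}|}$, then square, sum over $i$, and apply Jensen's inequality to absorb $\bigl(\sum_{i' \neq i} \|\vx_{i'} - \vx'_{i'}\|_2\bigr)^2 \leq (n-1) \sum_{i' \neq i} \|\vx_{i'} - \vx'_{i'}\|_2^2$, which contributes the final factor of $n$.

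The only delicate step is the hybrid coupling. The naive alternative of subtracting a single global reference utility $u_i(a_i, \vec{a}_{-i}^\star)$ and invoking sensitivity along an $(n-1)$-step path to conclude $|u_i(a_i, \vec{a}_{-i}) - u_i(a_i, \vec{a}_{-i}^\star)| \leq (n-1)\epsilon_n$ would cost an additional factor of $n$ and yield only $L_n \leq \epsilon_n n^2 \max_i |\cA_i|$. Spending the constant-subtraction trick one marginal at a time is exactly what extracts one factor of $\epsilon_n$ per hybrid step and matches the stated bound.
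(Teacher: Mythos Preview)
Your proposal is correct and follows essentially the same approach as the paper: both telescope through hybrid profiles that switch one marginal at a time from $\vx_{-i}$ to $\vx'_{-i}$, use the zero-sum property of $\vx_{i_k}-\vx'_{i_k}$ to subtract a reference action, and invoke \Cref{def:sensitivity} to extract the factor $\epsilon_n$ at each step, arriving at $\|\vu_i(\vx_{-i}) - \vu_i(\vx'_{-i})\|_\infty \leq \epsilon_n \sum_{i' \neq i} \|\vx_{i'} - \vx'_{i'}\|_1$ before finishing exactly as in \Cref{claim:Lip-general}. The only cosmetic difference is that the paper first isolates the single-coordinate bound $|u_i(\vx_{i'}, \vx_{-i'}) - u_i(\vx'_{i'}, \vx_{-i'})| \leq \epsilon_n \|\vx_{i'} - \vx'_{i'}\|_1$ as a standalone step and then telescopes, whereas you embed that step inside the hybrid decomposition.
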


\begin{proof}
    Let $i \in \range{n}$. For $\vx_1, \vx_1' \in \Delta(\cA_1)$ and $\vec{a}_{-1} \in \cA_{-1}$ (restricting on Player $1$ here is without any loss, and only made for the sake of simplicity in the notation), it follows that
    \begin{align}
        u_i(\vx_1, \vec{a}_{-1}) - u_i(\vx'_1, \vec{a}_{-1}) &= \sum_{a_1 \in \cA_1} \vx_1[a_1] u_i(a_1, \vec{a}_{-1}) - \sum_{a_1 \in \cA_1} \vx'_1[a_1] u_i(a_1, \vec{a}_{-1}) \notag \\
        &= \sum_{a_1 \in \cA_1 \setminus \{a_1'\}} (\vx_1[a_1] - \vx_1'[a_1]) u_i(a_1, \cdot)
        + (\vx_1[a_1'] - \vx_1'[a_1']) u_i(a_1', \cdot) \notag \\
        &= \sum_{ a_1 \in \cA_1 \setminus \{ a_1' \} } (\vx_1[a_1] - \vx'_1[a_1] )(u_i(a_1, \vec{a}_{-1}) - u_i(a_1', \vec{a}_{-1})),\label{align:secondord}
    \end{align}
    for some $a_1' \in \cA_1$, where we used that $(\vx_1[a_1'] - \vx'_1[a_1']) = \sum_{a_1 \in \cA_1 \setminus \{a_1'\}} (\vx_1'[a_1] - \vx_1[a_1])$ since $\vx_1, \vx_1' \in \Delta(\cA_1)$. Continuing from~\eqref{align:secondord}, we have
    \begin{align*}
        | u_i(\vx_1, \vec{a}_{-1}) - u_i(\vx'_1, \vec{a}_{-1}) | &\leq \sum_{ a_1 \in \cA_1 \setminus \{ a_1' \} } |\vx_1[a_1] - \vx'_1[a_1]|| u_i(a_1, \vec{a}_{-1}) - u_i(a_1', \vec{a}_{-1})| \\
        &\leq \epsilon_n \sum_{ a_1 \in \cA_1 \setminus \{ a_1' \} } |\vx_1[a_1] - \vx'_1[a_1]| \leq \epsilon_n \|\vx_1 - \vx_1' \|_1,
    \end{align*}
    where $\epsilon_n$ is the sensitivity of the game and $i \neq 1$. Similar reasoning yields that $| u_i(\vx) - u_i(\vx_1', \vx_{-1}) | \leq \epsilon_n \|\vx_1 - \vx_1' \|_1$, for any $\vx_{-1} \in \prod_{i=2}^n \Delta(\cA_i)$ and $i \neq 1$. As a result, we have
    \begin{align*}
        \| \vu_1(\vx_{-1}) - \vu_1(\vx_{-1}') \|_\infty &\leq | u_1(\cdot, \vx_2, \vx_3, \dots, \vx_n) - u_1(\cdot, \vx_2', \vx_3, \dots, \vx_n)| \\
        &+ | u_1(\cdot, \vx_2', \vx_3, \dots, \vx_n) - u_1(\cdot, \vx_2', \vx_3', \dots, \vx_n)| \\
        &+ \dots \\
        &+ | u_1(\cdot, \vx_2', \vx_3', \dots, \vx'_{n-1} \vx_n) - u_1(\cdot, \vx_2', \vx_3', \dots, \vx'_n)| \\
        &\leq \epsilon_n \sum_{i \neq 1} \|\vx_{i} - \vx'_{i} \|_1.
    \end{align*}
    By symmetry, we have shown that $\| \vu_i(\vx_{-i}) - \vu_i(\vx_{-i}') \|_\infty \leq \epsilon_n \sum_{i' \neq i} \|\vx_{i'} - \vx'_{i'} \|_1$, and the rest of the argument is identical to that of~\Cref{claim:Lip-general}.
\end{proof}

\paragraph{Polymatrix games} The lemma above enables capturing other interesting classes of games under the premise that $L_n = O_n(1)$, such as \emph{polymatrix games}. Specifically, a polymatrix game is defined with respect to an underlying directed graph $G = (\range{n}, E)$, so that each node of $G$ is uniquely associated with the corresponding player. For every edge $(i, i') \in E$ there is a matrix $\mat{A}_{i, i'} \in \R^{\cA_i \times \cA_{i'}}$ so that the utility of Player $i$ is defined as
\begin{equation}
    \label{eq:polymatrix}
    u_i(\vx) \defeq \frac{1}{n} \sum_{i' \in \cN_i} \vx_i^\top \mat{A}_{i, i'} \vx_{i'},
\end{equation}
where $\cN_i \defeq \{ i' \in \range{n} : (i, i') \in E \}$. Unlike the class of graphical games we saw earlier, here we do not restrict the size of the neighborhoods. For this reason, we have normalized each player's utility by a $1/n$ factor in~\eqref{eq:polymatrix}, for otherwise the utilities are not guaranteed to be bounded (independent of the number of players $n$). It is then easy to see that polymatrix games are subject to \Cref{lemma:sens} under $\epsilon$ defined in~\Cref{def:sensitivity}, which here satisfies $\epsilon = O(1/n)$. Below, we provide a simpler and sharper argument compared to~\Cref{lemma:sens}.

\begin{lemma}
    \label{lemma:polymatrix}
    For any $n$-player polymatrix game, the Lipschitz constant $L_n$ of the game operator satisfies $L_n \leq \max_{(i, i') \in E} \| \mat{A}_{i, i'} \|_2$, where $\| \cdot \|_2$ here denotes the spectral norm.
\end{lemma}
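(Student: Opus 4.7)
The plan is to compute the game operator explicitly from the utility definition and bound the spectral norm of the resulting block map by combining the triangle inequality, Cauchy--Schwarz, and a double-counting argument over the edge set $E$. Since $u_i(\vx) = \frac{1}{n}\sum_{i' \in \cN_i} \vx_i^\top \mat{A}_{i,i'} \vx_{i'}$ is linear in $\vx_i$, we have $\vu_i(\vx_{-i}) = \frac{1}{n}\sum_{i' \in \cN_i} \mat{A}_{i,i'} \vx_{i'}$, so for any $\vx, \vx' \in \cX$ the $i$th block of $F(\vx) - F(\vx')$ equals $\frac{1}{n}\sum_{i' \in \cN_i} \mat{A}_{i,i'}(\vx_{i'} - \vx'_{i'})$. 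Writing $M \defeq \max_{(i,i') \in E} \|\mat{A}_{i,i'}\|_2$, the triangle inequality and the definition of the spectral norm give $\|\vu_i(\vx_{-i}) - \vu_i(\vx'_{-i})\|_2 \leq \frac{M}{n}\sum_{i' \in \cN_i}\|\vx_{i'} - \vx'_{i'}\|_2$.

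Next, I would apply Cauchy--Schwarz to convert the sum-of-norms into a sum of squared norms while only paying a factor of $|\cN_i| \leq n$, obtaining
\begin{equation*}
\|\vu_i(\vx_{-i}) - \vu_i(\vx'_{-i})\|_2^2 \leq \frac{M^2}{n}\sum_{i' \in \cN_i}\|\vx_{i'} - \vx'_{i'}\|_2^2.
\end{equation*}
Summing the $i$th inequality over $i \in \range{n}$ then gives $\|F(\vx) - F(\vx')\|_2^2 \leq \frac{M^2}{n}\sum_{i=1}^n\sum_{i' \in \cN_i}\|\vx_{i'} - \vx'_{i'}\|_2^2$. At this point I would swap the order of summation and observe that each term $\|\vx_{i'} - \vx'_{i'}\|_2^2$ is counted once for every $i$ with $i' \in \cN_i$, and there are at most $n$ such $i$'s trivially; so the double sum is bounded by $n\|\vx - \vx'\|_2^2$. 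Combining yields $\|F(\vx) - F(\vx')\|_2^2 \leq M^2\|\vx - \vx'\|_2^2$, which is the claim.

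There is no real obstacle here: the $1/n$ normalization in the utility is what absorbs both the Cauchy--Schwarz factor of $|\cN_i|$ and the in-degree factor from the double-counting, so without any assumption on the degrees of $G$ one recovers a constant (in $n$) Lipschitz bound, strictly sharpening \Cref{lemma:sens} for this class. The only mild subtlety is in the double-counting step, where one should note that the bound $|\{i : i' \in \cN_i\}| \leq n$ is used rather than any degree assumption, which is precisely why the statement does not require bounded neighborhoods.
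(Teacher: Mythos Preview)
Your proof is correct and follows essentially the same approach as the paper. The per-player bound $\|\vu_i(\vx_{-i}) - \vu_i(\vx'_{-i})\|_2 \leq \frac{M}{n}\sum_{i' \in \cN_i}\|\vx_{i'} - \vx'_{i'}\|_2$ is exactly what the paper derives, and your Cauchy--Schwarz plus double-counting step (using the trivial bounds $|\cN_i| \leq n$ and $|\{i : i' \in \cN_i\}| \leq n$) is precisely the natural way to complete the argument that the paper leaves implicit with ``and the claim follows,'' mirroring the structure of the graphical-games proof with $n$ in place of $\Delta$.
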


\begin{proof}
    By definition of the utility functions in~\eqref{eq:polymatrix}, we have that for any player $i \in \range{n}$ and $\vx, \vx' \in \prod_{i=1}^n \Delta(\cA_i)$,
    \begin{align*}
        \| \vu_i(\vx_{-i}) - \vu_i(\vx'_{-i}) \|_2 &\leq \frac{1}{n} \left\|  \sum_{i' \in \cN_i} \mat{A}_{i, i'} (\vx_{i'} - \vx_{i'}') \right\|_2 \\
        &\leq \frac{1}{n} \sum_{i' \in \cN_i} \|\mat{A}_{i, i'} (\vx_{i'} - \vx_{i'}') \|_2 \\
        &\leq \frac{1}{n} \sum_{i' \in \cN_i} \|\mat{A}_{i, i'} \|_2 \|\vx_{i'} - \vx_{i'}'\|_2 \\
        &\leq \frac{\max_{(i, i') \in E} \| \mat{A}_{i, i'} \|_2}{n} \sum_{i' \in \cN_i} \|\vx_{i'} - \vx'_{i'} \|_2,
    \end{align*}
    and the claim follows.
\end{proof}

\subsection{Games with vanishing sensitivity}

We next establish an immediate but important implication of our framework concerning the class of games with vanishing strategic sensitivity (per \Cref{def:sensitivity}); the statement of the corollary is recalled below.

\boundedinfl*

\begin{proof}
    The proof follows that of~\Cref{theorem:large_games-formal}. In particular, we can lower bound the sum of the players' regrets as follows.
    \begin{align*}
        \sum_{i=1}^n \reg_i^{(T)} &= \sum_{i=1}^n \left(  \max_{\vxstar_i \in \Delta(\cA_i)} 
\left\{ \sum_{t=1}^T u_i(\vxstar_i, \vx^{(t)}_{-i}) \right\} - \sum_{t=1}^T u_i(\vx^{(t)}) \right) \\
        &\geq - T n \epsilon_n,
    \end{align*}
    where we used the assumption in the statement. As a result, following the proof of~\Cref{theorem:large_games-formal}, we conclude that for learning rate $\eta_n \defeq \frac{1}{4L_n}$ there exists a time $t^\star \in \range{T}$ such that
    \begin{align*}
        \sum_{i=1}^n \left( \brgap_i(\vx^{(t^\star)}) \right)^2 &\leq 4 \left( \frac{\max_{1 \leq i \leq n} \diam^2_{\cX_i}}{\eta_n^2} + \max_{1 \leq i \leq n} B_i^2 \right) \left( \frac{2 \diam_{\cX}^2}{T} + 4 \eta_n n \epsilon_n \right),
    \end{align*}
    Thus, setting
    \begin{equation*}
        T \defeq \frac{\diam^2_\cX}{2 \eta_n n \epsilon_n} = \frac{2 \diam^2_\cX L_n}{n \epsilon_n} \leq 2 \diam_\cX^2 \left( \max_{1 \leq i \leq n} |\cA_i| \right) = O_n(n),
    \end{equation*}
    by~\Cref{lemma:sens}, yields that
    \begin{equation*}
        \sum_{i=1}^n \left( \brgap_i(\vx^{(t^\star)}) \right)^2 \leq O_n (L_n n \epsilon_n) = O_n(n^2 \epsilon_n^2),
    \end{equation*}
    where the asymptotic notation above applies in the regime $L_n \geq \Omega_n(1)$. The statement thus follows.
\end{proof}

Next, we point out two interesting extensions of our approach based on developments following the original smoothness framework of~\citet{Roughgarden15:Intrinsic}.

\subsection{Refined smoothness}
\label{remark:primaldual}
    
Throughout this paper we have relied on the original smoothness framework~\citep{Roughgarden15:Intrinsic} to derive much of our results. Nevertheless, it is worth noting that there are certain extensions documented in the literature that can make our predictions sharper. To be more precise, focusing on normal-form games, \citet{Nadav10:The} noted that the original smoothness framework is in fact able to provide efficiency bounds for a set of equilibria even more permissive than CCE, which they refer to as \emph{average} CCE with respect to a welfare-maximizing strategy $\vx^\star \in \prod_{i=1}^n \Delta(\cA_i)$ (ACCE$^*$); see~\Cref{def:ACCE*}. In particular, the latter relaxation only requires a guarantee for the average regret over the players, $\frac{1}{n} \sum_{i=1}^n \reg_i^{(T)}(\vx_i^\star)$, while CCE instead requires a guarantee for the maximum regret $\max_{1 \leq i \leq n} \reg_i^{(T)}$. Based on this observation, \citet{Nadav10:The} developed a primal-dual framework in order to provide refined guarantees (beyond the original smoothness framework) for CCE$^*$, which boils down to solving the following linear program.
\begin{equation}
    \label{eq:LP}
\begin{array}{ll@{}ll}
\text{maximize}  &\rho &\\
\text{subject to}& \sum_{i=1}^n z_i \left( u_i(a_i^\star, \vec{a}_{-i}) - u_i(\vec{a}) \right) \geq \rho \sw(\vec{a}^\star) - \sw(\vec{a}), \vec{a} \in \prod_{i=1}^n \cA_i, \\                                                &z_i \geq 0.
\end{array}
\end{equation}
Here, it is assumed that the underlying game is in normal form, with $\prod_{i=1}^n \cA_i$ being the set of joint action profiles, and $\vec{a}^\star \in \prod_{i=1}^n \cA_i$ a welfare-maximizing joint action. (The formulation of~\citet{Nadav10:The} is based on cost-minimization games, but of course their LP can be cast directly for payoff-maximization games in the form of~\eqref{eq:LP}.) The above LP is a simple transformation of the fractional-linear program corresponding to optimizing $\rho \defeq \frac{\lambda}{1 + \mu}$ subject to the smoothness constraints of \Cref{def:smoothness-games}, but with the additional flexibility of optimizing over a vector $\vec{z} \in \R^n_{\geq 0}$. In particular, when restricting $z_1 = z_2 = \dots = z_n$, this exactly recovers the LP for computing the robust price of anarchy---given in~\eqref{eq:rpoa-LP}. Nevertheless, \citet{Nadav10:The} pointed out that the additional flexibility of~\eqref{eq:LP} can have an arbitrarily large impact on the predicted efficiency~\citep[Remark 1]{Nadav10:The}.

The sharper definition of smoothness introduced by~\citet{Nadav10:The} can be leveraged in our framework as follows. We can define a generalized robust price of anarchy as the solution of the LP~\eqref{eq:LP}. If this quantity approaches to $1$ with a sufficiently fast rate (in terms of the number of players), we can extend \Cref{theorem:large_games-formal} by analyzing the \emph{weighted} sum of the players' regrets $\sum_{i=1}^n z_i \reg_i^{(T)}$. It is easy to see that the argument of \Cref{theorem:large_games-informal} readily carries over under the condition that the ratio $z_i/z_{i'}$ is bounded for any $i, i' \in \range{n}$, as well as the assumption that each $z_i$ is bounded away from $0$. In contrast, if there exists a pair $i, i' \in \range{n}$ for which the ratio $z_i/z_{i'}$ is unbounded, our current techniques do not appear to be applicable; this is related to the well-known difficulty of deriving so-called RVU bounds for the maximum of the players' regret, instead of their sum~\citep{Syrgkanis15:Fast}. In other words, our current techniques can sharpen \Cref{theorem:large_games-informal} by considering solutions of~\eqref{eq:LP} under the additional (linear) constraint that the variables $\{z_i\}_{i=1}^n$ have bounded pairwise ratio. Given that the more general framework of~\citet{Nadav10:The} leads to improved bounds, we expect that this modification should have applications in our setting as well. In particular, we point out that the $2$-player example of~\citet{Nadav10:The} that separates ACCE$^*$ from CCE$^*$ indeed satisfies $z_1/z_2 \approx 2.3$~\citep[Proposition 2]{Nadav10:The}, so the separation manifests itself even when the pairwise ratio is bounded by an absolute constant.

We finally refer to the works of~\citet{Thang19:Game} and \citet{Kulkarni15:Robust} for a different primal-dual take on smoothness.

\subsection{Local smoothness}
\label{remark:local}
    
Another interesting extension of our techniques can be obtained using the framework of \emph{local smoothness}, first introduced by~\citet{Roughgarden15:Local} and recently refined by~\citet{Thang19:Game} in the context of splittable congestion games. In what follows, we follow the treatment of~\citet{Thang19:Game} as it fits our framework. In this context, \citet{Thang19:Game} introduced the notion of a $(\lambda, \mu)$-\emph{dual-smooth} differentiable utility function $u : \R_{\geq 0} \to \R_{\geq 0}$ with the property that for every vectors $\Vec{z} = (z_1, \dots, z_n) \in \R^n_{\geq 0}$ and $\vec{z}' = (z_1', \dots, z_n') \in \R^n_{\geq 0}$,
    \begin{equation}
        \label{eq:localsmooth}
        z' u(z) + \sum_{i=1}^n z_i (z_i' - z_i) u'(z) \geq \lambda z' u(z') - \mu z u(z),
    \end{equation}
    where $z \defeq \sum_{i=1}^n z_i$ and $z' \defeq \sum_{i=1}^n z_i'$. We also note that $u'$ above denotes the derivative of $u$. (Again, \eqref{eq:localsmooth} has been translated to utility-maximization games compared to the original formulation of~\citet{Thang19:Game}.) Furthermore, a splittable congestion game is called $(\lambda, \mu)$-dual-smooth~\citep{Thang19:Game} if for every resource $e \in E$ the utility function $u_e : \R_{\geq 0} \to \R_{\geq 0}$ is $(\lambda, \mu)$-dual-smooth in the sense of~\eqref{eq:localsmooth}. The importance of Nguyen's extension is that it can be applied to \emph{coarse} correlated equilibria, as opposed to the original definition of~\citet{Roughgarden15:Local} that was applicable to correlated equilibria; this distinction is incidentally important for our framework. 
    
    We will connect this concept of local smoothness with the linearization of the players' regrets. In particular, the utility of a player $i \in \range{n}$ under a joint strategy $\vx$ is defined as $u_i(\vx) \defeq \sum_{e \in E} \vx_i[e] u_e(\sum_{i=1}^n \vx_i[e])$. Then,
    \begin{equation*}
        \frac{\partial u_i(\vx) }{\partial \vx_i[e]} = u_e\left( \sum_{i=1}^n \vx_i[e] \right) + \vx_i[e] u_e'\left(\sum_{i=1}^n \vx_i[e]\right).
    \end{equation*}
    As a result, for any $t \in \N$,
    \begin{align*}
        \sum_{i=1}^n \langle \vxstar_i - \vx_i^{(t)}, \nabla_{\vx_i} u_i(\vx^{(t)}) \rangle &= \sum_{e \in E} u_e(\vx[e]) \vxstar[e] - \sum_{e \in E} u_e(\vx[e]) \vx[e] \\
        &+ \sum_{e \in E} \sum_{i=1}^n \vx_i[e] (\vxstar_i[e] - \vx_i[e]) u'_e(\vx[e]),
    \end{align*}
    with the understanding that $\vx[e] \defeq \sum_{i=1}^n \vx_i[e]$ and $\vxstar[e] \defeq \sum_{i=1}^n \vxstar_i[e]$ for any $e \in E$. Now let us define $\reg_{\mathcal{L}, i}^{(T)}(\vxstar_i) \defeq \sum_{t=1}^T \langle \vxstar_i - \vx_i^{(t)}, \nabla_{\vx_i} u_i(\vx) \rangle$. Combining the last displayed equality with local smoothness~\eqref{eq:localsmooth}, we get that
    \begin{align}
         \sum_{i=1}^n \reg_{\mathcal{L}, i}^{(T)}(\vxstar_i) &\geq \lambda \sum_{e \in E} \vx^\star[e] u_e(\vx^\star[e]) - (\mu + 1) \sum_{e \in E} \vx[e] u_e(\vx[e]) \notag \\
         &= \lambda \sw(\vxstar) - (\mu + 1) \sw(\vx).\label{eq:local-sum}
    \end{align}
    Consequently, if we define a local smoothness bound, $\rho_{\mathcal{L}} \defeq \frac{\lambda}{1 + \mu}$ with $(\lambda, \mu)$ being subject to~\eqref{eq:localsmooth}, \Cref{theorem:large_games-informal} can be readily extended in the regime $\rho_{\mathcal{L}} \to 1$ based on~\eqref{eq:local-sum}.

\subsection{Considerations based on PoA}
\label{appendix:poa}

It is natural to ask if the conclusion of~\Cref{theorem:large_games-informal} can be relaxed to $\poa \to 1$. Here, we point out that such an assumption does not suffice to obtain interesting guarantees. In particular, we next note~\Cref{prop:poa-not}, stated earlier in the main body, which is an immediate byproduct of the hardness of computing Nash equilibria in constant-sum games.

\poahard*

\begin{proof}
    \citet{Chen09:Settling} showed that computing a $(1/\poly(\max_{i} |\cA_i|))$-Nash equilibrium in a general-sum two-player game in normal form is $\PPAD$-hard. As a result, the same applies in constant-sum $3$-player games by suitably incorporating an additional player who has no strategic impact on the game. Further, in any constant-sum game $\cG$ it clearly holds that $\poa_\cG = 1$, concluding the proof.
\end{proof}

\begin{remark}
    \label{remark:rpoa-rho}
    The positive result established in~\Cref{theorem:large_games-formal} requires that $\rho \approx 1$ under a bounded pair of smoothness parameters $(\lambda, \mu)$. One may wonder whether similar conclusions apply even when the smoothness parameters are unbounded. In particular, the \emph{robust price of anarchy ($\rpoa$)} can be defined as the solution to the following linear program.
    \begin{equation}
    \label{eq:rpoa-LP}
\begin{array}{ll@{}ll}
\text{maximize}  &\rho &\\
\text{subject to}& z \sum_{i=1}^n \left( u_i(a_i^\star, \vec{a}_{-i}) - u_i(\vec{a}) \right) \geq \rho \sw(\vec{a}^\star) - \sw(\vec{a}), \vec{a} \in \prod_{i=1}^n \cA_i, \\                                                &z \geq 0.
\end{array}
\end{equation}
Above $\vec{a}^\star = (a_1^\star, \dots, a_n^\star) \in \prod_{i=1}^n \cA_i$ is a welfare-maximizing action profile, which can be assumed to be unique for the purposes of our discussion here. In this context, can we extend the conclusion of~\Cref{theorem:large_games-formal} under the assumption that $\rpoa_{\cG_n} \to 1$? In general, that is not possible. Indeed, in any constant-sum game one can make the LP~\eqref{eq:rpoa-LP} feasible by taking $z = 0$ and $\rho = 1$; that is, $\rpoa_{\cG_n} = 1$ for any constant-sum game $\cG$. Thus, assuming merely that $\rpoa_\cG \to 1$ is not enough to obtain interesting guarantees for computing Nash equilibria (in accordance with~\Cref{prop:poa-not}). In other words, our underlying assumption that the smoothness parameters are bounded is necessary. To further explain this discrepancy, we note that a constant-sum game is generally not $(1 + \mu, \mu)$-smooth (for a finite $\mu > -1$), for otherwise any constant-sum game would satisfy the Minty property (\Cref{prop:zerosum-smooth}), which would in turn contradict~\Cref{prop:poa-not}. The case where $z = 0$ in any optimal solution of~\eqref{eq:rpoa-LP} essentially corresponds to a pathological manifestation of smoothness in which the underlying parameters are unbounded. We stress again that throughout this paper, when we say that a game is $(\lambda, \mu)$-smooth we, of course, posit that those smoothness parameters are finite. Further, when we say that $\rho_\cG = 1$ in a game $\cG$, we accept that there are finite smoothness parameters associated with $\rho_\cG$. With this convention, we reiterate that there are games in which $\rpoa_\cG \neq \rho_\cG$.  
\end{remark}

Next, we provide a concrete example based on Shapley's game~\citep{Shapley64:Some} in which $\ogd$ (unconditionally) fails to converge to an $\epsilon$-Nash equilibrium, for a constant $\epsilon > 0$. 

\begin{example}
    \label{prop:Shapley}
    We consider a $3$-player game $\cG$ defined as follows. We let
        \begin{equation}
        \mat{A} \defeq 
        \begin{bmatrix}
        1 & 1 & 2 \\
        2 & 1 & 1 \\
        1  & 2 & 1
        \end{bmatrix},
        \mat{B} \defeq
        \begin{bmatrix}
            1 & 2 & 1 \\
            1 & 1 & 2 \\
            2 & 1 & 1 
        \end{bmatrix}.
    \end{equation}
    Then, we set $u_1(\vx_1, \vx_2, \vx_3) \defeq \vx_1^\top \mat{A} \vx_2$, $u_2(\vx_1, \vx_2, \vx_3) \defeq \vx_1^\top \mat{B} \vx_2$, and $u_3(\vx_1, \vx_2, \vx_3) \defeq 3 - \vx_1^\top \mat{A} \vx_2 - \vx_1^\top \mat{B} \vx_2$. Thus, for any joint strategy $(\vx_1, \vx_2, \vx_3)$ it holds that $\sw(\vx_1, \vx_2, \vx_3) = 3$, implying that $\poa_\cG = 1$. Further, $\cG$ is in normal form. Now, through a numerical simulation we draw the following conclusion: Although $\poa_\cG = 1$, for $T \gg 1$ $\ogd$ with learning rate $\eta \defeq 0.01$ and initialization $(\hvx^{(1)}_1, \hvx_2^{(1)}, \cdot) \defeq ((0.5, 0.25, 0.25), (0.25, 0.5, 0.25), \cdot)$ satisfies $\negap(\vx^{(t)}) \geq 0.1875$ for any $t \in \range{T}$, where $(\vx^{(t)})_{t \geq 1}$ is the sequence of iterates produced by $\ogd$. We note that here we do not consider the initialization from the uniform distribution simply because that happens to be the unique Nash equilibrium in Shapley's game; the conclusion above readily applies for any initialization by suitably modifying the underlying game. We also note that the specific value for the learning rate specified above is used for concreteness, and the conclusion is not tied to that specific value.
\end{example}

We next note another separation between the bound predicted by smoothness and $\poa$, which gives an additional reason why the smoothness framework is more suited as a criterion for determining tractability. Below, we make for simplicity the assumption that the game has a unique welfare-maximizing action profile, which can always be enforced by incorporating an arbitrarily small noise in the players' utilities.

\begin{proposition}
    \label{prop:rpoa-poa}
    Determining whether a game is $(\lambda, \mu)$-smooth can be done in polynomial time in explicitly represented normal-form games. In contrast, even in two-player games, determining $\poa$ is $\NP$-hard.
\end{proposition}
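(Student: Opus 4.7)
For the first claim, the plan is to reduce the smoothness test to a polynomial number of scalar inequality checks. Given $(\lambda, \mu)$, by the uniqueness convention stated just before~\Cref{prop:rpoa-poa}, the welfare-maximizing strategy profile is unique; by multilinearity of $\sw(\cdot)$, any mixed profile attaining $\opt_\cG$ must be supported on pure welfare-maximizers, so the unique maximizer is in fact a pure profile $\vec{a}^\star \in \prod_{i=1}^n \cA_i$, identifiable in polynomial time by scanning the payoff table. It then suffices to verify~\eqref{eq:smoothness-game} with $\vxstar \defeq \vec{a}^\star$ for every $\vx \in \prod_{i=1}^n \Delta(\cA_i)$. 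The key observation is that once $\vxstar$ is fixed, both sides of~\eqref{eq:smoothness-game} are multilinear in $\vx$: each term $u_i(a_i^\star, \vx_{-i})$ is multilinear in $\vx_{-i}$ by the mixed-extension definition, and $\sw(\vx)$ is multilinear in $\vx$. Since a multilinear function on a product of simplices attains its minimum at a vertex, the inequality holds over the whole product domain iff it holds at every pure action profile $\vec{a} \in \prod_{i=1}^n \cA_i$. There are at most $\prod_{i=1}^n |\cA_i|$ such profiles, which is polynomial in the size of the explicit representation, so the whole check runs in polynomial time.

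For the second claim, my plan is a direct reduction from the $\NP$-hard decision problem of determining, given a two-player normal-form game $\cG$ and a threshold $v$, whether $\cG$ admits a Nash equilibrium whose social welfare is at least $v$. This was established by \citet{Gilboa89:Nash} and subsequently sharpened by \citet{Conitzer08:New}. Given such an instance, the optimum $\opt_\cG$ can be computed in polynomial time by scanning the payoff table; then, under the convention $\poa_\cG \defeq \sup_{\vx \in \neset_\cG} \sw(\vx)/\opt_\cG$ adopted in~\Cref{sec:prel}, we have $\poa_\cG \geq v/\opt_\cG$ iff $\cG$ possesses a Nash equilibrium with social welfare at least $v$. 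Consequently, even deciding a single rational threshold for $\poa_\cG$---let alone computing its exact value---would solve an $\NP$-hard problem.

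The main subtle point I anticipate is in the first part: strictly speaking, one relies on the uniqueness convention to fix the witness $\vxstar$, since otherwise verifying smoothness might require existentially quantifying over the set of welfare-maximizing profiles that could witness~\eqref{eq:smoothness-game}. Modulo that convention the argument is essentially routine linear algebra on multilinear forms. The second part is then a direct appeal to the existing hardness landscape for welfare-maximizing Nash equilibria in two-player games, and requires only verifying that the hardness reduction composes with the particular $\sup$ formulation of $\poa$.
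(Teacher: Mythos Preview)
Your proposal is correct and follows essentially the same approach as the paper: for the first claim, the paper also reduces to checking the smoothness inequality at every pure action profile (phrasing it as an LP feasibility problem with one constraint per $\vec{a} \in \prod_i \cA_i$), and for the second claim, the paper likewise invokes the \citet{Conitzer08:New} hardness of deciding whether a two-player game admits a Nash equilibrium with welfare above a given threshold. Your explicit multilinearity justification for why pure profiles suffice is a slight elaboration over the paper's more terse phrasing, but the underlying argument is the same.
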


\begin{proof}
    First of all, the welfare-maximizing action profile $\vec{a}^\star \in \prod_{i=1}^n \cA_i$ can be trivially computed in polynomial time (in the size of the input) since the game is explicitly represented. For a legitimate $(\lambda, \mu)$ pair, determining whether the game is $(\lambda, \mu)$-smooth can be phrased as a feasibility linear program, with a number of constraints equal to the number of possible joint action profiles, each corresponding to a separate constraint in~\eqref{eq:smoothness-game}; namely, $\sum_{i=1}^n u_i(a_i^\star, \vec{a}_{-i}) \geq \lambda \opt - \mu \sum_{i=1}^n u_i(\vec{a})$. As a result, the number of constraints is polynomial in the description of the game (since it is assumed that the game is explicitly represented). Furthermore, one can optimize over the smoothness parameters by considering the LP (of polynomial size) given in~\eqref{eq:rpoa-LP}, which determines the robust price of anarchy ($\rpoa$). In accordance with \Cref{theorem:large_games-formal} (see \Cref{remark:rpoa-rho}), one can also incorporate the constraint $z \geq 1/\poly(\cG)$. Regarding the second claim, hardness of determining $\poa$ even in two-player games follows directly from the reduction of~\citet[Theorem 1]{Conitzer06:Complexity}. In particular, an algorithm computing $\poa$ would enable determining the satisfiability of a SAT formula.
\end{proof}

One important question is whether smoothness can be identified in polynomial time even in \emph{succinctly} represented games~\citep{Papadimitriou08:Computing}. Indeed, the obvious algorithm for identifying smoothness described above requires a number of constraints that scales exponentially with the number of players, which is especially problematic in the regime of large games we focus on in \Cref{sec:large_games}. In fact, we show in~\Cref{theorem:rpoa} that determining $\rpoa$ (multiplied by the optimal welfare) is \NP-hard in succinct (multi-player) games.

We need to clarify, however, that knowing the smoothness parameters is certainly not a prerequisite for applying our approach. In detail, one can first of all apply \Cref{theorem:large_games-informal} in classes of games where there are available analytical bounds for $\rho_n$ as a function of $n$, obviating the need to determine whether $\rho_n$ is close to $1$. Besides this point, and more importantly, one can always execute the algorithm prescribed by~\Cref{theorem:large_games-informal}---which does not require any knowledge regarding the smoothness parameters---and then efficiently evaluate the solution quality (per \Cref{def:NE}). If the desired accuracy has been reached, then this is precisely the initial goal; otherwise, we can safely assume that the preconditons of~\Cref{theorem:large_games-informal} are not met.

\subsection{Smoothness does not suffice for convergence}

Next, we provide for completeness an example of a smooth game where $\ogd$ fails to converge to Nash equilibria. This shows that, as expected, it is not merely enough to know that $\rho \neq 0$ to obtain interesting guarantees.

\begin{example}
    \label{prop:smoothness-counter}
    This example is based on a bimatrix game in normal form described with the payoff matrices
\begin{equation}
    \label{eq:smooth-nonconv}
    \mat{A} = 
    \begin{bmatrix}
    0.2 & 0.8 & 0.9 & 0.3 \\
    0.2 & 0.8 & 0.2 & 0.3 \\
    0.9 & 0.2 & 0.4 & 0.4 \\
    0.6 & 0.9 & 0.3 & 0.1
    \end{bmatrix},
    \mat{B} = 
    \begin{bmatrix}
    0.4 & 0.2 & 0 & 0.1 \\
    0.5 & 0 & 0.2 & 0.8 \\
    0.7 & 0.8 & 0 & 0.4 \\
    0 & 0 & 0.1 & 0.4    
    \end{bmatrix}.
\end{equation}
    We first claim that this
    bimatrix game $\cG$ satisfies $\rho_\cG \geq 0.125$. Indeed, we first see that the welfare-maximizing profile of~\eqref{eq:smooth-nonconv} reads $(\vxstar_1, \vxstar_2) = ((0, 0, 1, 0), (1, 0, 0, 0))$. We also claim that for any pair of actions $a_1 \in \cA_1$ and $a_2 \in \cA_2$ it holds that
    \begin{equation*}
        \sum_{i=1}^2 u_i(\vxstar_i, a_{-i}) \geq 0.125 \cdot \opt_{\cG},
    \end{equation*}
    where $\opt_\cG = 1.6$. As a result, it follows that for any $\vx_1 \in \Delta(\cA_1)$ and $\vx_2 \in \Delta(\cA_2)$ it holds that $\E_{a_1 \sim \vx_1, a_2 \sim \vx_2}[\sum_{i=1}^2 u_i(\vxstar_i, a_{-i}) ] \geq 0.125 \cdot \opt_{\cG}$, in turn implying that $\sum_{i=1}^2 u_i(\vxstar_i, \vx_{-i}) \geq 0.125 \cdot \opt_{\cG}$. This means that $\cG$ is $(0.125, 0)$-smooth. Furthermore, through a numerical simulation we draw the following conclusion: for $T \gg 1$, $\ogd$ with learning rate $\eta \defeq 0.01$ satisfies $\negap(\vx_1^{(t)}, \vx_2^{(t)}) \geq 0.046$ for any $t \in \range{T}$, where $(\vx^{(t)}_1, \vx^{(t)}_2)_{t \geq 1}$ is the sequence of iterates produced by $\ogd$. Once again, it is worth noting that this conclusion is not tied to the specific choice of learning rate we chose above, which is only used for concreteness.
\end{example}

\subsection{Bayesian mechanisms}
\label{appendix:Bayesian}

In this subsection, we consider the standard \emph{independent private value} model of Bayesian mechanisms. In particular, each player $i \in \range{n}$ has a \emph{type} $v_i$ drawn from a distribution $\cF_i$ over a finite set of types $\cV_i$; without any loss, we may assume that $\cF_i$ is the uniform distribution over $\cV_i$. It is further assumed that players' types are pairwise independent. After each player $i \in \range{n}$ draws a type $v_i \sim \cF_i$, $i$ selects an action $a_i(v_i) \in \cA_i$, which is a function of its type $v_i$. Now consider a fixed profile of types $\vv \in \cV \defeq \cV_1 \times \cV_2 \dots \cV_n$. The (expected) utility of Player $i$ under a joint strategy profile $\vx(\vv)$ is denoted by $u_i(\vx; v_i) \defeq \E_{\vec{a} \sim \vx}[u_i(\vec{a}; v_i)]$. There is also a \emph{principal} agent who does not take an action in the game, and whose utility under $\vec{x}$ is given by $R(\vec{x}) \defeq \E_{\vec{a} \sim \vx}[R(\vec{a})]$. Accordingly, the social welfare is defined as $\sw(\vx, \vv) \defeq \sum_{i=1}^n u_i(\vx; v_i) + R(\vx)$, while $\opt_\cM(\vv)$ represents the optimal social welfare of mechanism $\cM$ as a function of the joint type $\vv \in \cV$. We will make the assumption that the utility functions assign solely nonnegative values.

\paragraph{Smooth mechanisms} Analogously to the notion of a smooth game (\Cref{def:smoothness-games}), \citet{Syrgkanis13:Composable} introduced the notion of a smooth mechanism, formally recalled below.

\begin{definition}[Smooth mechanism~\citep{Syrgkanis13:Composable}]
    \label{def:smooth-mechcs}
    A mechanism $\cM$ is $(\lambda, \mu)$-smooth, where $\lambda, \mu \geq 0$, if there exists an strategy profile $\vec{x}^\star(\vv) \in \prod_{i=1}^n \Delta(\cA_i)$, for every type profile $\vv \in \cV$, such that for any action profile $\vec{a} \in \cA$ and type profile $\vv = (v_1, \dots, v_n) \in \cV$,
    \begin{equation}
        \label{eq:smooth-mech}
        \sum_{i=1}^n u_i(\vx^\star_i(\vv), \vec{a}_{-i}; v_i) \geq \lambda \opt_\cM(\vv) - \mu R(\vec{a}).
    \end{equation}
\end{definition}

As it turns out, many important mechanisms satisfy the above definition under various parameters $(\lambda, \mu)$~\citep{Syrgkanis13:Composable}. For a $(\lambda, \mu)$-smooth mechanism $\cM$, we define $\rho_\cM \defeq \frac{\lambda}{\max\{1, \mu\}}$.

\begin{definition}
    \label{def:bne}
    Let $\vx_i: \cV_i \to \Delta(\cA_i)$ be the strategy of each player $i \in \range{n}$. A joint strategy profile $\vx$ is a \emph{Bayes-Nash equilibrium (BNE)} if for any player $i \in \range{n}$, any type $v_i \in \cV_i$ and deviation $a_i' \in \cA_i$,
\begin{equation}
    \label{eq:BNE}
    \E_{\vv_{-i} \sim \cF_{-i}} [u_i(\vx(\vv) ; v_i)] \geq \E_{\vv_{-i} \sim \cF_{-i}} [u_i(a_i', \vx_{-i}(\vv_{-i}); v_i)].
\end{equation}
\end{definition}

This definition coincides with the standard notion of Nash equilibrium we saw in \Cref{def:NE} when each distribution $\cF_i$ is a point mass. We also note that an $\epsilon$-BNE incorporates an $\epsilon \geq 0$ additive slackness in~\eqref{eq:BNE}.

\paragraph{Population interpretation of Bayesian games} In the \emph{agent-form} representation of a Bayesian game~\citep{Hartline15:No}, it is assumed that there are $n$ finite subpopulations of players, each corresponding to a player $i \in \range{n}$. Each player belonging to population $i$ corresponds to a type $v_i \in \cV_i$, which is distinct in each population and across populations. In this induced population game, nature first draws one player from each population, and then each player $v_i$ selects an action $a_i(v_i)$; the game is then played under those selected actions. In symbols, the utility of Player $v_i$ from population $i$ reads
\begin{equation}
    \label{eq:util-compl}
    \upo_{i, v_i}(\vec{a}) \defeq \E_{\vv \sim \cF}[u_i(\vec{a}(\vv); v_i) \mathbbm{1}\{ v_i \} ],
\end{equation}
where by $\{ v_i \}$ above we denote the event that type $v_i$ is selected by nature among the population corresponding to Player $i$. The importance of the population interpretation of the Bayesian game is that it induces a mechanism of complete information, which will be denoted by $\cMpo = \cMpo(\cM)$. The following characterization highlights an important connection between $\cMpo$ and $\cM$.

\begin{theorem}[\citep{Hartline15:No}]
    \label{theorem:reduction}
    If a mechanism $\cM$ is $(\lambda, \mu)$-smooth, then the complete information mechanism $\cMpo = \cMpo(\cM)$ is also $(\lambda, \mu)$-smooth.
\end{theorem}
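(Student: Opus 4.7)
The plan is to construct a smoothness witness for $\cMpo$ by a ``bluffing'' transformation of the witness guaranteed by the smoothness of $\cM$, and then to reduce the $\cMpo$-smoothness inequality to that of $\cM$ by exploiting independence between a type draw that selects the opponents' realized actions and a fresh, independent ``bluff'' draw that selects the witness.

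The first step is to define the agent-form witness from the original witness $\vx^\star$. For each agent $(i, v_i)$ in $\cMpo$, set
\[
\vx^{\textsf{AG}, \star}_{i, v_i} \defeq \E_{\vw_{-i} \sim \cF_{-i}}\!\left[\vx^\star_i(v_i, \vw_{-i})\right] \in \Delta(\cA_i).
\]
Intuitively, agent $(i, v_i)$ samples an imaginary type profile for the other players, independent of everything else in the game, and then plays the corresponding coordinate of the $\cM$-witness at $(v_i, \vw_{-i})$.

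Next, fix any action profile $\vec{a} = (a_{i, v_i})_{i, v_i}$ in $\cMpo$. Using the definition of $\upo_{i, v_i}$ in \eqref{eq:util-compl} together with uniformity of $\cF_i$ and linearity of $u_i$ in each player's mixed strategy, the left-hand side of the agent-form smoothness condition becomes
\[
\sum_{i, v_i} \upo_{i, v_i}\!\left(\vx^{\textsf{AG}, \star}_{i, v_i}, \vec{a}_{-(i, v_i)}\right) = \sum_i \E_{\vv \sim \cF,\, \vw_{-i} \sim \cF_{-i}}\!\left[u_i\!\left(\vx^\star_i(v_i, \vw_{-i}), (a_{i', v_{i'}})_{i' \neq i}; v_i\right)\right],
\]
with $\vv$ and $\vw_{-i}$ independent. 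Since agent $(i, v_i)$'s own action has been replaced by the witness, the random variable $v_i$ enters the summand only through the type at which $u_i$ is evaluated and through the first coordinate of $\vx^\star$. Relabeling $v_i$ as $w_i$ and viewing $\vw \defeq (w_i, \vw_{-i}) \sim \cF$ as independent of $\vv \sim \cF$, the quantity rewrites as
\[
\sum_i \E_{\vv, \vw}\!\left[u_i\!\left(\vx^\star_i(\vw), (a_{i', v_{i'}})_{i' \neq i}; w_i\right)\right].
\]
For each realized $\vv, \vw$, the $(\lambda, \mu)$-smoothness of $\cM$ applied at type profile $\vw$ and action profile $\vec{a}'(\vv) \defeq (a_{1, v_1}, \dots, a_{n, v_n})$ gives the pointwise bound $\sum_i u_i(\vx^\star_i(\vw), \vec{a}'(\vv)_{-i}; w_i) \geq \lambda \opt_\cM(\vw) - \mu R(\vec{a}'(\vv))$. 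Taking expectations over independent $\vv, \vw$, one uses $\E_\vw[\opt_\cM(\vw)] = \opt_{\cMpo}$---which holds because welfare-maximizing strategies decouple across realized type profiles---and identifies $\E_\vv[R(\vec{a}'(\vv))]$ as the principal's expected revenue in $\cMpo$ under $\vec{a}$ to conclude the $(\lambda, \mu)$-smoothness of $\cMpo$.

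The main obstacle is precisely the decoupling step above. The $\cM$-witness $\vx^\star$ depends on the full type profile $\vv$, whereas each agent in $\cMpo$ can condition its strategy only on its own type. The bluffing construction, together with the independence between the action-defining types $\vv$ and the witness-defining types $\vw$, is exactly what lets the $\cMpo$-smoothness LHS be rewritten in a form where $\cM$-smoothness can be applied pointwise, with the subsequent integration over $\vv$ producing the agent-form revenue on the right-hand side.
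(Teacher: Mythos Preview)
The paper does not actually supply a proof of this theorem: it is quoted as a result of \citet{Hartline15:No}, and the text immediately following it states that the ensuing argument ``will not make explicit use of \Cref{theorem:reduction}'' so as to remain self-contained. Consequently there is nothing in the paper to compare your argument against.

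That said, your proof is correct and is precisely the ``bluffing'' argument from \citet{Hartline15:No}. The construction of the agent-form witness $\vx^{\textsf{AG},\star}_{i,v_i}$ by averaging $\vx^\star_i(v_i,\cdot)$ over an independent draw $\vw_{-i}\sim\cF_{-i}$, the relabeling of $v_i$ as $w_i$ (legitimate because $v_i$ no longer appears in the opponents' realized actions once agent $(i,v_i)$ deviates to the witness, and types are independent), the pointwise application of $\cM$-smoothness at the bluffed type profile $\vw$, and the identifications $\E_{\vw}[\opt_{\cM}(\vw)]=\opt_{\cMpo}$ and $\E_{\vv}[R(\vec{a}'(\vv))]=R^{\textsf{AG}}(\vec{a})$ are all exactly as in the original reference. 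Nothing is missing.
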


We now proceed with the proof of~\Cref{theorem:bayesian}. To keep the exposition self-contained, we will not make explicit use of \Cref{theorem:reduction}.

\Bayesian*

\begin{proof}
    Under the assumption that $\rho_\cM = 1$, it follows that there exists a pair $(\lambda, \mu) \in \R^2_{\geq 0}$ such that $\cM$ is $(\lambda, \mu)$-smooth with $\lambda = \max\{1, \mu\}$. Further, by definition it holds that $\opt_\cM(\vv) \geq \sum_{i=1}^n u_i(\vec{a}; v_i) + R(\vec{a})$, for any action profile $\vec{a} \in \cA$. Combining with~\eqref{eq:smooth-mech}, we have that there exists a strategy profile $\vx^\star(\vv) \in \prod_{i=1}^n \Delta(\cA_i)$ such that for every type profile $\vv \in \cV$ and action profile $\vec{a} \in \cA$,
    \begin{equation*}
        \sum_{i=1}^n u_i(\vx_i^\star(\vv), \vec{a}_{-i}; v_i) \geq  \lambda \sum_{i=1}^n u_i(\vec{a}; v_i) + \lambda R(\vec{a}) - \mu R(\vec{a}) \geq \sum_{i=1}^n u_i(\vec{a}; v_i),
    \end{equation*}
    since $\lambda = \max\{1, \mu\}$ and utilities are nonnegative. As a result, for any $\vx(\vv) \in \prod_{i=1}^n \Delta(\cA_i)$,
    \begin{equation}
        \label{eq:rpoaM-one}
        \sum_{i=1}^n \E_{\vv \sim \cF} [u_i(\vx_i^\star(\vv), \vec{x}_{-i}(\vv_{-i}); v_i)] \geq \sum_{i=1}^n \E_{\vv \sim \cF} [u_i(\vx(\vv); v_i)].
    \end{equation}
    Further, by definition of the agent-form utilities~\eqref{eq:util-compl} and the law of total expectation, \eqref{eq:rpoaM-one} can be equivalently cast as
    \begin{equation*}
        \sum_{i=1}^n \sum_{v_i \in \cV_i} \upo_{i, v_i}(\vx_{i, v_i}^\star, \vx_{-(i, v_i)}) \geq \sum_{i=1}^n \sum_{v_i \in \cV_i}  \upo_{i, v_i}(\vx),
    \end{equation*}
    where we used the notation $\vx_{i, v_i} \defeq \vx_i(v_i)$. This implies that the sum of the players' regrets in the agent-form representation is nonnegative. The proof of~\Cref{theorem:bayesian} then follows from the correspondence between Nash equilibria in the agent-form representation of $\cM$ and Bayes-Nash equilibria in the original incomplete-information mechanism $\cM$. 
\end{proof}

\subsection{PoA vs robust PoA}
\label{appendix:poa-vs-rpoa}

In this subsection, we provide some additional justification for the condition $\rpoa \neq \poa$ required in \Cref{theorem:efficiency-noregret}. In particular, we conduct experiments on a set of random normal-form games. Some illustrative results for $10$ random games are demonstrated in \Cref{fig:poa-vs-rpoa}. Overall, we observe that not only $\rpoa \neq \poa$, but in fact the gap between the two quantities is typically substantial. This discrepancy is expected since, as we have already stressed, $\rpoa$ quantifies the worst-case welfare over a (typically) much broader set of equilibria.

\begin{figure}[!ht]
    \centering
    \includegraphics[scale=0.6]{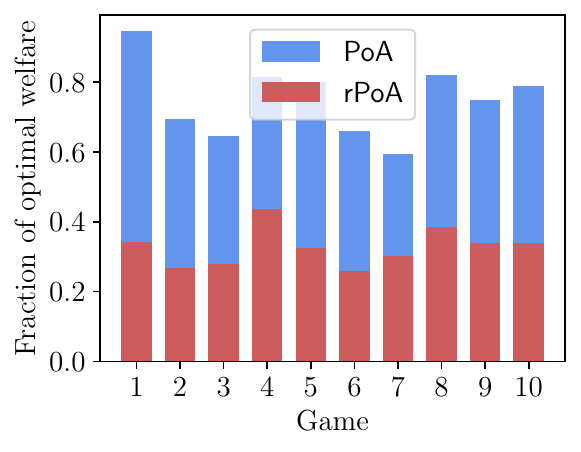}
    \caption{$\poa$ versus $\rpoa$ in random normal-form games.}
    \label{fig:poa-vs-rpoa}
\end{figure}

\subsection{Proof of Theorem~\ref{theorem:efficiency-noregret-formal}}
\label{appendix:cgd}

In this subsection, we provide the proof of~\Cref{theorem:efficiency-noregret-formal}. For completeness, we also include the proofs of certain results known from prior work~\citep{Farina22:Clairvoyant,Piliouras21:Optimal,Nemirovski04:Prox}.

To this end, we first recall that the \emph{prox} operator associated with the (squared) Euclidean regularizer $\frac{1}{2}\|\cdot\|_2^2$ is defined as
\begin{equation}
    \label{eq:prox}
    \prox_{\vx_i}(\vu_i) \defeq \arg\max_{\vx_i' \in \cX_i} \left\{ \langle \vx_i', \vu_i \rangle - \frac{1}{2} \|\vx_i - \vx_i'\|_2^2 \right\},
\end{equation}
under some utility vector $\vu_i \in \R^{d_i}$. Accordingly, we let $\prox_{\vx}(\vu) \defeq (\prox_{\vx_1}(\vu_1), \dots, \prox_{\vx_n}(\vu_n))$, where $\vu \defeq (\vu_1, \dots, \vu_n)$. With this definition in mind, we note the following property of the prox operator.

\begin{lemma}
[\citep{Farina22:Clairvoyant,Nemirovski04:Prox}]
    \label{lemma:proj-nonexpansive}
The prox operator $\prox_{\vx}(\cdot)$ is $1$-Lipschitz continuous with respect to $\|\cdot\|_2$ for any $\vx \in \prod_{i=1}^n \cX_i$.    
\end{lemma}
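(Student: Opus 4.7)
The plan is to reduce the prox operator to a Euclidean projection and then invoke the well-known fact that Euclidean projection onto a nonempty, closed, convex set is $1$-Lipschitz (nonexpansive). Concretely, for fixed $\vx_i \in \cX_i$ and any $\vu_i \in \R^{d_i}$, completing the square in the definition~\eqref{eq:prox} gives
\begin{equation*}
\prox_{\vx_i}(\vu_i) = \arg\min_{\vx_i' \in \cX_i} \tfrac{1}{2}\|\vx_i' - (\vx_i + \vu_i)\|_2^2 = \proj_{\cX_i}(\vx_i + \vu_i),
\end{equation*}
since the $\langle \vx_i', \vu_i\rangle$ and $-\tfrac12\|\vx_i - \vx_i'\|_2^2$ terms rearrange into $-\tfrac12\|\vx_i' - (\vx_i + \vu_i)\|_2^2$ plus a constant independent of $\vx_i'$.

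Next, I would apply the standard nonexpansiveness of Euclidean projection onto the convex compact set $\cX_i$: for any $\vy, \vy' \in \R^{d_i}$, $\|\proj_{\cX_i}(\vy) - \proj_{\cX_i}(\vy')\|_2 \leq \|\vy - \vy'\|_2$. Instantiating with $\vy = \vx_i + \vu_i$ and $\vy' = \vx_i + \vu_i'$, the shift $\vx_i$ cancels and we obtain the per-player bound
\begin{equation*}
\|\prox_{\vx_i}(\vu_i) - \prox_{\vx_i}(\vu_i')\|_2 \leq \|\vu_i - \vu_i'\|_2.
\end{equation*}

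Finally, I would aggregate across the $n$ players: since $\prox_\vx(\vu) = (\prox_{\vx_1}(\vu_1), \dots, \prox_{\vx_n}(\vu_n))$ and the product $\ell_2$ norm satisfies $\|\vu\|_2^2 = \sum_{i=1}^n \|\vu_i\|_2^2$, squaring and summing the per-player inequality yields
\begin{equation*}
\|\prox_\vx(\vu) - \prox_\vx(\vu')\|_2^2 = \sum_{i=1}^n \|\prox_{\vx_i}(\vu_i) - \prox_{\vx_i}(\vu_i')\|_2^2 \leq \sum_{i=1}^n \|\vu_i - \vu_i'\|_2^2 = \|\vu - \vu'\|_2^2,
\end{equation*}
which is the desired claim. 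There is essentially no obstacle: the content is the identification of the prox map with a projection and the standard nonexpansiveness property, both of which are routine for the Euclidean regularizer. If one instead wanted a self-contained derivation avoiding the black-box use of projection nonexpansiveness, the alternative is to use the first-order variational inequality characterizing each optimizer, add the two such inequalities obtained by swapping roles of $\vu_i$ and $\vu_i'$, and apply Cauchy--Schwarz to the cross term; this is the same argument under the hood.
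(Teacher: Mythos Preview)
Your proposal is correct and follows essentially the same approach as the paper: both identify $\prox_{\vx_i}(\vu_i) = \proj_{\cX_i}(\vx_i + \vu_i)$ by completing the square, invoke nonexpansiveness of the Euclidean projection for the per-player bound, and then aggregate across players. Your write-up is in fact slightly more explicit about the aggregation step than the paper's.
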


\begin{proof}
    Let $i \in \range{n}$. For any $\vu_i, \vu_i' \in \R^{d_i}$ and $\vx_i \in \cX_i$,
    \begin{equation*}
        \|\Pi_{\vx_i}(\vu_i) - \Pi_{\vx_i}(\vu_i') \|_2 = \left\| \proj_{\cX_i}(\vx_i + \vu_i) - \proj_{\cX_i}(\vx_i + \vu'_i) \right\|_2 \leq \|\vu_i - \vu_i'\|_2,
    \end{equation*}
    where we used the fact that the projection operator is non-expansive with respect to $\|\cdot\|_2$, and that
    \begin{align*}
        \Pi_{\vx_i}(\vu_i) =  \arg \max_{\vx_i' \in \cX_i} \left\{ \langle \vx_i', \vu_i \rangle - \frac{1}{2} \|\vx_i - \vx_i'\|_2^2 \right\} &= \arg \max_{\vx_i' \in \cX_i} \left\{ - \frac{1}{2} \|\vx_i'\|_2^2 + \langle \vx_i', \vu_i + \vx_i \rangle \right\} \\
        &= \arg \min_{\vx_i' \in \cX_i} \left\{ \|\vx_i' - (\vu_i + \vx_i) \|_2^2  \right\} \\
        &= \proj_{\cX_i} (\vx_i + \vu_i).
    \end{align*}
    This implies that $\| \prox_{\vx} (\vu) - \prox_{\vx} (\vu') \|_2 \leq \|\vu - \vu'\|$, where $\vu = (\vu_1, \dots, \vu_n)$ and $\vu' = (\vu_1', \dots, \vu_n')$.
\end{proof}

Using this lemma, a key observation for the analysis of \emph{clairvoyant} mirror descent is the following contraction property~\citep{Piliouras21:Optimal,Farina22:Clairvoyant}.

\begin{proposition}[\citep{Piliouras21:Optimal,Farina22:Clairvoyant}]
    \label{prop:contract}
    Suppose that $F$ is $L$-Lipschitz continuous. For any $\vx' \in \prod_{i=1}^n \cX_i$ the function
    \begin{equation}
        \label{eq:fp-contract}
        \prod_{i=1}^n \cX_i \ni \vw \mapsto \prox_{\vx'}(\eta F(\vw))
    \end{equation}
    is $(\eta L)$-Lipschitz continuous. As a result, function~\eqref{eq:fp-contract} is a contraction mapping as long as $\eta < \frac{1}{L}$.
\end{proposition}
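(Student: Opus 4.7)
The plan is to prove this by direct composition of two Lipschitz properties that are already in hand: the $1$-Lipschitz continuity of the prox operator (\Cref{lemma:proj-nonexpansive}) and the $L$-Lipschitz continuity of the game operator $F$ (by assumption). There is no nontrivial idea involved; it is essentially a chain-rule style bound in the Lipschitz category.

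Concretely, I would fix $\vx' \in \prod_{i=1}^n \cX_i$ and consider arbitrary $\vw, \vw' \in \prod_{i=1}^n \cX_i$. The goal is to bound $\| \prox_{\vx'}(\eta F(\vw)) - \prox_{\vx'}(\eta F(\vw')) \|_2$. First, invoke \Cref{lemma:proj-nonexpansive} with the two utility vectors $\eta F(\vw)$ and $\eta F(\vw')$ to obtain
\[
\| \prox_{\vx'}(\eta F(\vw)) - \prox_{\vx'}(\eta F(\vw')) \|_2 \leq \| \eta F(\vw) - \eta F(\vw') \|_2 = \eta \, \| F(\vw) - F(\vw') \|_2.
\]
Then apply the $L$-Lipschitz continuity of $F$ to conclude $\| F(\vw) - F(\vw') \|_2 \leq L \|\vw - \vw'\|_2$, which chained with the previous display yields the claimed $(\eta L)$-Lipschitz continuity of $\vw \mapsto \prox_{\vx'}(\eta F(\vw))$.

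The contraction claim is then immediate: if $\eta < 1/L$ then the Lipschitz constant $\eta L < 1$, so the map is a strict contraction on the compact convex domain $\prod_{i=1}^n \cX_i$ (equipped with the Euclidean norm), and Banach's fixed point theorem applies. There is no real obstacle here; the only thing to be slightly careful about is verifying that \Cref{lemma:proj-nonexpansive} is applied in the intended product sense, that is, that $\prox_{\vx'}(\cdot)$ acts coordinatewise on $(\vu_1, \dots, \vu_n)$ and the Euclidean norm on $\prod_{i=1}^n \R^{d_i}$ decomposes blockwise, so the per-player non-expansiveness aggregates to global non-expansiveness without any extra factor.
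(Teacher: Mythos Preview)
Your proposal is correct and matches the paper's proof essentially verbatim: the paper simply writes $\| \Pi_{\vx'}(\eta F(\vw)) - \Pi_{\vx'}(\eta F(\vw')) \|_2 \leq \eta \|F(\vw) - F(\vw') \|_2 \leq \eta L \|\vw - \vw'\|_2$, invoking \Cref{lemma:proj-nonexpansive} followed by the Lipschitz assumption on $F$.
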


This follows directly from~\Cref{lemma:proj-nonexpansive}: $\| \Pi_{\vx'}(\eta F(\vw)) - \Pi_{\vx'}(\eta F(\vw')) \|_2 \leq \eta \|F(\vw) - F(\vw') \|_2 \leq \eta L \|\vw - \vw'\|_2$.

\Cref{prop:contract} reassures us that fixed points of the contraction mapping~\eqref{eq:fp-contract} not only exist, but $\epsilon$-approximate fixed points can also be computed in a time proportional to $\log(1/\epsilon)$~\citep{Piliouras21:Optimal}. In this context, if $\vx^{(0)} \in \prod_{i=1}^n \cX_i$ is an arbitrary point, we consider the update rule defined for $t \in \N$ via
\begin{equation}
    \label{eq:CGD}
    \vx^{(t)} = \prox_{\vx^{(t-1)}} (\eta F(\vw^{(t)})),
\end{equation}
where $\vw^{(t)} \in \prod_{i=1}^n \cX_i$ is any point such that $\|\vw^{(t)} - \prox_{\vx^{(t-1)}}(\eta F(\vw^{(t)})) \|_2 \leq \epsilon^{(t)}$. It is important to note that this sequence $(\vx^{(t)})_{t \geq 1}$ is not uniquely defined, but with a slight abuse we refer to any sequence satisfying~\eqref{eq:CGD} as \emph{clairvoyant gradient descent ($\cgd$)}. $\cgd$ satisfies the following remarkable regret bound.

\begin{theorem}[\citep{Piliouras21:Optimal,Farina22:Clairvoyant}]
    \label{theorem:cgd}
    For any $T \in \N$, the regret of player $i \in \range{n}$ under $\cgd$ satisfies
    \begin{equation}
        \label{eq:cgd-reg}
        \reg_i^{(T)} \leq \frac{\diam^2_{\cX_i}}{2 \eta} - \frac{1}{2\eta} \sum_{t=1}^T \|\vx_i^{(t)} - \vx_i^{(t-1)} \|_2^2 + \diam_{\cX_i} L_i \sum_{t=1}^T \epsilon^{(t)},
    \end{equation}
    where $L_i \in \R_{> 0}$ is the Lipschitz constant of $\vu_{i}$ with respect to $\|\cdot\|_2$.
\end{theorem}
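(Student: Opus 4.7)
The plan is to derive the regret bound from the one-step prox inequality associated with the Euclidean regularizer, together with the approximate fixed-point property that distinguishes $\cgd$ from ordinary gradient descent. Recall that at time $t$, player $i$ performs $\vx_i^{(t)} = \prox_{\vx_i^{(t-1)}}(\eta \vu_i(\vw_{-i}^{(t)}))$, where $\vw^{(t)}$ satisfies $\|\vw^{(t)} - \prox_{\vx^{(t-1)}}(\eta F(\vw^{(t)}))\|_2 = \|\vw^{(t)} - \vx^{(t)}\|_2 \le \epsilon^{(t)}$. I will bound the difference between the ``idealized'' update (where the player reacts to the actual next strategy profile $\vx^{(t)}$) and the actual update (which reacts to $\vw^{(t)}$) by $\epsilon^{(t)}$, and pay only a linear cost in the $\epsilon^{(t)}$'s.

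First, I would invoke the standard three-point property of the Euclidean prox: by first-order optimality of $\vx_i^{(t)} = \prox_{\vx_i^{(t-1)}}(\eta \vu_i(\vw_{-i}^{(t)}))$ in the program~\eqref{eq:prox}, for any comparator $\vxstar_i \in \cX_i$,
\begin{equation*}
\langle \vxstar_i - \vx_i^{(t)}, \eta \vu_i(\vw_{-i}^{(t)})\rangle \le \tfrac{1}{2}\|\vxstar_i - \vx_i^{(t-1)}\|_2^2 - \tfrac{1}{2}\|\vxstar_i - \vx_i^{(t)}\|_2^2 - \tfrac{1}{2}\|\vx_i^{(t)} - \vx_i^{(t-1)}\|_2^2.
\end{equation*}
Next, I split the left-hand side by adding and subtracting $\langle \vxstar_i - \vx_i^{(t)}, \eta \vu_i(\vx_{-i}^{(t)})\rangle$ and use Cauchy--Schwarz together with the assumed $L_i$-Lipschitz continuity of $\vu_i$:
\begin{equation*}
|\langle \vxstar_i - \vx_i^{(t)}, \eta(\vu_i(\vx_{-i}^{(t)}) - \vu_i(\vw_{-i}^{(t)}))\rangle| \le \eta\, \diam_{\cX_i}\, L_i\, \|\vx_{-i}^{(t)} - \vw_{-i}^{(t)}\|_2 \le \eta\, \diam_{\cX_i}\, L_i\, \epsilon^{(t)},
\end{equation*}
where the last step uses the approximate fixed-point guarantee on $\vw^{(t)}$.

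I would then telescope over $t = 1, \dots, T$. The squared distances to $\vxstar_i$ cancel pairwise, leaving $\tfrac{1}{2}\|\vxstar_i - \vx_i^{(0)}\|_2^2 \le \tfrac{1}{2}\diam_{\cX_i}^2$ as the surviving boundary term; the $\|\vx_i^{(t)} - \vx_i^{(t-1)}\|_2^2$ terms accumulate with a negative sign as in~\eqref{eq:cgd-reg}; and the error terms aggregate to $\eta \diam_{\cX_i} L_i \sum_t \epsilon^{(t)}$. Dividing by $\eta$ and taking the maximum over $\vxstar_i \in \cX_i$ converts the left-hand side to $\reg_i^{(T)}$ and yields exactly the stated inequality.

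The only nontrivial step is the error-handling one: it is essential that the error enters linearly in $\epsilon^{(t)}$ (not $1/\epsilon^{(t)}$ or similar), which in turn is what allows the subsequent choice $\epsilon^{(t)} \le \diam_{\min}/t^2$ to make the error contribution uniformly bounded in $T$, as invoked in \Cref{theorem:efficiency-noregret-formal}. Everything else is the standard Euclidean prox/mirror-descent telescoping argument and poses no real difficulty.
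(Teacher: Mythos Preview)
Your proposal is correct and matches the paper's proof essentially step for step: the paper derives the three-point inequality you invoke by combining first-order optimality with the polarization identity $\langle \vxstar_i - \vx_i^{(t)}, \vx_i^{(t)} - \vx_i^{(t-1)} \rangle = -\tfrac{1}{2}\|\vxstar_i - \vx_i^{(t)}\|_2^2 + \tfrac{1}{2}\|\vxstar_i - \vx_i^{(t-1)}\|_2^2 - \tfrac{1}{2}\|\vx_i^{(t)} - \vx_i^{(t-1)}\|_2^2$, then telescopes and handles the $\vw^{(t)}$-vs-$\vx^{(t)}$ error exactly as you do via Cauchy--Schwarz and Lipschitz continuity of $\vu_i$. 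The only difference is packaging---you cite the three-point property directly, the paper unpacks it---so there is nothing substantive to compare.
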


\begin{proof}
    Given that $\vx_i^{(t)} \defeq \Pi_{\vx_i^{(t-1)}} (\eta \vu_{i}(\vw_{-i}^{(t)}))$, the first-order optimality condition implies that for any $\vx_i^\star \in \cX_i$ and $t \in \range{T}$,
    \begin{equation}
        \label{eq:fo}
        \left\langle \vx_i^{(t)} - \vxstar_i, \eta \vu_i(\vw_{-i}^{(t)}) - ( \vx_i^{(t)}  - \vx_i^{(t-1)}) \right\rangle \geq 0.
    \end{equation}
    We further have that
    \begin{equation*}
        \langle \vxstar_i - \vx_i^{(t)}, \vx_i^{(t)} - \vx_i^{(t-1)} \rangle = -\frac{1}{2} \|\vxstar_i - \vx_i^{(t)} \|_2^2 + \frac{1}{2} \|\vxstar_i - \vx_i^{(t-1)} \|_2^2 - \frac{1}{2} \|\vx_i^{(t)} - \vx_i^{(t-1)} \|_2^2,
    \end{equation*}
    thereby implying through a telescopic summation that
    \begin{equation*}
        \sum_{t=1}^T \langle \vxstar_i - \vx_i^{(t)}, \vx_i^{(t)} - \vx_i^{(t-1)} \rangle \leq \frac{1}{2} \diam_{\cX_i}^2 - \frac{1}{2} \sum_{t=1}^T \|\vx_i^{(t)} - \vx_i^{(t-1)} \|_2^2.
    \end{equation*}
    Combining with~\eqref{eq:fo},
    \begin{equation*}
        \sum_{t=1}^T \langle \vx_i^\star - \vx_i^{(t)}, \vu_i(\vw_{-i}^{(t)}) \rangle \leq \frac{1}{2 \eta} \diam_{\cX_i}^2 - \frac{1}{2\eta} \sum_{t=1}^T \|\vx_i^{(t)} - \vx_i^{(t-1)} \|_2^2.
    \end{equation*}
    The claim thus follows from the fact that
    \begin{align*}
    \sum_{t=1}^T \langle \vx_i^\star - \vx_i^{(t)}, \vu_i(\vw^{(t)}_{-i}) - \vu_i(\vx^{(t)}_{-i}) \rangle 
    &\geq - \sum_{t=1}^T \| \vxstar_i - \vx_i^{(t)} \|_2 \|\vu_i(\vw_{-i}^{(t)}) - \vu_i(\vx^{(t)}_{-i}) \|_2 \\
    &\geq - \diam_{\cX_i} L_i \sum_{t=1}^T \epsilon^{(t)}.
    \end{align*}
\end{proof}

Unlike prior work, here we will make crucial use of the negative term in~\eqref{eq:cgd-reg}. The important lemma below will enable us to express the regret bound~\eqref{eq:cgd-reg} in terms of $i$'s best response gap; it is analogous to~\Cref{lemma:br} we stated earlier for $\ogd$.

\begin{lemma}
    \label{lemma:br-cgd}
    Fix any $t \in \N$ and let $\vx^{(t)} = \prox_{\vx^{(t-1)}} (\eta F(\vw^{(t)}))$, where $\vw^{(t)} \in \prod_{i=1}^n \cX_i$ is any point such that $\|\vw^{(t)} - \prox_{\vx^{(t-1)}}(\eta F(\vw^{(t)})) \|_2 \leq \epsilon^{(t)}$. Then,
    \begin{equation*}
        \brgap_i(\vx^{(t)}) \leq \diam_{\cX_i} \|\vu_i(\vw_{-i}^{(t)}) - \vu_i(\vx_{-i}^{(t)}) \|_2 + \frac{\diam_{\cX_i}}{\eta} \|\vx_i^{(t)} - \vx_i^{(t-1)} \|_2.
    \end{equation*}
\end{lemma}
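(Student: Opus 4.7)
\textbf{Proof plan for \Cref{lemma:br-cgd}.} The strategy mirrors the proof of the analogous bound for $\ogd$ (\Cref{lemma:br}), but exploiting the defining relation $\vx_i^{(t)} = \prox_{\vx_i^{(t-1)}}(\eta \vu_i(\vw_{-i}^{(t)}))$ of $\cgd$. Because the prox operator is defined as an argmax over $\cX_i$ of a concave objective (the inner product minus half a squared Euclidean distance; see~\eqref{eq:prox}), the first-order optimality condition at $\vx_i^{(t)}$ reads
\begin{equation*}
    \left\langle \vx_i^{(t)} - \vx_i^\star,\; \eta \vu_i(\vw_{-i}^{(t)}) - (\vx_i^{(t)} - \vx_i^{(t-1)}) \right\rangle \geq 0
    \qquad \text{for every } \vx_i^\star \in \cX_i,
\end{equation*}
which is precisely the inequality~\eqref{eq:fo} that already appeared in the proof of \Cref{theorem:cgd}. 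Rearranging and applying Cauchy–Schwarz together with the bound $\|\vx_i^\star - \vx_i^{(t)}\|_2 \leq \diam_{\cX_i}$ yields
\begin{equation*}
    \langle \vx_i^\star - \vx_i^{(t)},\, \vu_i(\vw_{-i}^{(t)}) \rangle
    \;\leq\; \tfrac{1}{\eta}\, \langle \vx_i^\star - \vx_i^{(t)},\, \vx_i^{(t)} - \vx_i^{(t-1)} \rangle
    \;\leq\; \frac{\diam_{\cX_i}}{\eta}\, \|\vx_i^{(t)} - \vx_i^{(t-1)}\|_2.
\end{equation*}

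Next, I would pass from the ``look-ahead'' utility vector $\vu_i(\vw_{-i}^{(t)})$ (which drives the $\cgd$ update) to the true utility vector $\vu_i(\vx_{-i}^{(t)})$ (which defines the best-response gap) by a straightforward decomposition:
\begin{equation*}
    \langle \vx_i^\star - \vx_i^{(t)},\, \vu_i(\vx_{-i}^{(t)}) \rangle
    = \langle \vx_i^\star - \vx_i^{(t)},\, \vu_i(\vw_{-i}^{(t)}) \rangle
    + \langle \vx_i^\star - \vx_i^{(t)},\, \vu_i(\vx_{-i}^{(t)}) - \vu_i(\vw_{-i}^{(t)}) \rangle.
\end{equation*}
Applying Cauchy--Schwarz to the last term and bounding $\|\vx_i^\star - \vx_i^{(t)}\|_2 \leq \diam_{\cX_i}$ gives
\begin{equation*}
    \langle \vx_i^\star - \vx_i^{(t)},\, \vu_i(\vx_{-i}^{(t)}) \rangle
    \;\leq\; \frac{\diam_{\cX_i}}{\eta}\, \|\vx_i^{(t)} - \vx_i^{(t-1)}\|_2
    + \diam_{\cX_i}\, \|\vu_i(\vx_{-i}^{(t)}) - \vu_i(\vw_{-i}^{(t)})\|_2.
\end{equation*}
Since the bound is uniform in $\vx_i^\star \in \cX_i$, taking $\vx_i^\star \defeq \arg\max_{\vx_i' \in \cX_i} \langle \vx_i', \vu_i(\vx_{-i}^{(t)}) \rangle$ turns the left-hand side into $\brgap_i(\vx^{(t)})$, yielding the claimed inequality.

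The proof involves no conceptual obstacle beyond the two standard manipulations above; the only mildly delicate point is recognizing that the approximation error $\epsilon^{(t)}$ from the fixed-point tolerance does not appear explicitly in the final bound—it is absorbed implicitly into the term $\|\vu_i(\vx_{-i}^{(t)}) - \vu_i(\vw_{-i}^{(t)})\|_2$, which downstream (e.g.\ in the regret analysis building on \Cref{theorem:cgd}) will be controlled by the Lipschitz continuity of $\vu_i$ together with the hypothesis $\|\vw^{(t)} - \vx^{(t)}\|_2 \leq \epsilon^{(t)}$. Keeping the statement in terms of $\|\vu_i(\vx_{-i}^{(t)}) - \vu_i(\vw_{-i}^{(t)})\|_2$ (rather than $\epsilon^{(t)}$) is in fact what makes the lemma convenient for subsequent use.
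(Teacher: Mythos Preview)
Your proof is correct and follows essentially the same approach as the paper: both start from the first-order optimality condition for the prox step, decompose $\vu_i(\vx_{-i}^{(t)})$ into $\vu_i(\vw_{-i}^{(t)})$ plus a correction term, and then apply Cauchy--Schwarz with the diameter bound on each piece. The only cosmetic difference is that the paper carries $\langle \vx_i^{(t)} - \vxstar_i, \cdot \rangle$ and argues via a lower bound, whereas you carry $\langle \vxstar_i - \vx_i^{(t)}, \cdot \rangle$ and argue via an upper bound---these are of course equivalent up to sign.
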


\begin{proof}
    By the first-order optimality condition, it follows that for any $\vxstar_i \in \cX_i$,
    \begin{equation*}
        \left\langle \vx_i^{(t)} - \vxstar_i , \eta \vu_i(\vw_{-i}^{(t)}) - (\vx_i^{(t)} - \vx_i^{(t-1)}) \right\rangle \geq 0,
    \end{equation*}
    in turn implying that
    \begin{equation}
        \label{eq:fbound}
        \langle \vx_i^{(t)}- \vxstar_i, \vu_i(\vw_{-i}^{(t)}) \rangle \geq - \frac{1}{\eta} \langle \vxstar_i - \vx_i^{(t)}, \vx_i^{(t)} - \vx_i^{(t-1)} \rangle.
    \end{equation}
    Furthermore,
    \begin{align}
        \langle \vx_i^{(t)}- \vxstar_i, \vu_i(\vw_{-i}^{(t)}) \rangle &= \langle \vx_i^{(t)}- \vxstar_i, \vu_i(\vx_{-i}^{(t)}) \rangle + \langle \vx_i^{(t)}- \vxstar_i, \vu_i(\vw_{-i}^{(t)}) - \vu_i(\vx_{-i}^{(t)}) \rangle \notag \\
        &\leq \langle \vx_i^{(t)}- \vxstar_i, \vu_i(\vx_{-i}^{(t)}) \rangle + \|\vx_i^{(t)} - \vxstar_i\|_2 \|\vu_i(\vw_{-i}^{(t)}) - \vu_i(\vx_{-i}^{(t)}) \|_2 \notag \\
        &\leq \langle \vx_i^{(t)}- \vxstar_i, \vu_i(\vx_{-i}^{(t)}) \rangle +  \diam_{\cX_i} \|\vu_i(\vw_{-i}^{(t)}) - \vu_i(\vx_{-i}^{(t)}) \|_2. \label{align:finalbound}
    \end{align}
    Combining~\eqref{eq:fbound} and \eqref{align:finalbound} yields that
    \begin{equation*}
        \langle \vx_i^{(t)}, \vu_i(\vx_{-i}^{(t)}) \rangle - \max_{\vxstar_i \in \cX_i} \langle \vxstar_i, \vu_i(\vx_{-i}^{(t)}) \rangle \geq - \diam_{\cX_i} \|\vu_i(\vw_{-i}^{(t)}) - \vu_i(\vx_{-i}^{(t)}) \|_2 - \frac{\diam_{\cX_i}}{\eta} \|\vx_i^{(t)} - \vx_i^{(t-1)} \|_2,
    \end{equation*}
    concluding the proof.
\end{proof}

\begin{corollary}
    \label{cor:reg-cgd}
    For any $T \in \N$, the regret of player $i \in \range{n}$ under $\cgd$ can be bounded as
    \begin{equation*}
        \reg_i^{(T)} \leq \frac{\diam^2_{\cX_i}}{2 \eta} - \frac{\eta}{4 D_{\cX_i}^2} 
\sum_{t=1}^T \left( \brgap_i(\vx^{(t)}) \right)^2 + L_i \diam_{\cX_i} \sum_{t=1}^T \epsilon^{(t)} + \frac{1}{2} \eta \sum_{t=1}^T \|\vu_i(\vw_{-i}^{(t)}) - \vu_i(\vx_{-i}^{(t)}) \|^2_2.
    \end{equation*}
    In particular, if $\eta \defeq \frac{1}{2L}$ and $\epsilon^{(t)} \leq \frac{\diam_{\cX_i}}{t^2}$ for any $t \in \range{T}$,
    \begin{equation*}
        \reg_i^{(T)} \leq 3 L \diam^2_{\cX_i} - \frac{1}{8 L D_{\cX_i}^2} 
\sum_{t=1}^T \left( \brgap_i(\vx^{(t)}) \right)^2 + \frac{1}{2} \eta \sum_{t=1}^T \|\vu_i(\vw_{-i}^{(t)}) - \vu_i(\vx_{-i}^{(t)}) \|^2_2.
    \end{equation*}
\end{corollary}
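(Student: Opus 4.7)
\textbf{Proof proposal for Corollary \ref{cor:reg-cgd}.} The plan is to combine the raw regret bound from Theorem \ref{theorem:cgd} with the best-response characterization from Lemma \ref{lemma:br-cgd}. The bound in Theorem \ref{theorem:cgd} already contains all the ingredients we want except that the negative term is expressed through the iterate path length $\sum_t \|\vx_i^{(t)} - \vx_i^{(t-1)}\|_2^2$ rather than through the cumulative squared best-response gap. Thus the whole task reduces to lower bounding this path length by $\sum_t (\brgap_i(\vx^{(t)}))^2$, up to an additive term that can be absorbed into the other parts of the bound.

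First, I would square the inequality of Lemma \ref{lemma:br-cgd} and use $(a+b)^2 \leq 2a^2 + 2b^2$ to obtain, for each $t$,
\begin{equation*}
(\brgap_i(\vx^{(t)}))^2 \leq 2 \diam^2_{\cX_i} \|\vu_i(\vw_{-i}^{(t)}) - \vu_i(\vx_{-i}^{(t)})\|_2^2 + \frac{2 \diam^2_{\cX_i}}{\eta^2} \|\vx_i^{(t)} - \vx_i^{(t-1)}\|_2^2.
\end{equation*}
Rearranging and multiplying by $-\eta/(4 \diam^2_{\cX_i})$ gives
\begin{equation*}
-\frac{1}{2\eta} \|\vx_i^{(t)} - \vx_i^{(t-1)}\|_2^2 \leq -\frac{\eta}{4 \diam^2_{\cX_i}} (\brgap_i(\vx^{(t)}))^2 + \frac{\eta}{2} \|\vu_i(\vw_{-i}^{(t)}) - \vu_i(\vx_{-i}^{(t)})\|_2^2.
\end{equation*}
Summing over $t \in \range{T}$ and plugging into the regret bound of Theorem \ref{theorem:cgd} immediately yields the first displayed inequality of the corollary.

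For the specialized bound with $\eta = 1/(2L)$ and $\epsilon^{(t)} \leq \diam_{\cX_i}/t^2$, the step is purely arithmetic: the leading term becomes $\diam^2_{\cX_i}/(2\eta) = L \diam^2_{\cX_i}$, the coefficient on the cumulative squared best-response gap becomes $1/(8 L \diam^2_{\cX_i})$, and the error-from-fixed-points term is controlled using $\sum_{t=1}^\infty 1/t^2 = \pi^2/6 < 2$ together with $L_i \leq L$ (which follows from the fact that the global $L$-Lipschitz continuity of $F$ dominates the Lipschitz constant of each coordinate operator $\vu_i$). This gives $L_i \diam_{\cX_i} \sum_{t=1}^T \epsilon^{(t)} \leq 2 L \diam^2_{\cX_i}$, and combining with the leading $L \diam^2_{\cX_i}$ yields the stated $3 L \diam^2_{\cX_i}$ constant.

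There is no genuine obstacle: the only mild care is making sure the factor-of-two losses from $(a+b)^2 \leq 2a^2 + 2b^2$ line up with the constants written in the statement, and checking the harmonic-type summation $\sum 1/t^2$ fits within the declared constant. Both are routine.
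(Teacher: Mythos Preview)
Your proposal is correct and follows essentially the same argument as the paper: square the inequality of Lemma~\ref{lemma:br-cgd} via $(a+b)^2 \le 2a^2 + 2b^2$, rearrange to lower bound $\tfrac{1}{2\eta}\|\vx_i^{(t)} - \vx_i^{(t-1)}\|_2^2$ by $\tfrac{\eta}{4\diam_{\cX_i}^2}(\brgap_i(\vx^{(t)}))^2 - \tfrac{\eta}{2}\|\vu_i(\vw_{-i}^{(t)}) - \vu_i(\vx_{-i}^{(t)})\|_2^2$, and substitute into Theorem~\ref{theorem:cgd}. The specialization with $\sum_t 1/t^2 < 2$ and $L_i \le L$ is also exactly what the paper does.
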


\begin{proof}
    By \Cref{lemma:br-cgd}, it follows that for any $t \in \range{T}$,
    \begin{equation*}
        \frac{1}{2\eta} \|\vx_i^{(t)} - \vx_i^{(t-1)} \|_2^2 \geq \frac{\eta}{4 D_{\cX_i}^2} \left( \brgap_i(\vx^{(t)}) \right)^2 - \frac{1}{2} \eta \|\vu_i(\vw_{-i}^{(t)}) - \vu_i(\vx_{-i}^{(t)}) \|^2_2.
    \end{equation*}
    Summing over all $t \in \range{T}$ and combining with \Cref{theorem:cgd} implies the statement since $L_i \diam_{\cX_i} \sum_{t=1}^T \epsilon^{(t)} \leq L_i \diam_{\cX_i}^2 \sum_{t=1}^T \frac{1}{t^2}  \leq 2 L \diam_{\cX_i}^2$.
\end{proof}

We are now ready to prove \Cref{theorem:efficiency-noregret-formal}, which is recalled below.

\cgdno*

\begin{proof}
    First, by~\Cref{cor:reg-cgd},
    \begin{align}
        \sum_{i=1}^n \reg_i^{(T)} &\leq 3 L \sum_{i=1}^n \diam^2_{\cX_i} - \frac{1}{8 L D_{\cX}^2} 
            \sum_{i=1}^n \sum_{t=1}^T \left( \brgap_i(\vx^{(t)}) \right)^2 + \frac{1}{2} \eta \sum_{t=1}^T \sum_{i=1}^n \|\vu_i(\vw_{-i}^{(t)}) - \vu_i(\vx_{-i}^{(t)}) \|^2_2 \notag \\
            &= 3 L \diam^2_{\cX} - \frac{1}{8 L D_{\cX}^2} 
            \sum_{i=1}^n \sum_{t=1}^T \left( \brgap_i(\vx^{(t)}) \right)^2 + \frac{1}{2} \eta \sum_{t=1}^T \|F(\vw^{(t)}) - F(\vx^{(t)}) \|^2_2 \notag \\
            &\leq 3 L \diam^2_{\cX} - \frac{1}{8 L D_{\cX}^2} 
            \sum_{i=1}^n \sum_{t=1}^T \left( \brgap_i(\vx^{(t)}) \right)^2 + \frac{1}{2} \eta L^2 \sum_{t=1}^T \| \vw^{(t)} - \vx^{(t)} \|^2_2 \notag \\
            &\leq 3 L \diam^2_{\cX} - \frac{1}{8 L D_{\cX}^2} 
            \sum_{i=1}^n \sum_{t=1}^T \left( \brgap_i(\vx^{(t)}) \right)^2 + \frac{1}{2} \eta L^2 \sum_{t=1}^T (\epsilon^{(t)})^2 \notag \\
            &\leq 4 L \diam_{\cX}^2 - \frac{1}{8 L D_{\cX}^2} 
            \sum_{i=1}^n \sum_{t=1}^T \left( \brgap_i(\vx^{(t)}) \right)^2,\label{align:CCE}
    \end{align}
    where we used the fact that $\sum_{t=1}^T (\epsilon^{(t)})^2 \leq \diam_{\cX}^2 \sum_{t=1}^T \frac{1}{t^4} \leq 2 \diam_{\cX}^2$ and $\eta = \frac{1}{2L}$. As a result, \Cref{item:cce-full} follows directly from~\eqref{align:CCE} by invoking the well-known fact that the CCE gap is bounded by $\max_{1 \leq i \leq n} \reg_i^{(T)}$.
    
    For \Cref{item:sw-full}, let us fix any $\epsilon \geq \epsilon_0$. Suppose that at every time $t \in \range{T}$ it holds that $\brgap_i(\vx^{(t)}) > \epsilon$ for some player $i \in \range{n}$. Then, by \eqref{align:CCE} we conclude that
    \begin{align*}
        \sum_{i=1}^n \reg_i^{(T)} &\leq 4 L \diam_{\cX}^2 - \frac{1}{8 L \diam_{\cX}^2} \epsilon^2 T \leq - \frac{1}{16 L \diam_{\cX}^2} \epsilon^2 T,
    \end{align*}
    since $T \geq \frac{64 L^2 \diam_{\cX}^4}{\epsilon_0^2} \geq \frac{64 L^2 \diam_{\cX}^4}{\epsilon^2}$. As a result, by $(\lambda, \mu)$-smoothness, it follows that 
    $$\sum_{i=1}^n \reg_i^{(T)} \geq \lambda \opt_{\cG} T - (1 + \mu) \sum_{t=1}^T \sw(\vx^{(t)}),$$ 
    in turn implying that
    \begin{equation*}
        \frac{1}{T} \sum_{t=1}^T \sw(\vx^{(t)}) \geq \rho_{\cG}(\lambda, \mu) \cdot \opt_\cG - \frac{1}{T} \sum_{i=1}^n \reg_i^{(T)} \geq \rho_{\cG}(\lambda, \mu) \cdot \opt_\cG
        + \frac{\epsilon^2}{16 (\mu+1) L \diam_{\cX}^2}.
    \end{equation*}
    As a result, there is a time $t^\star \in \range{T}$ such that $\sw(\vx^{(t^\star)}) \geq \rho_{\cG}(\lambda, \mu) \cdot \opt_\cG
        + \frac{\epsilon^2}{16 (\mu+1) L \diam_{\cX}^2}$. In the contrary case, if there is a time $t^\star \in \range{T}$ such that $\brgap_i(\vx^{(t^\star)}) \leq \epsilon$ for any player $i \in \range{n}$, it follows that $\sw(\vx^{(t^\star)}) \geq \poa_\cG^\epsilon \cdot \opt_\cG$ (by definition). This concludes the proof.
\end{proof}

In particular, we note that \Cref{theorem:efficiency-noregret} stated earlier in the main body is an immediate consequence of~\Cref{theorem:efficiency-noregret-formal} under \Cref{condition:nontight} since $\rpoa_\cG \geq \rho_\cG$. It is also worth noting that for the special case of normal-form games, one can state~\Cref{theorem:efficiency-noregret-formal} so that the first term in the right-hand side of~\eqref{eq:sw-bound} reads $\rho_\cG(\lambda, \mu) \cdot \opt_\cG + \frac{\epsilon^2}{32 (\mu+1) L}$; thus, for a broad class of games (see \Cref{claim:Lip-graphical,lemma:polymatrix,lemma:sens}), the improvement over the smoothness bound is an absolute constant when $\epsilon$ and $\mu$ are also bounded by absolute constants. Relatedly, we should note that \Cref{theorem:implicit-hardness} applies in the regime where $\mu$ is polynomially bounded (we are not aware of any application of smoothness where this is not the case).

\begin{remark}
    \label{remark:most-cgd}
    It is direct to see that \Cref{item:sw-full} of~\Cref{theorem:efficiency-noregret-formal} can be refined so that there is a set $S \subseteq \range{T}$, with $|S| \geq (1 - \gamma) T$, so that
        \begin{equation*}
            \frac{1}{|S|} \sum_{t \in S} \sw(\vx^{(t)}) \geq \sup_{\epsilon \geq \epsilon_0} \min \left \{ \rho_{\cG} \cdot \opt_\cG + \frac{\gamma \epsilon^2}{16 (\mu+1) L \diam_{\cX}^2}, \poa_{\cG}^{\epsilon} \cdot \opt_\cG \right\}.
        \end{equation*}
\end{remark}

\begin{remark}
    \Cref{theorem:efficiency-noregret-formal} can also be refined using the primal-dual framework of~\citet{Nadav10:The} discussed in~\Cref{remark:primaldual}, with the caveat that the variables $\{z_i \}_{i=1}^n$ need to again have a bounded pairwise ratio. It is interesting to note, however, that using $\cgd$ enables making non-trivial conclusions even when a subset of the variables $\{z_i \}_{i=1}^n$ are zero in an optimal solution. In particular, when the quantity $\sum_{i=1}^n z_i \reg_i^{(T)}$ is nonnegative, we can readily draw the non-trivial conclusion that all players in the set $\{ i \in \range{n} : z_i \neq 0 \}$ are eventually best responding  (by virtue of~\Cref{cor:reg-cgd}). It is not at all clear if such conclusions apply to $\ogd$ as well. In other words, using $\cgd$ one can essentially replace $\poa_\cG^\epsilon$ in \Cref{theorem:efficiency-noregret-formal} by the price of anarchy with respect to a solution concept in which a \emph{particular subset} of players are best responding. Shedding light to this type of guarantee necessitates understanding the implications of having some of the variables $\{z_i \}_{i=1}^n$ being zero in an optimal solution of the LP~\eqref{eq:LP}. Relatedly, even if we constraint $z_1 = z_2 = \dots = z_n = z$, can we characterize the games for which $z \approx 0$? We explained earlier in~\Cref{remark:rpoa-rho} that such pathologies (typically) occur in constant-sum games, for which questions concerning social welfare are trivial.
\end{remark}

\subsection{On the complexity of finding a non-trivial CCE}
\label{sec:nontrivial}

Here, we discuss how~\Cref{theorem:efficiency-noregret-formal} we established earlier interacts with the main hardness result of~\citet{Barman15:Finding}, which pertains the complexity of computing coarse correlated equilibria (CCE) with good social welfare (\Cref{theorem:Barman}). For our discussion here, we focus solely on games in normal form. We recall the following central definition.

\begin{definition}[CCE]
    \label{def:CCE}
    Let $\cG$ be an $n$-player game. A distribution $\vmu \in \Delta(\prod_{i=1}^n \cA_i)$ is an \emph{$\epsilon$-coarse correlated equilibrium} if for any player $i \in \range{n}$ and deviation $a_i' \in \cA_i$,
    \begin{equation*}
        \E_{\vec{a} \sim \vmu} [u_i(\vec{a})] \geq \E_{\vec{a} \sim \vmu} [u_i(a_i', \vec{a}_{-i})] - \epsilon.
    \end{equation*}
\end{definition}

For $\vmu \in \Delta(\prod_{i=1}^n \cA_i)$, we also let $\sw(\vmu) \defeq \E_{\vec{a} \sim \vmu} [\sw(\vec{a})]$. With this definition in mind, the main result of~\citet{Barman15:Finding} is summarized below.

\begin{theorem}[\citep{Barman15:Finding}]
    \label{theorem:Barman}
    Let $\cG$ be an $n$-player game with a succinct representation and the polynomial expectation property~\citep{Papadimitriou08:Computing}. Determining whether $\cG$ admits a CCE $\vmu \in \Delta(\prod_{i=1}^n \cA_i)$ such that $\sw(\vmu) > \sw(\vmu')$, where $\vmu' \in \Delta(\prod_{i=1}^n \cA_i)$ is the CCE with the minimum social welfare, is $\NP$-hard.
\end{theorem}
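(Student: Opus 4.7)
The plan is to prove \NP-hardness via a reduction from the problem of deciding whether a succinct game $\cG_0$ with the polynomial expectation property admits a CCE of social welfare at least a specified threshold $\tau$, a problem known to be \NP-hard under the polynomial expectation property~\citep{Papadimitriou08:Computing}. Given such an instance $(\cG_0, \tau)$, I would construct a new game $\cG$ by embedding $\cG_0$ alongside an auxiliary ``opt-out'' component, designed so that (i) there is a CCE of $\cG$ attaining a known, efficiently computable welfare value $w_{\min}$, and (ii) this $w_{\min}$ is provably the minimum welfare over all CCE of $\cG$, so that the existence of a CCE with welfare strictly greater than $w_{\min}$ is equivalent to the existence of a CCE in $\cG_0$ with welfare at least $\tau$.

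Concretely, the construction would proceed in three steps. First, shift $\cG_0$'s utilities by additive constants to normalize the welfare range; such shifts preserve the set of CCE and are immediate under the succinct representation. Second, adjoin to every player an ``opt-out'' action whose payoffs are calibrated so that the pure profile in which every player opts out is a CCE of $\cG$ of welfare exactly $w_{\min}$, and so that \emph{every} CCE of $\cG$ achieves welfare at least $w_{\min}$; the latter is enforced by making the opt-out payoffs sufficiently attractive whenever the embedded $\cG_0$ play would drop the welfare below $w_{\min}$. Third, verify both reduction directions: a CCE of $\cG_0$ of welfare $\geq \tau$ lifts (together with a trivially mixed auxiliary outcome) to a CCE of $\cG$ of welfare strictly above $w_{\min}$, and conversely any CCE of $\cG$ exceeding $w_{\min}$ must place enough mass on ``engaged'' profiles that its marginal on $\cG_0$ certifies welfare $\geq \tau$.

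The main obstacle is step two: engineering the opt-out component so that it simultaneously pins down $w_{\min}$ as the minimum CCE welfare, keeps $w_{\min}$ efficiently computable without solving the hard subproblem, and preserves the polynomial expectation property under product distributions. Additional subtleties include handling the strict-versus-weak inequality in the theorem statement via a small additive margin, and ensuring the succinct representation is not blown up by the auxiliary gadget. If a direct opt-out construction turns out to be too delicate, a fallback is a direct reduction from 3-\textsc{Sat}: encode variable assignments as players' mixed strategies and clauses as gadget players whose utilities vanish exactly on satisfying profiles, so that the worst-welfare CCE corresponds to an ``unsatisfying'' baseline and any strictly better CCE exhibits a satisfying assignment.
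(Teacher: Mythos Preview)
Your overall architecture---augmenting each player with an ``opt-out'' action so that a known baseline profile becomes the minimum-welfare CCE---matches the shape of the reduction the paper describes. But there is a genuine gap in the choice of starting problem, and a missing structural idea in the gadget.

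\textbf{The starting problem is circular.} You reduce from ``does $\cG_0$ admit a CCE of welfare $\geq \tau$,'' citing \citet{Papadimitriou08:Computing}. That reference establishes hardness of welfare-optimization over \emph{correlated} equilibria, not coarse correlated equilibria; since CCE is a strictly larger set, hardness does not transfer in that direction. Worse, the CCE-welfare-threshold problem is essentially the content of the theorem you are trying to prove, so invoking it as the source of hardness is circular. The reduction the paper describes (following \citet{Barman15:Finding}) instead starts from the \emph{unconstrained} welfare-maximization decision problem: given a succinct game $\cG$ and a threshold $k$, decide whether some pure action profile $\vec{a}$ satisfies $\sw(\vec{a}) \geq k$. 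This is \NP-hard in appropriate succinct classes and carries no equilibrium constraint at all.

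\textbf{The gadget needs a ``team-game'' trick you have not supplied.} In your embedding you keep the original utilities of $\cG_0$ and hope that a high-welfare CCE of $\cG_0$ lifts to a CCE of $\cG$. But verifying the lift---and especially the converse, that any CCE of $\cG$ beating $w_{\min}$ projects to a CCE of $\cG_0$ of welfare $\geq \tau$---forces you to reason about the equilibrium structure of $\cG_0$, which is exactly what makes the problem hard. The actual construction sidesteps this entirely: in the new game $\cG'$, for any non-opt-out profile $\vec{a}'\in\cA$ one \emph{redefines} every player's utility to be $u_i'(\vec{a}') \defeq \sw(\vec{a}')/n$. This makes \emph{any} high-welfare pure profile automatically a pure Nash equilibrium (all players are indifferent among non-opt-out actions and prefer staying in to the opt-out payoff $k/n$), so the equivalence ``$\exists\,\vec{a}$ with $\sw(\vec{a})\geq k$ iff $\exists$ CCE in $\cG'$ with welfare $> \sw'(\vec{b})=\epsilon$'' goes through without ever touching the equilibrium structure of the original $\cG$. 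Your ``engineer the opt-out payoffs'' step two is precisely where this idea is needed, and without it the converse direction of your reduction does not close.
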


This result is quite remarkable as it implies that even determining whether a game admits a non-trivial CCE is computationally hard. We should note here that an analogous result also holds when $\vmu$ is instead a $(1/\poly(n))$-CCE, which is important for connecting it with \Cref{theorem:efficiency-noregret-formal}. This indeed brings the natural question of how~\Cref{theorem:Barman} interacts with \Cref{theorem:efficiency-noregret-formal}. To this end, we first have to point out an important characterization of $\rpoa$ due to~\citet{Nadav10:The}; namely, \citet{Nadav10:The} showed that the robust price of anarchy characterizes exactly the worst-case equilibria over a relaxation of CCE, defined below. For simplicity, we can assume that the underlying game $\cG$ has a unique action profile $\vec{a}^\star$ that maximizes the social welfare; this can always be enforced by adding an arbitrarily small perturbation in the utilities. 

\begin{definition}
    \label{def:ACCE*}
    Let $\cG$ be an $n$-player game and $\vec{a}^\star \in \prod_{i=1}^n \cA_i$ be a welfare-maximizing action profile. A distribution $\vmu \in \Delta(\prod_{i=1}^n \cA_i)$ is an \emph{average} CCE$^*$ (ACCE$^*$) if
    \begin{equation}
        \label{eq:acce*}
        \sum_{i=1}^n \left( \E_{\vec{a} \sim \vmu} [u_i(\vec{a})] - \E_{\vec{a} \sim \vmu} [u_i(a_i^\star, \vec{a}_{-i})] \right) \geq 0.
    \end{equation}
\end{definition}

An ACCE$^*$ has two main differences compared to a CCE (\Cref{def:CCE} for $\epsilon \defeq 0$). First, in~\eqref{eq:acce*} we only consider the deviation benefit compared to the component of the welfare-maximizing action profile, whereas in a CCE the deviation can be arbitrary. Second, \eqref{eq:acce*} prescribes that it is enough if the \emph{average} deviation benefit is nonpositive, which means that some players could have a significant deviation benefit as long as the cumulative one is at most $0$. The importance of \Cref{def:ACCE*} lies in the following characterization, which is a consequence of strong duality.

\begin{theorem}[\citep{Nadav10:The}]
    \label{theorem:acce}
    For any game $\cG$, $\rpoa_\cG$ is equal to the worst-case price of anarchy with respect to the set of ACCE$^*$.
\end{theorem}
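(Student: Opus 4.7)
The approach is via linear-programming duality: I will cast the worst-case social welfare over ACCE$^*$ as a finite-dimensional primal LP in the distribution $\vmu$, compute its dual, and recognize the dual as precisely (a rescaling of) the LP~\eqref{eq:rpoa-LP} that defines $\rpoa_\cG$.

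Abbreviating $g(\vec{a}) \defeq \sum_{i=1}^n [u_i(a_i^\star, \vec{a}_{-i}) - u_i(\vec{a})]$, the worst-case ACCE$^*$ welfare is the optimum of
\begin{equation*}
    \min_{\vmu \geq 0} \ \sum_{\vec{a}} \sw(\vec{a}) \vmu(\vec{a}) \quad \text{s.t.} \quad \sum_{\vec{a}} g(\vec{a}) \vmu(\vec{a}) \leq 0, \quad \sum_{\vec{a}} \vmu(\vec{a}) = 1.
\end{equation*}
Assigning dual multipliers $z \geq 0$ to the ACCE$^*$ inequality and $\rho \cdot \opt_\cG$ (free) to the normalization, the stationarity condition associated with each nonnegative variable $\vmu(\vec{a})$ reads $\sw(\vec{a}) + z \, g(\vec{a}) \geq \rho \cdot \opt_\cG$, i.e., $z \sum_{i=1}^n [u_i(a_i^\star, \vec{a}_{-i}) - u_i(\vec{a})] \geq \rho \cdot \sw(\vec{a}^\star) - \sw(\vec{a})$ for every $\vec{a}$. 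The dual therefore maximizes $\rho \cdot \opt_\cG$ over $z \geq 0$ and $\rho$ subject to exactly the constraints appearing in~\eqref{eq:rpoa-LP}; in particular, its optimum equals $\rpoa_\cG \cdot \opt_\cG$.

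Strong LP duality then completes the argument. Both LPs are finite-dimensional. The primal is feasible since any Nash equilibrium---in fact, any CCE---of $\cG$ lies in ACCE$^*$: summing the CCE inequalities for the single deviation $a_i^\star$ yields the ACCE$^*$ condition, and it is bounded because utilities are bounded. The dual is feasible by taking $z = 0$ and $\rho$ sufficiently small. Strong duality therefore forces the two optima to coincide, which upon dividing by $\opt_\cG > 0$ gives $\inf_{\vmu \in \text{ACCE}^*} \sw(\vmu)/\opt_\cG = \rpoa_\cG$, as claimed.

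The main thing to get right is the dual derivation with its sign conventions, which is routine but error-prone; an additional delicate point is the degenerate boundary case in which every optimal dual solution has $z = 0$ (as in constant-sum games; cf.~\Cref{remark:rpoa-rho}), but this concerns only the interpretation of $\rpoa_\cG$ itself and not the duality argument.
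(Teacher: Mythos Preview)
Your proposal is correct and follows exactly the approach the paper indicates: the paper does not give its own proof but cites the result from \citet{Nadav10:The} and notes it ``is a consequence of strong duality.'' Your derivation fills in precisely that duality argument, casting the worst-case ACCE$^*$ welfare as a primal LP whose dual coincides with~\eqref{eq:rpoa-LP}, and invoking strong duality via feasibility and boundedness.
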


As a result, a natural question that arises is whether~\Cref{theorem:Barman} can be modified by letting $\vmu'$ be the ACCE$^*$ with the minimum social welfare. To understand this question, let us describe the reduction of~\citet{Barman15:Finding}. 

Let $\cG$ be succinct game from a class of games in which computing a welfare-maximizing action profile is $\NP$-hard; \citet{Papadimitriou08:Computing} provide a number of such classes under the additional constraint of returning a correlated equilibrium, which can be readily relaxed while maintaining hardness results. For a given parameter $k$, we consider the decision version of the problem, in which we want determine whether there exists an action profile $\vec{a} \in \prod_{i=1}^n \cA_i$ such that $\sw(\vec{a}) \geq k$. \citet{Barman15:Finding} consider a new $n$-player game $\cG'$, defined as follows. For every player $i \in \range{n}$, we let $\cA_i' \defeq \cA_i \cup \{ b_i \}$; we refer to $b_i$ as the augmented action. The players' utilities are then defined as follows.
    \begin{enumerate}
        \item For every action profile $\vec{a}' \in \cA$, we let $u_i'(\vec{a}') \defeq \sw(\vec{a}')/n$;
        \item for every action profile $\vec{a}'$ such that exactly one player $i$ is selecting the augmented action, we let $u'_i(\vec{a}') = k/n$ and $0$ for all other players;
        \item on the other hand, if more than one players are selecting the augmented action, we let $u_i'(\vec{a}') = \epsilon/n$ for all players $i$ who selected the augmented action, and $0$ otherwise. Parameter $\epsilon \in \R_{>0}$ is selected so that $k > \epsilon \geq k/n$.
    \end{enumerate}
    The first observation is that $\cG'$ is also a succinct game, and if the polynomial expectation property holds for $\cG$, the same applies for $\cG'$. Now, the key claim is that there exists an action profile $\vec{a} \in \cA$ with $\sw(\vec{a}) \geq k$ if and only if there exists a CCE in $\vmu \in \Delta(\prod_{i=1}^n \cA_i)$ in $\cG'$ such that $\sw'(\vmu) > \sw'(\vec{b})$. Further, \Cref{theorem:Barman} follows from the fact that $\vec{b}$ is a pure Nash equilibrium in $\cG'$, and no CCE can have social welfare below $\sw'(\vec{b}) = \epsilon$.

For our purposes, when we instead want to compare a CCE $\vmu$ with the ACCE$^*$ that has the minimum social welfare, the last claim no longer holds. To see this, suppose that the instance is such that the welfare-maximizing action profile $\vec{a}^\star$ in $\cG'$ belongs to $\cA$. In that case, the (pure-strategy) profile in which, for example, only two of the players are selecting the augmented action is easily seen to be an (A)CCE$^*$; indeed, switching to $a_i^\star$ can only decrease the utility of a player $i$. At the same time, that strategy profile attains a welfare of only $2\epsilon/n \ll \sw(\vec{b}) = \epsilon$. In contrast, it is worth pointing out that the argument of~\citet{Barman15:Finding} carries over when considering ACCE with respect to the fixed deviation $\vec{b}$.

Instead, we will slightly modify the above construction to first provide a hardness result for a slightly different but related problem.

\begin{theorem}
    \label{theorem:sumregs}
    Let $\cG'$ be a succinct $n$-player game with the polynomial expectation property and some given action profile $\vec{b}$. Determining whether there is a sequence of $T \in \N$ product distributions $(\vx^{(t)})_{1 \leq t \leq T}$ such that $\sum_{i=1}^n \reg_i^{(T)}(b_i) < 0$ is \NP-hard.
\end{theorem}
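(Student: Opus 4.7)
The plan is to reduce from the $\NP$-hard problem of deciding whether a succinct game (with the polynomial expectation property) in a suitable class---such as bounded-degree graphical games---admits a pure action profile with social welfare above a given threshold $k$. I adapt the construction of \citet{Barman15:Finding} recalled in \Cref{sec:nontrivial}: given such a game $\cG$ and threshold $k$, I form the augmented game $\cG'$ with $\cA_i' \defeq \cA_i \cup \{b_i\}$, and utilities $u'_i(\vec{a}) = \sw(\vec{a})/n$ for $\vec{a} \in \cA$, $u'_i = k/n$ for the unique augmented player when exactly one plays augmented (others get $0$), and $u'_i = \epsilon/n$ for each augmented player when at least two do so (others get $0$), where $k > \epsilon \geq k/n$. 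The fixed deviation in the statement is $\vec{b} \defeq (b_1, \dots, b_n)$. Since $\sw'$ of a product distribution is a polynomial-time evaluable function of the $\cG$-expected utilities and the marginals $\Pr[x_i = b_i]$, succinctness and the polynomial expectation property transfer from $\cG$ to $\cG'$.

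First, by linearity of the regret and averaging, the existence of a sequence of product distributions $(\vx^{(t)})_{t=1}^T$ with $\sum_i \reg_i^{(T)}(b_i) < 0$ is equivalent to the existence of a single product distribution $\vx$ with $\sum_i u'_i(b_i, \vx_{-i}) < \sw'(\vx)$; the reverse direction is immediate via $T = 1$. It therefore suffices to establish $\NP$-hardness of this single-distribution variant.

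Second, I parameterize an arbitrary product distribution $\vx$ by $p_i \defeq \Pr[x_i = b_i]$, $q_i \defeq 1 - p_i$, and the conditional distributions $\vec{y}_i \in \Delta(\cA_i)$. A case analysis of $\cG'$ over the events determining each $u'_i(\vec{a}')$---all players in $\cA$, exactly one augmented, at least two augmented with $i$ among them, at least two augmented with $i$ outside---yields closed-form expressions for $\sw'(\vx)$ and for $\sum_i u'_i(b_i, \vx_{-i})$. Subtracting and using the identity $\sum_i \prod_{j \neq i} q_j = \sum_i p_i \prod_{j \neq i} q_j + n \prod_j q_j$ (which follows from $p_i = 1 - q_i$) to cancel the $k/n$ terms gives
\begin{equation*}
\sw'(\vx) - \sum_i u'_i(b_i, \vx_{-i}) = (\sw(\vec{y}) - k) \prod_j q_j - \frac{\epsilon}{n} \sum_i q_i \biggl(1 - \prod_{j \neq i} q_j\biggr).
\end{equation*}
The second term is nonpositive, so $\sw'(\vx) > \sum_i u'_i(b_i, \vx_{-i})$ forces $\sw(\vec{y}) > k$ and $\prod_j q_j > 0$; conversely, any pure $\vec{a} \in \cA$ with $\sw(\vec{a}) > k$ witnesses the strict inequality via $p_i = 0$ for all $i$ and $\vec{y} = \vec{a}$.

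Consequently, the decision problem on $\cG'$ is equivalent to deciding whether $\cG$ admits a pure action profile of welfare strictly greater than $k$ (mixed and pure witnesses being equivalent by linearity of expectation), which is $\NP$-hard by choice of class. The main obstacle is the bookkeeping in the case analysis; once the four regimes of $u'_i(\vec{a}')$ are tracked carefully and the cancellation identity above is applied, the rest of the reduction is immediate.
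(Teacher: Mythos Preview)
Your proof is correct and follows the same reduction as the paper: both start from the \citet{Barman15:Finding} augmented game $\cG'$ with deviation profile $\vec{b}$, reduce the sequence question to a single product distribution, and show the difference $\sw'(\vx)-\sum_i u_i'(b_i,\vx_{-i})$ is positive iff some pure $\vec{a}\in\cA$ has $\sw(\vec{a})>k$.

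The one noteworthy difference is a parameter choice. You keep the original constraint $k>\epsilon\geq k/n$ and carry out the full four-case bookkeeping to obtain your explicit difference formula. The paper instead sets $\epsilon\defeq k$, which makes $u_i'(b_i,\vx_{-i})=k/n$ identically (since both the ``exactly one augmented'' and ``at least two augmented'' cases then pay the deviator $k/n$), so $\sum_i u_i'(b_i,\vx_{-i})=k$ for every $\vx$ with no case analysis at all; the rest is a two-line averaging argument bounding $\sw'$ on the event $\vec{a}'\notin\cA$. Your route buys a clean closed-form identity that works for any $\epsilon\geq 0$ and makes the structure of the gap transparent; the paper's tweak buys brevity by collapsing the deviation side to a constant. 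Neither approach dominates, but if you want to shorten your write-up, adopting $\epsilon=k$ eliminates the case analysis entirely.
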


\begin{proof}
    For convenience, we now suppose that the input problem requires discerning whether there exists an action profile $\vec{a} \in \cA$ such that $\sw(\vec{a}) > k$, that is, with a strict inequality. We consider the game $\cG'$ constructed above per~\citet{Barman15:Finding}, but with the modification that we now let $\epsilon \defeq k$. We claim that $\sw(\vec{a}) > k$ if and only if there exists a sequence of products distributions such that $\sum_{i=1}^n \reg_i^{(T)}(b_i) < 0$.  
    
    Indeed, suppose that there is an action profile $\vec{a} \in \cA$ in the original game $\cG$ such that $\sw(\vec{a}) > k$. Then, for the (product) distribution solely supported on $\vec{a}$ it holds that $\sum_{i=1}^n \reg_i^{(T)}(b_i) < 0$ since for any player $i \in \range{n}$ it follows that $u_i'(\vec{a}) = \sw(\vec{a})/n > k/n = u_i'(b_i, \vec{a}_{-i})$. Conversely, suppose that there is a sequence of products distributions such that $\sum_{i=1}^n \reg_i^{(T)}(b_i) < 0$. Then, 
$$\sum_{i=1}^n \reg_i^{(T)}(b_i) = \sum_{i=1}^n \sum_{t=1}^T \left( u'_i(b_i, \vx_{-i}^{(t)}) - u'_i(\vx^{(t)}) \right) = T \cdot k - \sum_{t=1}^T \sum_{i=1}^n u'_i(\vx^{(t)}). $$
Rearranging yields that $\frac{1}{T} \sum_{t=1}^T \sum_{i=1}^n u'_i(\vx^{(t)}) > k $. Thus, there is a time $t^\star \in \range{T}$ such that $\sum_{i=1}^n u'_i(\vx^{(t^\star)}) > k$. Further, 
$$ \sum_{i=1}^n u_i'(\vx^{(t^\star)}) \leq \pr_{\vec{a}' \sim \vx^{(t^\star)}}[\vec{a}' \in \cA] \E_{}[\sw( \vec{a}')] + (1 - \pr_{\vec{a}' \sim \vx^{(t^\star)}}[\vec{a}' \in \cA]) k,$$
where the expectation above is with respect to the conditional distribution, which is well-defined. Thus, the probabilistic method implies that there exists an action profile $\vec{a} \in \cA$ such that $\sw(\vec{a}) > k$. This concludes the proof.
\end{proof}

Moreover, we also point out that determining $\rpoa_{\cG'} \cdot \opt_{\cG'}$ in a succinct multi-player game $\cG'$ is \NP-hard. We have basically already proven this: using the same reduction as in~\Cref{theorem:sumregs}, when the welfare-maximizing action profile(s) in $\cG'$ lies in $\cA$, the worst-case social welfare of ACCE$^*$ is below $k$; in the contrary case, any ACCE$^*$ attains a social welfare of at least $k$. As a result, computing $\rpoa_{\cG'} \cdot \opt_{\cG'}$ in $\cG'$ would allow us to distinguish whether $\sw(\vec{a}) > k$ for some $\vec{a} \in \cA$. (In case there are multiple welfare-maximizing action profiles, we can define---in accordance with~\Cref{def:smoothness-games}---$\rpoa_{\cG}$ with respect to the one yielding the highest worst-case welfare.)

\begin{proposition}
    \label{theorem:rpoa}
    Determining the value of $\rpoa_{\cG'} \cdot \opt_{\cG'}$ in a succinct multi-player game $\cG'$ with the polynomial expectation property is \NP-hard.
\end{proposition}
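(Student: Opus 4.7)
The plan is to reuse the reduction from the proof of~\Cref{theorem:sumregs} almost verbatim, and combine it with the characterization of~\Cref{theorem:acce} that $\rpoa_{\cG'} \cdot \opt_{\cG'}$ equals the worst-case social welfare over the set of ACCE$^*$ (per~\Cref{def:ACCE*}). Concretely, I start from a succinct game $\cG$ from a class with the polynomial expectation property for which deciding whether some $\vec{a} \in \cA$ satisfies $\sw(\vec{a}) > k$ is $\NP$-hard (as in~\citet{Papadimitriou08:Computing}); I then build $\cG'$ exactly as in the proof of~\Cref{theorem:sumregs} with $\epsilon \defeq k$, and declare $\vec{a}^\star \defeq \vec{b} = (b_1,\dots,b_n)$ whenever there is a tie (which is consistent with the convention of picking $\vec{a}^\star$ that maximizes the worst-case welfare over the corresponding ACCE$^*$, as discussed in the paragraph preceding the proposition).

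The core of the argument is a two-case analysis. (i) If some $\vec{a} \in \cA$ attains $\sw_{\cG}(\vec{a}) > k$, then $\vec{a}^\star \in \cA$ and one can exhibit an ACCE$^*$ of welfare strictly below $k$: consider the pure profile where players $1,2$ play $b_1, b_2$ and every $i \geq 3$ plays $a_i^\star$. A direct computation shows the deviation to $a_1^\star$ benefits player $1$ by $k/n$ (and symmetrically for player $2$), while every $i \geq 3$ has zero deviation benefit since the number of augmented players remains $\geq 2$ after deviating. Hence the profile is an ACCE$^*$ of welfare $2k/n < k$, so $\rpoa_{\cG'} \cdot \opt_{\cG'} < k$. (ii) If no such $\vec{a}$ exists, then $\vec{a}^\star = \vec{b}$ and I compute the ACCE$^*$ defect $\sum_i (u_i'(\vec{a}) - u_i'(b_i,\vec{a}_{-i}))$ for a pure $\vec{a}$ as a function of the number $m = m(\vec{a})$ of coordinates equal to $b_i$: the defect equals $\sw_{\cG}(\vec{a}) - k$ for $m = 0$, $-(n-1)k/n$ for $m = 1$, and $(m-n)k/n$ for $m \geq 2$. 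Since in case (ii) $\sw_{\cG}(\vec{a}) \leq k$, the defect is $\leq 0$ everywhere, with equality only at $\vec{b}$ and at any $\vec{a} \in \cA$ with $\sw_{\cG}(\vec{a}) = k$. By linearity, any ACCE$^*$ $\vmu$ must be supported on zero-defect profiles, all of which have welfare exactly $k$; thus $\rpoa_{\cG'} \cdot \opt_{\cG'} = k$.

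Combining cases (i) and (ii), a polynomial-time algorithm for $\rpoa_{\cG'} \cdot \opt_{\cG'}$---noting that $\cG'$ inherits both the succinct representation and the polynomial expectation property from $\cG$---would distinguish whether some $\vec{a} \in \cA$ attains $\sw(\vec{a}) > k$, yielding the desired $\NP$-hardness. The main delicate point I anticipate is handling the tie-breaking between $\vec{b}$ and action profiles in $\cA$ that realize the optimum: the paper's convention of picking $\vec{a}^\star$ yielding the highest worst-case ACCE$^*$ welfare is precisely what is needed, since selecting $\vec{a}^\star \in \cA$ with $\sw_{\cG}(\vec{a}^\star) = k$ would admit the same ``two augmented players'' ACCE$^*$ of welfare $2k/n$ as in case (i) and collapse the separation between the two cases.
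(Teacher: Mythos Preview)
Your proposal is correct and follows essentially the same route as the paper: the same modified Barman--Ligett construction with $\epsilon = k$, the same ``two augmented players'' ACCE$^*$ witness in case (i), the same tie-breaking convention, and your case (ii) is in fact more explicit than the paper's one-line assertion that every ACCE$^*$ has welfare at least $k$. One minor wording slip: in case (i) the deviation to $a_1^\star$ \emph{hurts} player $1$ by $k/n$ (current utility $k/n$, post-deviation utility $0$), which is exactly why the ACCE$^*$ inequality $\sum_i(u_i'(\vec{a})-u_i'(a_i^\star,\vec{a}_{-i}))\geq 0$ holds---your conclusion is right but the sign of the word ``benefits'' is reversed.
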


The key question that remains is whether~\Cref{theorem:Barman} can be extended when the social welfare is compared over the entire set of ACCE$^*$. If so, along with our results in~\Cref{sec:improved-efficiency}, that would provide an insightful characterization of any possible hard instances.

\subsection{Beyond smoothness}
\label{appendix:average}

In this subsection, we discuss an example in which the welfare predicted by the smoothness framework is far from the welfare obtained by learning algorithms such as $\ogd$. We then suggest a natural direction that enables obtaining sharper predictions; we have already seen refined guarantees beyond the standard smoothness framework in \Cref{remark:local,remark:primaldual}, but here we explore a different direction. 

Our example is again based on Shapley's game, a bimatrix game in normal form defined with the matrices
\begin{equation}
    \label{eq:og}
    \mat{A} = 
    \begin{bmatrix}
        0 & 0 & 1 \\
        1 & 0 & 0 \\
        0 & 1 & 0
    \end{bmatrix},
    \mat{B} = 
    \begin{bmatrix}
        0 & 1 & 0 \\
        0 & 0 & 1 \\
        1 & 0 & 0
    \end{bmatrix}.
\end{equation}

\begin{claim}
    Let $\cG$ be the bimatrix game defined in~\eqref{eq:og}. Then, $\rpoa_\cG = 0$.
\end{claim}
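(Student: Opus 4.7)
The plan is to work directly with the LP~\eqref{eq:rpoa-LP}: since $(\rho, z) = (0, 0)$ is always feasible, it suffices to show that no $\rho > 0$ is feasible, no matter which welfare-maximizing pure action profile $\vec{a}^\star$ is chosen. Two structural facts will drive the argument. First, $\mat{A} + \mat{B} = \mat{J} - \mat{I}$ (with $\mat{J}$ all-ones and $\mat{I}$ the identity), so $\sw(\vec{a}) = 1 = \opt_\cG$ whenever $a_1 \neq a_2$ and $\sw(\vec{a}) = 0$ on the diagonal; the welfare-maximizing pure profiles are thus precisely the six off-diagonal ones. Second, both $\mat{A}$ and $\mat{B}$ are permutation matrices.

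Fix any welfare-maximizing $\vec{a}^\star = (a_1^\star, a_2^\star)$. I produce a diagonal ``killing'' deviation $\vec{a} = (k, k)$ whose associated constraint in~\eqref{eq:rpoa-LP} collapses to $0 \geq \rho$. For any diagonal $\vec{a}$ we have $\sw(\vec{a}) = 0$, and
\[
u_1(a_1^\star, k) + u_2(k, a_2^\star) \;=\; \mat{A}[a_1^\star, k] + \mat{B}[k, a_2^\star].
\]
Because $\mat{A}$ (resp.\ $\mat{B}$) is a permutation matrix, $\mat{A}[a_1^\star, k] \neq 0$ for exactly one index $k$, and $\mat{B}[k, a_2^\star] \neq 0$ for exactly one index $k$. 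Hence among the three candidates $k \in \{1,2,3\}$ at most two are excluded, so at least one witnesses simultaneously $\mat{A}[a_1^\star, k] = 0$ and $\mat{B}[k, a_2^\star] = 0$, making the displayed sum vanish. Plugging into~\eqref{eq:rpoa-LP} gives $z \cdot 0 \geq \rho \cdot 1 - 0$, forcing $\rho \leq 0$; combined with the trivial feasibility of $(\rho, z) = (0,0)$, this yields $\rpoa_\cG = 0$.

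There is no serious obstacle beyond this combinatorial bookkeeping; if one prefers to avoid the pigeonhole invocation, a direct case analysis of the six welfare-maximizers suffices (for instance, $\vec{a}^\star \in \{(1,2), (2,1)\}$ admits the killer $k = 2$; $\vec{a}^\star \in \{(1,3), (3,1)\}$ admits $k = 1$; and $\vec{a}^\star \in \{(2,3), (3,2)\}$ admits $k = 3$), or alternatively the cyclic symmetry of Shapley's game reduces the verification to a single representative case.
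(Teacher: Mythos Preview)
Your proof is correct and follows essentially the same approach as the paper: both arguments exhibit, for any welfare-maximizing profile $\vec{a}^\star$, a diagonal action profile $\vec{a}=(k,k)$ at which $\sw(\vec{a})=0$ and the deviation sum $u_1(a_1^\star,k)+u_2(k,a_2^\star)$ vanishes, forcing the smoothness constraint to collapse. The paper picks the single representative $\vec{a}^\star=(1,2)$ with killer $k=2$ and appeals to symmetry, whereas you work directly with the LP~\eqref{eq:rpoa-LP} and give the clean pigeonhole justification (plus the full six-case check) that the killer $k$ exists for every off-diagonal $\vec{a}^\star$; the extra explicitness is harmless and arguably tidier.
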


\begin{proof}
    Let $(\vx_1^\star, \vx_2^\star) \defeq ((1, 0, 0),(0, 1, 0))$ be a welfare-maximizing joint action. By symmetry, the following argument will apply to any welfare-maximizing joint action. Consider $(\vx_1, \vx_2) = ((0, 1, 0),(0, 1, 0))$. Then, we have that $u_1(\vx_1^\star, \vx_2) + u_2(\vx_2^\star, \vx_1) = (\vx_1^\star)^\top \mat{A} \vx_2 + \vx_1^\top \mat{B} \vx_2^\star = 0 + 0$. As a result, the smoothness constraint corresponding to $(\vx_1, \vx_2)$ necessitates that $0 \geq \lambda$, which is incompatible with the constraint that $\lambda > 0$ (\Cref{def:smoothness-games}). 
\end{proof}

In spite of the fact that $\rpoa_\cG = 0$, we see in \Cref{fig:Shapley} (left) that $\ogd$ approaches close to the optimal social welfare $1$. This can be explained by the fact that certain joint actions---in particular, those in the diagonal---are played with small probability under $\ogd$ (right-side of \Cref{fig:Shapley}). This motivates introducing a more refined notion of smoothness in which the smoothness constraints are only enforced on the joint actions that are visited with a non-negligible probability.

\begin{figure}[!ht]
    \centering
    \includegraphics[scale=0.65]{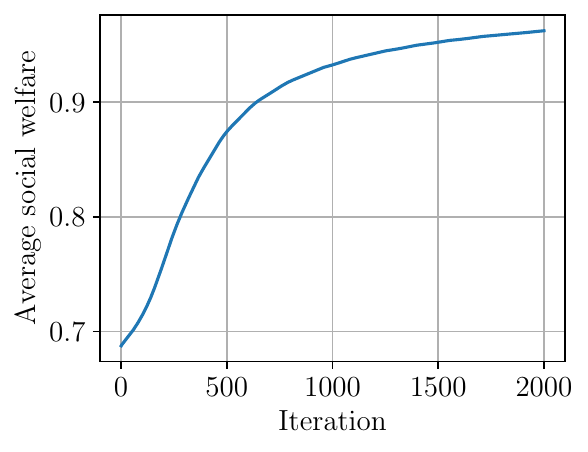}
    \includegraphics[scale=0.65]{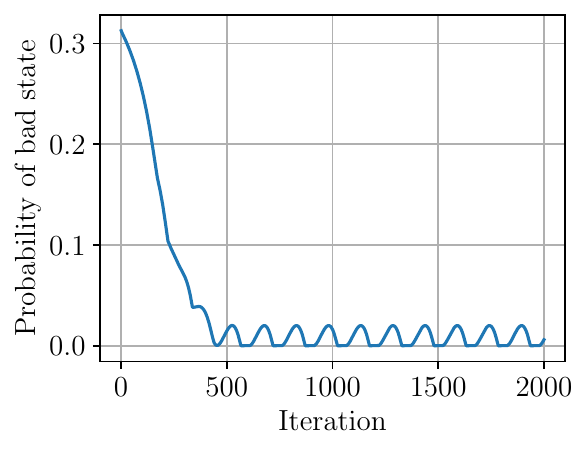}
    \caption{The behavior of $\ogd$ in the bimatix game~\eqref{eq:og} with $\eta \defeq 0.01$. On the left, we plot the average social welfare of the dynamics. On the right, we plot the probability of playing a joint action in the diagonal.}
    \label{fig:Shapley}
\end{figure}

One class of games in which one can easily make such refinements concerns games solvable under iterated elimination of strictly dominated actions. As a concrete example, let 
\begin{equation}
    \label{eq:dom}
    \mat{A} = 
    \begin{bmatrix}
        0 & 0 \\
        1 & 1
    \end{bmatrix},
    \mat{B} = 
    \begin{bmatrix}
        1 & 0 \\
        0 & 1
    \end{bmatrix}.
\end{equation}

\begin{observation}
    Let $\cG$ be the game defined in~\eqref{eq:dom}. Then, $\rpoa_\cG = \frac{1}{2}$. Furthermore, if $\cG'$ is the game\footnote{Under elimination of strictly dominated actions, this is always uniquely defined.} resulting from iterative removal of strictly dominated actions, then $\rpoa_{\cG'} = 1$.
\end{observation}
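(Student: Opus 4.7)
The plan is a direct computation from the smoothness LP~\eqref{eq:rpoa-LP}. Enumerating the pure-action payoffs in $\cG$ gives $\sw(1,1)=1$, $\sw(1,2)=0$, $\sw(2,1)=1$, and $\sw(2,2)=2$, so the unique welfare maximizer is $\vec{a}^\star=(2,2)$ with $\opt_\cG=2$, and any witness $\vxstar$ in \Cref{def:smoothness-games} must be the point mass on $(2,2)$. Because both sides of~\eqref{eq:smoothness-game} are multilinear in $\vx$, it suffices to verify the inequality at the four pure profiles $\vec{a}\in\{1,2\}^2$ (coordinate-wise, the difference of the two sides is affine in each $\vx_i$, so its minimum over the product of simplices is attained at a vertex).

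For the upper bound $\rpoa_\cG\leq \tfrac{1}{2}$ I would single out $\vec{a}=(1,1)$: there $u_1(\vxstar_1,1)+u_2(\vxstar_2,1)=\mat{A}[2,1]+\mat{B}[1,2]=1+0=1$ while $\sw(1,1)=1$, so~\eqref{eq:smoothness-game} forces $1\geq 2\lambda-\mu$, i.e.\ $\lambda\leq (1+\mu)/2$, and hence $\lambda/(1+\mu)\leq 1/2$. For the matching lower bound I would check by hand that $(\lambda,\mu)=(1/2,0)$ satisfies the smoothness inequality at each of the remaining three pure profiles; each reduces to a trivial arithmetic comparison (the left-hand side is either $1$ or $2$ and the right-hand side is $1$). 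Combining the two bounds yields $\rpoa_\cG=1/2$.

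For $\cG'$, I would first confirm the iterated elimination: row $2$ strictly dominates row $1$ for Player $1$ because $\mat{A}[2,\cdot]=(1,1)$ while $\mat{A}[1,\cdot]=(0,0)$; after removing row $1$, Player $2$ faces only row $2$ and column $2$ strictly dominates column $1$ since $\mat{B}[2,2]=1>0=\mat{B}[2,1]$. The residual game has the single outcome $(2,2)$, which is also its unique welfare maximizer, so~\eqref{eq:smoothness-game} is trivially satisfied with $(\lambda,\mu)=(1,0)$, giving $\rpoa_{\cG'}=1$. There is no real obstacle here—this is essentially an exercise in reading off the LP; the only conceptual takeaway is that the sole binding constraint in $\cG$ comes precisely from the profile made of both players' strictly dominated actions, which is exactly why eliminating them lifts $\rpoa$ from $1/2$ to $1$.
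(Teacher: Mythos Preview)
Your proof is correct. The paper states this observation without proof, so there is nothing to compare against; your direct computation from the smoothness LP is exactly the intended verification. One minor quibble: at the optimizer $(\lambda,\mu)=(1/2,0)$ the profile $(1,2)$ is also tight (left-hand side $1$, right-hand side $1$), so ``sole binding constraint'' is a slight overstatement---but your upper-bound argument only needs the $(1,1)$ constraint, which indeed yields $\lambda\leq(1+\mu)/2$ directly, so the proof stands.
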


\end{document}